  \let\maybesf\sffamily
  \let\maybesf\rmfamily
\numberwithin{equation}{section}
\newcounter{and}
\newcommand{\institute}[1]{\newcommand{\@institute}{#1}}
\newcommand{\email}[1]{\href{mailto:#1}{#1}}
\renewcommand{\maketitle}{
  { 
    \raggedright
    \LARGE
    \noindent
    \bfseries
    \maybesf
    \@title
    \par
  }

  \vspace{1.5\baselineskip}

  { 
    \raggedright
    \renewcommand{\and}{
      \unskip, \ignorespaces
    }
    \noindent\ignorespaces\@author\par
  }

  \vspace{0.5\baselineskip}

  { 
    \raggedright
    \small
    \renewcommand{\and}{
      \par\stepcounter{and}
      \hangindent .8em\noindent
      \hbox to .8em{\smash{$^{\arabic{and}}$}}\ignorespaces
    }
    \ifnum\value{and}=0
      \noindent
    \else
      \hangindent .8em\noindent
      \hbox to .8em{\smash{$^{\arabic{and}}$}}\ignorespaces
    \fi
    \ignorespaces\@institute\par
  }

  \vspace{1.0\baselineskip}
 
  { 
    \raggedright
    \noindent\ignorespaces\@date\par
  }

}
\renewenvironment{abstract}{
  \addvspace{1.5\baselineskip}
  \topsep=0pt\partopsep=0pt
  \trivlist\item[\hskip\labelsep\bfseries\maybesf Abstract.]
}{}
\newenvironment{acknowledgments}{
  \addvspace{1.5\baselineskip}
  \topsep=0pt\partopsep=0pt
  \trivlist\item[\hskip\labelsep\bfseries\maybesf Acknowledgments.]
}{}
\def\bbR{\mathbb{R}}
\def\bbC{\mathbb{C}}
\def\bbN{\mathbb{N}}
\def\bbM{\mathbb{M}}
\def\M{\mathcal{M}}
\def\N{\mathcal{N}}
\def\A{\mathcal{A}}
\def\F{\mathcal{F}}
\def\H{\mathcal{H}}
\def\D{\mathcal{D}}
\def\G{\mathcal{G}}
\def\R{\mathcal{R}}
\def\OO{\mathcal{O}}
\def\Imm{\mathrm{Im}\,}
\def\Ree{\mathrm{Re}\,}
\def\pp{\mathrm{pp}}
\def\sd{\mathrm{sd}}
\def\ms{\mathrm{ms}}
\def\supp{\mathrm{supp}}
\def\WF{\mathrm{WF}}
\def\loc{\mathrm{loc}}
\def\reg{\mathrm{reg}}
\def\balpha{{\boldsymbol{\alpha}}}
\def\bbeta{{\boldsymbol{\beta}}}
\def\1{\mathbbm{1}}
\def\beq{\begin{equation}}
\def\eeq{\end{equation}}
\newcommand{\wick}[1]{:\!{#1}\!:}
\theoremstyle{plain}
\newtheorem{theo}{Theorem}[section]
\newtheorem{lem}[theo]{Lemma}
\newtheorem{propo}[theo]{Proposition}
\theoremstyle{definition}
\newtheorem{defi}[theo]{Definition}
\newtheorem{rem}[theo]{Remark}
\definecolor{hypercolor}{rgb}{0.1,0.2,0.6}
\begin{document}

\title{An analytic regularisation scheme on curved spacetimes with applications to cosmological spacetimes}

\author{
  Antoine G\'er\'e$^{1,a}$\and
  Thomas-Paul Hack$^{1,a}$\and
  Nicola Pinamonti$^{1,2,c}$
}

\institute{
  $^1$Dipartimento di Matematica, Universit\`a  di Genova, Via Dodecaneso 35,
  I-16146 Genova, Italy.\\
  $^2$Istituto Nazionale di Fisica Nucleare, Sezione di Genova, Via Dodecaneso, 33 I-16146 Genova, Italy.\\
  E-Mail: \email{$^a$gere@dima.unige.it}, \email{$^b$hack@dima.unige.it}, \email{$^c$pinamont@dima.unige.it}
}

\date{\today}

\maketitle


\begin{abstract}We develop a renormalisation scheme for time--ordered products in interacting field theories on curved spacetimes which consists of an analytic regularisation of Feynman amplitudes and a minimal subtraction of the resulting pole parts. This scheme is directly applicable to spacetimes with Lorentzian signature, manifestly generally covariant, invariant under any spacetime isometries present and constructed to all orders in perturbation theory. Moreover, the scheme captures correctly the non--geometric state--dependent contribution of Feynman amplitudes and it is well--suited for practical computations. To illustrate this last point, we compute explicit examples on a generic curved spacetime, and demonstrate how momentum space computations in cosmological spacetimes can be performed in our scheme. In this work, we discuss only scalar fields in four spacetime dimensions, but we argue that the renormalisation scheme can be directly generalised to other spacetime dimensions and field theories with higher spin, as well as to theories with local gauge invariance. 
\end{abstract}


\tableofcontents


\section{Introduction}

In the perturbative construction of models in quantum field theory on curved spacetimes one encounters time--ordered products of field polynomials which are a priori ill--defined due to the appearance of UV divergences. Several renormalisation schemes which deal with these divergences in the presence of non--trivial spacetime curvature have been discussed in the literature, such as for example local momentum space methods \cite{Bunch:1981er}, dimensional regularisation in combination with heat kernel techniques \cite{Luscher:1982wf, Toms:1982af}, differential renormalisation \cite{Comellas:1994da,Prange:1997iy}, zeta--function renormalisation \cite{Bilal:2013iva}, generic Epstein--Glaser renormalisation \cite{Brunetti-Fredenhagen:2000, Hollands:2001nf, Hollands:2004yh}, and, on cosmological spacetimes, Mellin--Barnes techniques \cite{Hollands:2010pr} and dimensional regularisation with respect to the comoving spatial coordinates \cite{Baacke:2010bm}. 

Some of these schemes, such as heat kernel approaches, zeta--function techniques and local momentum space methods  are based on constructions which are initially only well--defined for spacetimes with Euclidean signature. These constructions can be partly transported to general Lorentzian spacetimes by local Wick--rotation techniques developed in \cite{Moretti:1999fb}. However, whereas the Feynman propagator is essentially unique on Euclidean spacetimes, this is not the case on Lorentzian spacetimes where this propagator has a non--unique contribution depending on the quantum state of the field model. Consequently, the Euclidean renormalisation techniques, and the numerous practical computations already performed by means of these methods -- see for example the monographs  \cite{Birrell:1982ix, Buchbinder:1992rb, Parker:2009uva} -- are able to capture the correct divergent and geometric parts of Feynman amplitudes, but a priori not their non--geometric and state--dependent contributions.

A renormalisation scheme which is directly applicable to curved spacetimes with Lorentzian signature has been developed in \cite{Brunetti-Fredenhagen:2000, Hollands:2001nf, Hollands:2004yh} in the framework of algebraic quantum field theory. This scheme implements ideas of \cite{EG} and \cite{Steinmann} and is based on microlocal techniques which replace the momentum space methods available in Minkowski spacetime and have been introduced to quantum field theory in curved spacetime by the seminal work \cite{Radzikowski}. However, although the generalised Epstein--Glaser scheme developed in \cite{Brunetti-Fredenhagen:2000, Hollands:2001nf, Hollands:2004yh} is conceptually clear and mathematically rigorous, it is not easily applicable in practical computations. On the other hand, Lorentzian schemes which are better suited for this purpose have not been developed to all orders in perturbation theory \cite{Comellas:1994da,Prange:1997iy}, are tailored to specific spacetimes \cite{Hollands:2010pr} or are not manifestly covariant \cite{Baacke:2010bm}.

Motivated by this, we develop a renormalisation scheme for time--ordered products in interacting field theories on curved spacetimes which is directly applicable to spacetimes with Lorentzian signature, manifestly generally covariant, invariant under any spacetime isometries present and constructed to all orders in perturbation theory. Moreover, the scheme captures correctly the non--geometric state--dependent contribution of Feynman amplitudes and it is well--suited for practical computations. In this work, we discuss only scalar fields in four spacetime dimensions, but we shall argue that the renormalisation scheme can be directly generalised to other spacetime dimensions and field theories with higher spin, as well as to theories with local gauge invariance. Our analysis will take place in the framework of perturbative algebraic quantum field theory (pAQFT) \cite{Brunetti-Fredenhagen:2000, Hollands:2001nf, Hollands:2004yh, BDF,FredenhagenRejzner,FredenhagenRejzner2} which is a conceptually clear framework in which fundamental physical properties of perturbative interacting models on curved spacetimes can be discussed. However, we will make an effort to review how the formulation of pAQFT is related to the more standard formulation of perturbative QFT.

The renormalisation scheme we propose is inspired by the works \cite{Keller,dfkr} which deal with perturbative QFT in Minkowski spacetime. In these works, the authors introduce an analytic regularisation of the position--space Feynman propagator in Minkowski spacetime which is similar to the one discussed in \cite{BolliniGiambiagi}. Based on this, time--ordered products are constructed recursively by an Epstein--Glaser type procedure and it is shown that this recursion can be resolved by a position--space forest formula similar to the one of Zimmermann used in BPHZ renormalisation in momentum space. 

In order to extend the scheme proposed in \cite{dfkr} to curved spacetimes, and motivated by \cite{BolliniGiambiagi} and by the form of Feynman propagators on curved spacetimes, we introduce an analytic regularisation $\Delta^{(\alpha)}_F$ of a Feynman propagator $\Delta_F$ by
\[
\Delta^{(\alpha)}_F := \lim_{\epsilon\to 0^+}\frac{1}{8\pi^2}\left(\frac{u}{(\sigma+i\epsilon )^{1+\alpha}} + \frac{v}{\alpha} \left(1-\frac{1}{(\sigma+i\epsilon )^{\alpha}}\right)\right)+w\,,
\]
where $u$, $v$ and $w$ are the so--called Hadamard coefficients and $\sigma$ is $1/2$ times the squared geodesic distance. This analytic regularisation, namely the construction of certain distributions by means of powers of the geodesic distance, is reminiscent of the use of Riesz distributions to define advanced and retarded Greens functions on Minkowski spacetime. A careful discussion of Riesz distributions and their extension to the curved case is presented in \cite{BGP}. The regularisation we use is loosely related to dimensional regularisation because the leading singularity of a Feynman propagator in $N$ spacetime dimensions is proportional to $(\sigma+i\epsilon)^{1-N/2}$, see e.g. \cite[Appendix A]{Moretti}.  A regularisation of the Feynman propagator similar to the one above has recently been discussed in \cite{Dang}. In this work, we shall combine the analytic regularisation of the Feynman propagator with the minimal subtraction scheme encoded in a forest formula of the kind discussed in \cite{Hollands:2010pr, Keller, dfkr} in order to obtain a time--ordered product which satisfies the causal factorisation property, i.e. a product which is indeed ``time--ordered''. In order to prove that the analytically regularised amplitudes constructed out of $\Delta^{(\alpha)}_F$ have the meromorphic structure necessary for the application of the forest formula and in order to show how the corresponding Laurent series can be computed explicitly, we shall make use of generalised Euler operators.
The practical feasibility of the renormalisation scheme shall be demonstrated by computing a few examples.

A large part of our analysis is devoted to demonstrating that the scheme we propose is consistent and well-defined on general curved spacetimes. Readers interested in directly applying our scheme may use Proposition \ref{pr:expose-poles} with \eqref{eq:euler-operator} and \eqref{eq:euler-operator2} in order to compute the Laurent-series of the Feynman amplitudes \eqref{eq:tau-gamma_reg} regularised by means of \eqref{eq:anal-feynman}. The correct order of pole subtractions is encoded in the forest formula \eqref{eq:forset-formula} which is explained prior to its display. A number of examples is discussed in Section \ref{sec_fishsunset}.

In quantum field theory on cosmological spacetimes, i.e. Friedmann--Lema\^itre--Robertson--Walker (FLRW) spacetimes, one usually exploits the high symmetry of these spacetimes in order to evaluate analytical expressions in spatial Fourier space. However, the renormalisation scheme discussed in this work operates on quantities such as the geodesic distance and the Hadamard coefficients, whose explicit position space and momentum space forms are not even explicitly known in FLRW spacetimes. Notwithstanding, we shall devote a large part of this work in order to develop simple methods to evaluate quantities renormalised in our scheme on FLRW spacetimes in momentum space, and we shall illustrate these methods by explicit examples.

The paper is organised as follows. In the next section we present a brief introduction to pAQFT and its connection with the more standard formulation of perturbative QFT. Afterwards we introduce the renormalisation scheme, demonstrate that it is well--defined and analyse its properties in Section 3, where we also illustrate the scheme by computing examples. In the fourth section we demonstrate the applicability of the renormalisation scheme to momentum space computations on cosmological spacetimes. Finally, a few conclusions are drawn in the last section of this paper. Conventions regarding the various propagators of a scalar field theory and a few technical computations are collected in the appendix.

\bigskip


\section{Introduction to pAQFT}

\subsection{Basic definitions}
\label{sec:pAQFT.basic}

Throughout this work, we shall consider four-dimensional globally hyperbolic spactimes $(\M,g)$, where $g$ is a Lorentzian metric whose signature is $(-,+,+,+)$ and we use the sign conventions of \cite{Wald:1984} regarding the definitions of curvature tensors.

We recall the perturbative construction of an interacting quantum field theory on a generic curved spacetime in  the framework of {\bf perturbative algebraic quantum field theory (pAQFT)} recently developed in  \cite{BDF,FredenhagenRejzner,FredenhagenRejzner2} based on earlier work. 
In this construction, the basic object of the theory is an algebra of observables which is realised as a suitable set of functionals on field configurations equipped with a suitable product.
In order to implement the perturbative constructions following the ideas of Bogoliubov and others, the {\bf field configurations} $\phi$ are assumed to be off--shell. Namely, $\phi\in\mathcal{E}(\M)=C^\infty(\M)$ is a smooth function on the globally hyperbolic spacetime $(\M,g)$ and observables are modelled by functionals $F:\mathcal{E}(\M)\to \bbC$ satisfying further properties. In particular all the functional derivatives exist as distributions of compact support, where we recall that the functional derivative of a functional $F$ is defined for all $\psi_1,\ldots,\psi_n\in \D(\M)=C_0^\infty(\M)$ as
\[ 
F^{(n)}(\phi)(\psi_1\otimes \dots \otimes \psi_n) :=  \left.\frac{d^n}{d\lambda_1    \dots d\lambda_n } F(\phi + \lambda_1 \psi_1 +\dots \lambda_n \psi_n)\right|_{\lambda_1 = \dots=\lambda_n=0} \in \mathcal{E}'(\M^n).
\]
The set of these functionals is indicated by $\mathcal{F}$. 
Further regularity properties are assumed for the construction of an algebraic product.  In particular, the set of {\bf local functionals} $\mathcal{F}_\loc\subset \mathcal{F}$ is formed by the functionals whose $n$--th order functional derivatives are supported on the total diagonal $d_n = \{(x,\dots, x) , x\in \M\}\subset \M^n $. Furthermore, their singular directions are required to be orthogonal to $d_n$, namely $\WF(F^{(n)})\subset \{ (x,k) \in T^*\M^n, x\in d_n, k \perp  T d_n \}$ where $\WF$ denotes the wave front set. A generic local functional is a polynomial $P(\phi)(x)$ in $\phi$ and its derivatives integrated against a smooth and compactly supported tensor. The functionals whose functional derivatives are compactly supported smooth functions are instead called {\bf regular functionals} and indicated by $\mathcal{F}_\reg$.

The quantum theory is specified once a product among elements of $\mathcal{F}_\loc$ and a $*-$operation (an involution on $\mathcal{F}$) are given. For the case of free (linear) theories the product can be explicitly given by a {\bf $\star$--product }
\beq\label{def:starH}
F \star_H  G =  \sum_n \frac{\hbar^n}{n!}\left\langle F^{(n)}, H_+^{\otimes n} G^{(n)} \right\rangle\,,
\eeq
where $H_+$ is a Hadamard distribution of the linear theory we are going to quantize, namely a distribution whose antisymmetric part is proportional to the commutator function $\Delta=\Delta_R-\Delta_A$ and whose wave front set satisfies the Hadamard condition, see e.g. \cite{Radzikowski, bfk:1996} for further details and Section \ref{sec_propagators} for our propagator conventions. Owing to the properties of $H_+$, iterated $\star_H$--products of local functionals $F_1 \star_H \dots \star_H F_n$ are well defined and $\star_H$ is associative.

In a normal neighbourhood of $(\M,g)$, a Hadamard distribution $H_+$ is of the form
\begin{equation}\label{eq:hadamard}
H_+(x,y)=\frac{1}{8\pi^2}\left(\frac{u(x,y)}{\sigma_+(x,y)}+v(x,y)\log(M^2 \sigma_+(x,y))\right)+w(x,y),
\end{equation}
where $\sigma_+(x,y)=\sigma(x,y)+i\epsilon (t(x)-t(y))+\epsilon^2/2$ with $t$ a time-function, i.e. a global time-coordinate, $2\sigma(x,y)$ is the squared geodesic distance between $x$ and $y$ and $M$ is an arbitrary mass scale. The Hadamard coefficients $u$ and $v$ are purely geometric and thus state--independent, whereas $w$ is smooth and state--dependent if $H_+(x,y)$ is the two--point function of a quantum state.

For the perturbative construction of interacting models we further need a {\bf time--ordered product} $\cdot_{T_H}$ on local functionals. This product is characterised by {\bf symmetry} and the {\bf causal factorisation property}, which requires that 
\begin{equation}\label{eq:causal-factorisation}
F\cdot_{T_H} G =  F\star_H G\quad    \text{if}\quad F\gtrsim G\,,
\end{equation}
where $F \gtrsim G$ indicates that $F$ is later than $G$, i.e. there exists a Cauchy surface $\Sigma$ of $(\M,g)$ such that $\supp(F) \subset J^+(\Sigma)$ and $\supp(G) \subset J^-(\Sigma)$. However, the causal factorisation fixes uniquely only the time--ordered products among regular functionals, in which case
\beq\label{def:TH}
F \cdot_{T_H}  G =  \sum_n \frac{\hbar^n}{n!}\left\langle F^{(n)}, H_F^{\otimes n} G^{(n)} \right\rangle,
\eeq
where $H_F$ is the time--ordered (Feynman) version of $H_+$, i.e. $H_F=H_++i\Delta_A$ with $\Delta_A$ the advanced propagator of the free theory, cf. Section \ref{sec_propagators}. For local functionals, \eqref{def:TH} is only correct up to the need to employ a non--unique renormalisation procedure, cf. Section \ref{sec:analytic_general}. This renormalisation can be performed in such a way that iterated $\cdot_{T_H}$--products of local functionals $F_1 \cdot_{T_H} \dots \cdot_{T_H} F_n$ are well defined with $\cdot_{T_H}$ being associative. Moreover, $\star_H$--products of such time--ordered products of local functionals are well--defined as well, cf. \cite{Hollands:2001b, BDF,FredenhagenRejzner,FredenhagenRejzner2}. Consequently, we may consider the algebra $\A_0$ $\star_H$--generated by iterated $\cdot_{T_H}$--products of local functionals. This algebra contains all observables of the free theory which are relevant for perturbation theory.

In the perturbative construction of interacting models, namely when the free action is perturbed by a non--linear local functional $V$, the observables associated with the interacting theory are represented on the free algebra $\A_0$ by means of the {\bf Bogoliubov formula}. This is given in terms of the local $S$--matrix, i.e., the time--ordered exponential
\beq\label{def:Smatrix}
S(V)=\sum^\infty_{n=0}\frac{i^n}{n!\hbar^n}\underbrace{V\cdot_{T_H} \cdots \cdot_{T_H} V}_{n \text{ times}}\,,
\eeq
where $V$ is the interacting Lagrangean.  In particular, for every interacting observable $F$ the corresponding representation on the free algebra $\A_0$ is given by
\begin{equation}\label{eq:bogoliubov}
\mathcal{R}_V(F) = S^{-1}(V)\star_H \left(  S(V)\cdot_{T_H} F \right)\,, 
\end{equation}
where $S^{-1}(V)$ is the inverse of $S(V)$ with respect to the $\star_H$--product. The problem in using $\mathcal{R}_V(F)$ as generators of the algebra of interacting observables lies in the construction of the time--ordered product which a priori is an ill--defined operation.

This problem can be solved using ideas which go back to Epstein and Glaser, see e.g. \cite{Brunetti-Fredenhagen:2000}, by means of which the time--ordered product among local functionals is constructed recursively.
The time--ordered products can be expanded in terms of distributions smeared with compactly supported smooth functions which play the role of coupling constants (multiplied by a spacetime--cutoff). At each recursion step the causal factorisation property \eqref{eq:causal-factorisation} permits to construct the distributions defining the time--ordered product up to the total diagonal. 
The extension to the total diagonal can be performed extending the distributions previously obtained without altering the scaling degree towards the diagonal. In this procedure there is the freedom of adding finite local contributions supported on the total diagonal. This freedom is the well known renormalisation freedom. In addition to the properties already discussed, the renormalised time--ordered product is required to satisfy further physically reasonable conditions. We refer to \cite{Hollands:2001b,Hollands:2004yh} for details on these properties and the proof that they can be implemented in the recursive Epstein--Glaser construction.

In spite of the theoretical clarity of this construction, the Epstein--Glaser renormalisation is quite difficult to implement in practise. The aim of this paper is to discuss a renormalisation scheme which is suitable for practical computations.

\subsection{Relation to the standard formulation of perturbative QFT}
\label{sec:relationPAQFT}

In this subsection we outline the relation of the pAQFT framework to the standard formulation of perturbative QFT. As an example, we demonstrate how the two-point (Wightman) function of the interacting field in $\phi^4$ theory on a four--dimensional curved spacetime is computed, where we assume that the quantum state of the interacting field is just the state of free field modified by the interacting dynamics. We further assume that the free field is in a pure and Gaussian Hadamard state. 

Let us recall the relevant formulae in perturbative algebraic quantum field theory where we shall always try to write expressions both in the pAQFT and in the more standard notation, indicating the latter by a $\doteq$. Given a local action $V$, such as $V=\int_\M d^4x \sqrt{-g} \; \frac{\lambda}{4} \phi(x)^4$ in $\phi^4$--theory, the corresponding $S$-matrix, which is loosely speaking the ``$S$-matrix in the interaction picture'', is defined by \eqref{def:Smatrix} and corresponds to $S(V)\doteq T e^{\frac{i}{\hbar} V}$.

The interacting field, i.e. ``the field in the interaction picture'' $\phi_I(x)$, is defined by the Bogoliubov formula
\begin{equation}\label{eq_bogoliubov}
\phi_I(x)=\mathcal{R}_V(\phi(x))=S(V)^{-1}\star_H\left(S(V)\cdot_{T_H} \phi(x)\right)\doteq T(e^{\frac{i}{\hbar} V})^{-1} T(e^{\frac{i}{\hbar} V}\phi(x))\,
\end{equation}
similarly to \eqref{eq:bogoliubov}, where by unitarity $S(V)^{-1}=S(V)^*$. Interacting versions of more complicated expressions in the field, e.g. polynomials at different and coinciding points, are defined analogously. A thorough discussion of the relation between the Bogoliubov formula and the more common formulation of observables in the interaction picture may be found e.g. in \cite[Section 3.1]{Lindner:2013ila}. We only remark that, in the Minkowski vacuum state $\Omega_0$, the expectation value of the Bogoliubov formula can be shown to read (also for more general expressions in the field)

$$\langle \phi_I(x)\rangle_{\Omega_0} \doteq \left\langle T(e^{\frac{i}{\hbar} V})^{-1}T(e^{\frac{i}{\hbar} V}\phi(x))\right\rangle_{\Omega_0}=\frac{\left\langle T(e^{\frac{i}{\hbar} V}\phi(x))\right\rangle_{\Omega_0}}{\left\langle T(e^{\frac{i}{\hbar} V}) \right\rangle_{\Omega_0}}\,,$$
which is the theorem of Gell--Mann and Low, see \cite{Duetsch:1996eh, Duetsch:2000nh} for details.

In the algebraic formulation one usually cuts off the interaction in order to avoid infrared problems by replacing $\lambda\to \lambda f(x)$ with a compactly supported smooth function $f$ and considers the adiabatic limit $f\to 1$ in the end when computing expectation values. As our aim is to compute expectation values in this section, we shall write the results in the adiabatic limit keeping in mind that proving the absence of infrared problems, i.e. the convergence of the spacetime integrals, is non--trivial and may depend on the state of the free field chosen. Note that the so-called ``in--in--formalism'' often used in perturbative QFT on cosmological spacetimes corresponds to considering a cutoff function $f$ of the form $f(t,\vec{x}) = \Theta(t-t_0)$, i.e. $f$ is a step function in time and the parameter $t_0$ corresponds to the time where the interaction is switched on.

Our choice for the quantum state $\Omega$ of the interacting field implies that e.g. the interacting two-point function 
$$\langle \phi_I(x)\star_H\phi_I(y)\rangle_\Omega\doteq \langle \phi_I(x)\phi_I(y)\rangle_\Omega$$
is computed by writing $\phi_I$ in terms of the free field $\phi$ and computing the expectation value of the resulting observable of the free field in the chosen pure,  
Gaussian, Hadamard state of the free field which we may thus denote by the same symbol $\Omega$. 
The interacting vacuum state in Minkowski spacetime is of this form, whereas interacting thermal states in flat spacetime do not belong to this class, as they roughly speaking require to take into account both the change of dynamics and the change of spectral properties induced by $V$ \cite{Fredenhagen:2013cna}.

The functionals in the functional picture of pAQFT correspond to Wick--ordered quantities of the free field in the sense we shall explain now. To this avail we recall the form of the (quantum) $\star_H$--product and (time--ordered) $\cdot_{T_H}$--product in \eqref{def:starH} and \eqref{def:TH} which are defined by means of a Hadamard distribution $H_+$ and its Feynman--version $H_F = H_+ + i \Delta_A$. Up to renormalisation of the time--ordered product, these products computed for the special case of the functional $\phi^2(x)$ give
$$
\phi(x)^2\star_H \phi(y)^2=\phi(x)^2\phi(y)^2+4\hbar \phi(x)\phi(y) H_+(x,y)+2\hbar^2 H^2_+(x,y)\,,
$$
$$
\phi(x)^2\cdot_{T_H}\phi(y)^2=\phi(x)^2\phi(y)^2+4\hbar \phi(x)\phi(y) H_F(x,y)+2\hbar^2 H^2_F(x,y)\,.
$$

This example shows that the $\star_H$--product ( $\cdot_{T_H}$--product) implements the Wick theorem for normal--ordered (time--ordered) fields, and thus the previous formulae can be interpreted in more standard notation as 
$$\wick{\phi(x)^2}_H\,\wick{\phi(y)^2}_H=\wick{\phi(x)^2\phi(y)^2}_H+4\hbar \wick{\phi(x)\phi(y)}_H H_+(x,y)+2\hbar^2 H^2_+(x,y)\,,$$
$$T\left(\wick{\phi(x)^2}_H\,\wick{\phi(y)^2}_H\right)=\wick{\phi(x)^2\phi(y)^2}_H+4\hbar \wick{\phi(x)\phi(y)}_H H_F(x,y)+2\hbar^2 H^2_F(x,y)\,,$$
where
\begin{eqnarray}\label{eq_defalpha}
{\wick{A}}_H\,:=\alpha_{-H_+}(A):=e^{-\hbar\left\langle H_S(x,y),\frac{\delta}{\delta\phi(x)}\otimes \frac{\delta}{\delta\phi(y)}\right\rangle}A\,,\\
H_S(x,y) := \frac12 (H_+(x,y)+H_+(y,x))\,,\notag
\end{eqnarray}
e.g. 
$$\wick{\phi(x)^2}_H=\lim_{x\to y}\left(\phi(x)\phi(y)-H_+(x,y)\right)\,.$$ 
\eqref{eq_defalpha} is a convenient way to encode the combinatorics of normal ordering whereby the exponential series terminates for polynomial functionals such as $A=\phi(x)^2$.

The Wick theorem relates (time--ordered) products of Wick--ordered quantities to sums of Wick--ordered versions of contracted products, where the definition of ``Wick--ordering'' and ``contraction'' are directly related, they both depend on the Hadamard distribution  $H_+$ chosen. Thus, if we choose a particular $H_+$ to define $\star_H$ and $\cdot_{T_H}$ in pAQFT, we immediately fix the interpretation of all functionals in terms of expressions Wick--ordered with respect to $H_+$.

For the algebraic formulation the choice of $H_+$ is not important, indeed choosing a different $H'_+$ with the same properties, one has that $w:=H'_+-H_+=H'_F-H_F$, because the advanced propagator $\Delta_A$ is unique and thus universal. Moreover, $w$ is real, smooth and symmetric and
$$A\star_{H'}B=\alpha_w\left(\alpha_{-w}(A)\star_H\alpha_{-w}(B)\right),\qquad A\cdot_{T_{H'}}B=\alpha_w\left(\alpha_{-w}(A)\cdot_{T_H}\alpha_{-w}(B)\right),$$
with $\alpha$ defined as in \eqref{eq_defalpha} and thus the algebras associated to $\star_H$, $\cdot_{T_H}$ and  $\star_{H'}$, $\cdot_{T_{H'}}$ are isomorphic via $$\alpha_{w}:\A_0\to \A^\prime_0\,,$$ where we recall that $\A_0$ is algebra $\star_H$--generated by $\cdot_{T_H}$--products of local functionals.

Hence, one may choose a suitable $H_+$ according to ones needs. However, since $\alpha_{d}(A)\neq A$ for functionals containing multiple field powers, statements like ``the potential is $\phi^4$'' are ambiguous in pAQFT, and in fact also in the standard treatment of QFT. They become non--ambiguous only if one says ``the potential is $\wick{\phi^4}_H$, i.e.  $\phi^4$ Wick--ordered with respect to $H_+$''. In pAQFT the corresponding non--ambiguous statement would be ``the potential is the functional $\phi^4$ in the algebra $\A_0$ constructed by means of $H_+$''. If one then passes to the algebra $\A^\prime_0$ constructed by means of $H^\prime_+$, the potential picks up quadratic and c--number terms as we shall compute explicitly below. Alternatively, this ambiguity may be seen to correspond to the renormalisation ambiguity of tadpoles in Feynman diagrams.

Given a Gaussian and Hadamard free field state $\Omega$, a convenient choice or representation of the algebra is to take $H_+=\Delta_+$, where $\Delta_+(x,y)=\langle \phi(x)\star_\Delta\phi(y)\rangle_\Omega\doteq\langle \phi(x)\phi(y)\rangle_\Omega$ is the two-point function of the free field in the state $\Omega$. This corresponds to standard normal--ordering and consequently in this  representation the expectation values of all expressions which contain non-trivial powers of the field vanish, i.e.
\begin{equation}\label{eq_expval} \langle A\rangle_\Omega = A|_{\phi=0}\doteq \langle \wick{A}_\Delta\rangle_\Omega\,.\end{equation}
Keeping the state $\Omega$ fixed, but passing on to a representation of the algebra with arbitrary $H_+$, the expectation value is computed as
$$ \langle A\rangle_\Omega = \alpha_w(A)|_{\phi=0}\doteq \langle \wick{A}_H\rangle_\Omega\,, \qquad w=\Delta_+-H_+\,,$$
for instance
$$\langle \phi^2(x)\rangle_\Omega = \alpha_w(\phi^2(x))|_{\phi=0}=\left(\phi^2(x)+w(x,x)\right)|_{\phi=0}=w(x,x)\doteq \langle \wick{\phi^2(x)}_H\rangle_\Omega \,,$$
which in more standard terms would be computed as
$$\label{eq_expval2}\langle \wick{\phi^2(x)}_H\rangle_\Omega =\lim_{x\to y}\left\langle \phi(x)\phi(y)-H_+(x,y)\right\rangle_\Omega=\lim_{x\to y}\left(\Delta_+(x,y)-H_+(x,y)\right)=w(x,x)\,.$$

In QFT in curved spacetimes normal--ordering is in principle problematic, because (pointlike) observables should be defined in a local and generally covariant way, i.e. they should only depend on the spacetime in an arbitrarily small neighbourhood of the observable localisation \cite{Brunetti:2001dx, Hollands:2001nf}. This is not satisfied for e.g. field polynomials Wick--ordered with $\Delta_+(x,y)$, because this distribution satisfies the Klein-Gordon equation and thus it encodes non--local information on the curved spacetime \cite{Hollands:2001nf}. It is still possible to compute in the convenient normal--ordered representation in the following way. In the example of $\phi^4$--theory, one defines the potential $\frac{\lambda}{4} \phi(x)^4$ as a local and covariant observable by identifying it with the corresponding monomial in a representation of the algebra furnished by a purely geometric $H_+$, i.e. a $H_+$ of the form \eqref{eq:hadamard} with $w=0$.

In other words, we set once and for all in the $H_+$--representation 
$$
V_H=\int_\M d^4x \sqrt{-g} \; \frac{\lambda}{4} \phi(x)^4\doteq \int_\M d^4x \sqrt{-g} \; \frac{\lambda}{4} \wick{\phi(x)^4}_H.
$$
This does not fix $V$ uniquely, because $H$ depends on the scale $M$ inside of the logarithm, but the freedom in defining $V_H$, and analogously the free/quadratic part of Klein--Gordon action, as above corresponds to the usual freedom in choosing the ``bare mass'' $m$, ``bare coupling to the scalar curvature'' $\xi$, ``bare cosmological constant'' $\Lambda$, ``bare Newton constant'' $G$, as well as the ``bare coefficients'' $\beta_1$, $\beta_2$ of higher--derivative gravitational terms in the extended Einstein--Hilbert--Klein--Gordon action
$$
{\cal S}(\phi,g_{ab})=\int_\M d^4x \sqrt{-g}\left(\frac{R-2\Lambda}{16 \pi G}+\beta_1 R^2 + \beta_2 R_{ab}R^{ab}-\frac{(\nabla \phi^2)}{2}-\frac{(m^2+\xi R )\phi^2}{2}-\frac{\lambda}{4}\phi^4\right).
$$
In order to switch to the normal-ordered representation, we use the map $\alpha_{w}$ defined in \eqref{eq_defalpha} where $w=\Delta_+-H_+$ is the state--dependent part of the Hadamard distribution $\Delta_+$ whose dependence on the choice of $M$ in $H_+$ corresponds to the above--mentioned freedom in the definition of the Wick--ordered Klein--Gordon action. That is, we have in the normal--ordered representation in the state $\Omega$
\begin{align}\label{eq:potentialcorrections}
V:=V_\Delta=\alpha_w(V_H)&=\int_\M d^4x \sqrt{-g} \; \frac{\lambda}{4} \phi(x)^4 + \frac{3\lambda}{2}w(x,x)\phi(x)^2+\frac{3\lambda}{4} w(x,x)^2\\
&\doteq\int_\M d^4x \sqrt{-g} \; \frac{\lambda}{4} \wick{\phi(x)^4}_\Delta + \frac{3\lambda}{2}w(x,x)\wick{\phi(x)^2}_\Delta+\frac{3\lambda}{4} w(x,x)^2\notag
\end{align}
We observe that the combination of the requirements that the interaction potential is a local and covariant observable and that, in order to compute expectation values in the state $\Omega$, one would like to compute in the convenient normal--ordered representation with respect to $\Omega$, leads to the introduction of an effective spacetime--dependent and state--dependent (squared) mass term $\mu(x)=3\lambda w(x,x)$ in the interaction potential which of course leads to additional Feynman graphs in perturbation theory, cf. Figures \ref{fig_propagators} and \ref{fig_2pf}. The field--independent term $\frac{3\lambda}{4} w(x,x)^2$ plays no role for computations of quantities which do not involve functional derivatives of the extended Einstein--Hilbert--Klein--Gordon action with respect to the metric (an example where it does play a role is the stress--energy tensor), just as the modification of the free action by the change of representation plays no role for the computation of such quantities. A similar phenomenon as in \eqref{eq:potentialcorrections} occurs in thermal quantum field theory on Minkowski spacetime, where the effective mass generated by changing from the normal--ordered picture with respect to the free vacuum state to the normal--ordered picture with respect to the free thermal state is termed ``thermal mass'', cf. \cite[Section 2.3.2.]{Lindner:2013ila} for details.

After these general considerations, we can proceed to compute as an example the two-point function of the interacting field $\phi_I$ in $\phi^4$ up to second order in $\lambda$, whereby $\phi_I$ is assumed to be in a state induced by a Gaussian Hadamard state of the free field. To this avail, we shall exclusively compute in the associated normal--ordered representation and thus omit the subscripts on the star product, and the time-ordered product, $\star:=\star_\Delta$, $\cdot_T\;:=\cdot_{T_\Delta}$.

We start from the Bogoliubov formula \eqref{eq_bogoliubov} and compute (from now on $\hbar=1$)
$$
S(V)=1+iV-\frac12 V\cdot_T V + O(\lambda^3)
$$
$$
S(V)^{\star -1}=1-iV+\frac12 V\cdot_T V-V\star V + O(\lambda^3)
$$
$$
\phi_I=\phi-i V\star \phi+i V\cdot_T\phi+\frac12\left( V\cdot_T V\right)\star \phi-V\star V\star\phi-\frac12 V\cdot_T V\cdot_T \phi+V\star(V\cdot_T\phi)+O(\lambda^3)\,.
$$
It remains to compute the $\star$-product of $\phi_I(x)$ and $\phi_I(y)$ and to set $\phi=0$ in the remaining expression in order to obtain the expectation value in the state $\Omega$. The result can as always be conveniently expressed in terms of Feynman diagrams, where we use the Feynman rules depicted in Figure \ref{fig_propagators}.

\begin{figure}[!htb]\begin{center}
\includegraphics[width=9.5cm]{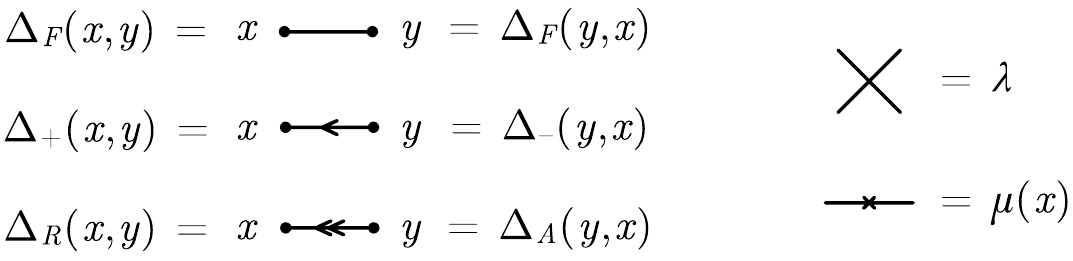}
\end{center}
\caption{\label{fig_propagators}The various propagators and vertices in $\phi^4$--theory, where $\mu(x)=3\lambda w(x,x)$.}
\end{figure}

 In the computation of $\langle\phi_I(x)\phi_I(y)\rangle_\Omega$, many expressions can be shortened considerably by using the relation $\Delta_F-\Delta_+=i\Delta_A$, in particular this holds for the external legs of the appearing Feynman diagrams. The resulting Feynman diagrams are depicted in Figure \ref{fig_2pf}.

\begin{figure}[!htb]\begin{center}
\includegraphics[width=11cm]{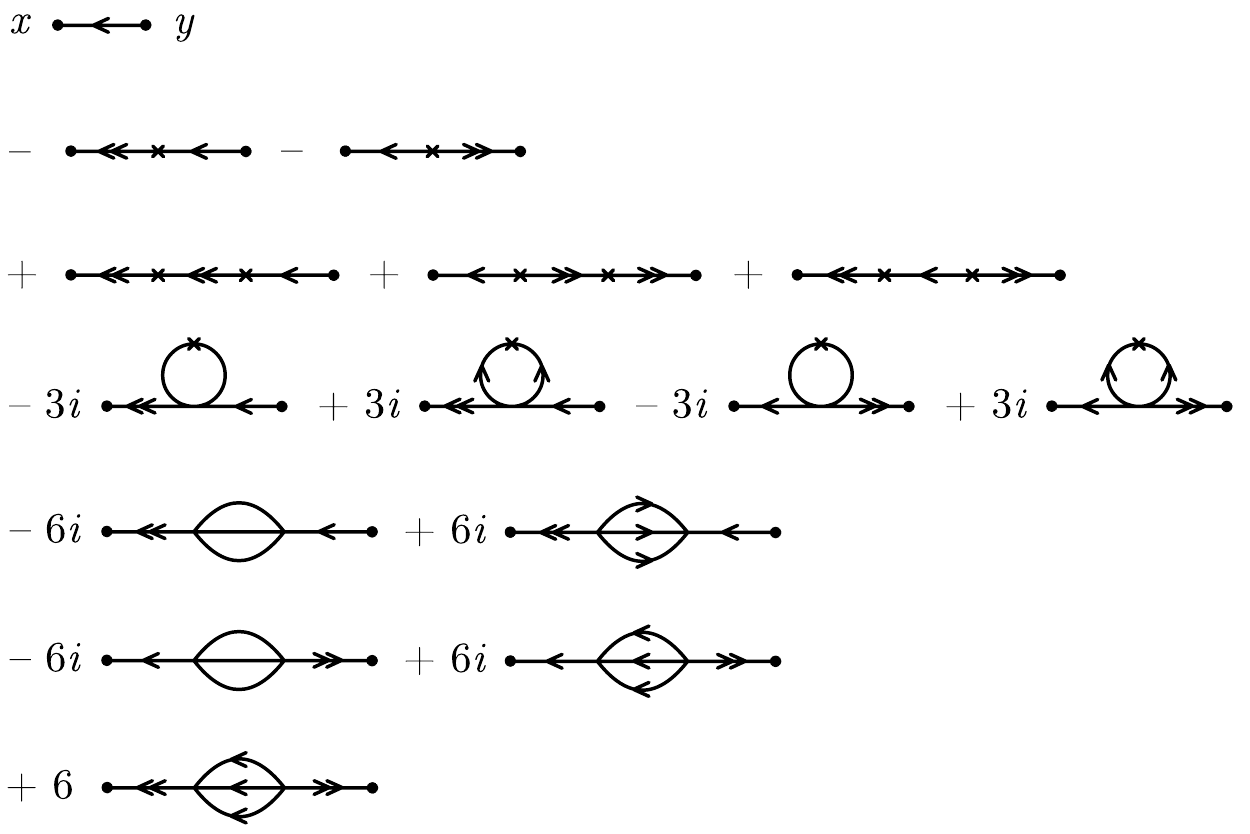}
\end{center}
\caption{\label{fig_2pf}The up--to--second--order contributions to the two--point (Wightman) function $\langle\phi_I(x)\phi_I(y)\rangle_\Omega$ of the interacting field with potential $\frac{\lambda}{4}\phi(x)^4+\frac{\mu(x)}{2}\phi(x)^2$. We omit the labels of the external vertices after the first line using the convention that the left external vertex is always the $x$-vertex.}
\end{figure}

\section{Analytic regularisation and minimal subtraction on curved spacetimes}

As discussed above, the main problem in using the Bogoliubov formula \eqref{eq:bogoliubov}
\[
\mathcal{R}_V(F) = \left. \frac{\hbar}{i}\frac{d}{d\lambda} S(V)^{-1}\star_H S(V+\lambda F) \right|_{\lambda = 0}
\]
for constructing interacting fields perturbatively is that it is given in terms of the $S$--matrix, which is the time--ordered exponential \eqref{def:Smatrix}. 
Unfortunately, the time--ordered product defined in terms of a ``deformation'' \eqref{def:TH} written by means of a Feynman propagator $H_F$ is well defined only on regular functionals because  
the singularities present in $H_F$ forbid their application to more general functionals.

In order to proceed there is the need of employing a renormalisation procedure to construct the time--ordered products.
In this work we discuss the use of certain analytic methods to solve this problem.
The procedure we shall pursue is the following. 
We deform the Feynman propagator by means of complex parameter $\alpha$ with values in the neighbourhood of the origin obtaining a function with distributional values $\alpha \mapsto H_F^{(\alpha)}$. The deformation we are looking for needs to be such that in the limit $\alpha \to 0$ we recover the ordinary Feynman propagator. Furthermore, when $\alpha$ is non--vanishing, but sufficiently small, pointwise powers of $H^{(\alpha)}_F$ and integral kernels of more complicated loop diagrams should be well--defined. If this is the case, since the corresponding distributions obtained in the limit $\alpha\to0$ are well defined outside of the total diagonal, the poles of $\alpha \mapsto H_F^{(\alpha)}$ and more complicated loop expressions are supported on the total diagonal. 
The idea, similar to what happens in dimensional regularisation, is that it is possible to renormalise these distributions by simply removing the poles.

\subsection{Analytic regularistion of time--ordered products and the minimal subtraction scheme}
\label{sec:analytic_general}

In order to discuss the analytic regularisation of time--ordered products, we employ the notation used e.g. in \cite{dfkr} which efficiently encodes the full combinatorics of Feynman diagrams in a compact form. Namely, the time--ordered product of $n$ local functionals $V_1, \dots, V_n$ can be formally defined in the following way\footnote{\label{foot:T1}In fact, in view of locality and covariance a better definition of the time--ordered product is $\mathcal{T}_1(V_1)\cdot_{T_H}  \dots \cdot_{T_H}    \mathcal{T}_1(V_n)    
:=
\mathcal{T}_n(V_1\otimes \dots \otimes V_n) $ where $\mathcal{T}_1:\F_\loc\to\F_\loc\subset\A_0$ plays the role of identifying local and covariant (smeared) Wick polynomials as particular elements of the free algebra $\A_0$, cf. \cite{Hollands:2004yh}. As we shall not touch upon this point in our renormalisation scheme, we choose to omit $\mathcal{T}_1$ in our formulas for simplicity.}
\begin{equation}\label{eq:time-ordered-product}
V_1\cdot_{T_H}  \dots \cdot_{T_H}    V_n    
:=
\mathcal{T}_n(V_1\otimes \dots \otimes V_n) 
:=  m \circ T_n (V_1\otimes \dots \otimes V_n)\,,
\end{equation}
where $m$ denotes the pointwise product $m(F_1\otimes \dots \otimes F_n)(\phi) = F_1(\phi)\dots F_n(\phi)$
and the operator $T_n$ is written in terms of an exponential 
\beq\label{def:exponentialT}
T_n = \exp\left(\sum_{1\leq i<j \leq n}\Delta_{ij}\right)= \prod_{1\leq i<j\leq n} \sum_{l_{ij} \geq 0}^\infty  \frac{\Delta_{ij}^{l_{ij}}}{l_{ij}!}
\eeq
with
\beq\label{def:exponentialT2}
\Delta_{ij} :=   \left\langle H_F, \frac{\delta^2}{\delta \phi_i \delta \phi_j} \right\rangle.
\eeq
Here the functional derivative $\frac{\delta}{\delta \phi_i}$ acts on the $i-$th element of the tensor product 
$V_1\otimes \dots \otimes V_n$ and $H_F=H_++i\Delta_A$ is the time--ordered version of the Hadamard distribution $H_+$ entering the construction of the free algebra $\A_0$ via $\star_H$.
The exponential \eqref{def:exponentialT} admits the usual representation in terms of Feynman graphs. More precisely, it can be written as a sum over all graphs $\Gamma$ in $\mathcal{G}_n$, the set of all graphs with vertices $V(\Gamma)= \{ 1,\dots, n\}$ and $l_{ij}$ edges $e\in E(\Gamma)$  joining the vertices $i,j$. Furthermore, in this construction, there are no tadpoles $l_{ii}=0$ (cf. Section \ref{sec:relationPAQFT} for details on why these are absent) and the edges are not oriented $l_{ij}=l_{ji}$. With this in mind 
\begin{equation}\label{eq:tau-gamma}
T_n = \sum_{\Gamma\in \mathcal{G}_n}  \frac{1}{N(\Gamma)}    \left\langle  \tau_\Gamma  , \frac{\delta^{2|E(\Gamma)|}}{  \prod_{i \in V(\Gamma)} \prod_{E(\Gamma)\ni e \supset i}    \delta \phi_i(x_{i}) } \right\rangle,
\end{equation}
 where $N(\Gamma)= \prod_{i<j} l_{ij}! $ is a numerical factor counting the possible permutations among the lines joining the same two vertices, the second product $\prod_{e \supset i} $ is over the edges having $i$ as a vertex and $x_{i}$ is a point in $\M$ corresponding to the vertex $i$.
Moreover, $\tau_\Gamma$  is a distribution which is well--defined outside of all partial diagonals, namely on
$\M^n\setminus D_n$, where 
\beq\label{def:alldiagonals}
D_n:=\{x_1,\ldots,x_n\,|\, x_i=x_j \text{ for at least one pair } (i,j),\, i\neq j \}\,
\eeq
and $\tau_\Gamma$ has the form
\begin{equation}\label{eq:tgamma}
\tau_\Gamma = \prod_{e=(i,j)\in E(\Gamma)} H_F(x_{i},x_{j})=\prod_{1\le i<j\le n} H_F(x_{i},x_{j})^{l_{ij}}.
\end{equation}
The a priori restricted domain of $\tau_\Gamma$ is the reason why $T_n$ defined as above is not a well--defined operation on $\mathcal{F}_\loc^{\otimes n}$. In this context we recall that the total diagonal $d_n\subset D_n$ is defined as 
\beq
d_n := \{(x,\dots, x) , x\in \M\}\subset \M^n\,.
\eeq
In order to complete the construction we need to extend the obtained distributions to the diagonals $D_n$. This is not a straightforward limit because the singular structure of the Feynman propagator $H_F$ contains the one of the $\delta$--distribution and because pointwise products of the latter distribution are ill--defined. Consequently, a renormalisation procedure needs to be implemented in order to extend $\tau_\Gamma$ to the full $\M^n$. This extension is in general not unique, but subject to renormalisation freedom.

Here we shall discuss a procedure to extend the distributions $\tau_\Gamma$ to $D_n$ called {\bf minimal subtraction (MS)}, which makes use of an analytic regularisation $\Delta^{\alpha_{ij}}_{ij}$ of $\Delta_{ij}$ given in terms of a family of deformations $H_F^{\alpha_{ij}}$ of the Feynman propagator $H_F$ parametrised by complex parameters $\alpha_{ij}$ contained in some neighbourhood of $0\in\bbC$. To this end, we follow \cite{dfkr} and call $t^{(\alpha)}$ an analytic regularisation of a distribution $t$ defined outside of a point $x_0\in\M$ if for all $f\in\D(\M)$ $\langle t^{(\alpha)}, f\rangle$ is a meromorphic function in $\alpha$ for $\alpha$ in some neighbourhood of $0$ which is analytic for $\alpha\neq 0$. Moreover $t^{(\alpha)}$ may be extended to $x_0$ for $\alpha\neq 0$ whereas $\lim_{\alpha\to 0}t^{(\alpha)}=t$ on $\M\setminus\{x_0\}$.

We shall introduce an analytic regularisation of the Feynman propagator $H_F$ in the following section, but the basic idea of the MS--scheme is independent of the details of the analytic regularisation. Namely, given 
any analytic regularisation $H^{(\alpha)}_F$ of $H_F$, we repeat the formal construction of $T_n$ presented above by replacing $H_F$ by $H^{(\alpha)}_F$ in \eqref{def:exponentialT2} and $\Delta_{ij}$ by the induced $\Delta^{\alpha_{ij}}_{ij}$ in \eqref{def:exponentialT}. Proceeding in this way we define 
\[
T^{(\balpha)}_n := e^{\sum_{i<j}\Delta^{\alpha_{ij}}_{ij}} \qquad \text{with}\qquad\balpha := \{\alpha_{ij}\}_{i<j}\,,
\]
and the corresponding integral kernels $\tau^{(\balpha)}_\Gamma$ of Feynman graphs $\Gamma$ in analogy to \eqref{eq:tau-gamma}. We expect that the distributions $\tau^{(\balpha)}_\Gamma$ are multivariate meromorphic functions which have poles at the origin for some of the $\alpha_{ij}$. 
Hence, in order to obtain well--defined distributions in the limit $\alpha_{ij}$ to $0$ and consequently a renormalised time--ordered product $\cdot_{T_H}$, all these poles need to be subtracted.

The properties of the analytically regularised Feynman propagator imply that $\tau^{(\balpha)}_\Gamma$ is well--defined on $ \M^n \setminus D_n $ \eqref{def:alldiagonals} even if all $\alpha_{ij}$ are vanishing. Since $\tau^{(\balpha)}_\Gamma$ is a multivariate meromorphic function in $\balpha$  which is analytic if restricted to $\M^n\setminus D_n$, we may deduce that the principal part of $\tau^{(\balpha)}_\Gamma$ for some $\alpha_{ij}$ must be supported on a partial diagonal of $\M^n$. In fact, in order for the time--ordered products to fulfil the factorisation property \eqref{eq:causal-factorisation}, the subtraction of the principal parts of $\tau^{(\balpha)}_\Gamma$ needs to be done in such a way that at each step only local terms are subtracted. However, the previous discussion only implies that the support of the principal parts is contained in $D_n$, i.e. the union of all the partial diagonals in $\M^n$. In order to satisfy the causal factorisation property, the principal parts need to be removed in a recursive way starting from the partial diagonals corresponding to two vertices and proceeding with the partial diagonals corresponding to an increasing number $m\le n$ of vertices $\mathfrak{d}_{I}:=\{ (x_1,\dots, x_n) \in M^n, x_i=x_j, i,j \in I\subset \{1,\dots, n\} , |I|=m\}$.

The correct recursion procedure is implemented by the so called Epstein--Glaser forest formula, which is a position--space analogue of the Zimmermann forest formula, see  \cite{Hollands:2010pr, Keller, dfkr} for a careful analysis of the subject. We shall here follow the treatment discussed in \cite{dfkr}. To this end, we consider the set of indices $\overline{n} := \{1,\dots , n\}$ and define a forest $F$ as 
\[
F =\{ I_1,\dots, I_k\}, \qquad I_j \subset \overline{n}\qquad\text{and} \qquad |I_j|\geq 2 \,,
\]
where for every pair $I_i,I_j\in F$
\[
I_i\cap I_j = \emptyset  \qquad \text{or}   \qquad I_i \subset I_j   \qquad \text{or}   \qquad  I_j\subset I_j.
\]
The set of all forests of $n$ indices together with the empty forest $\{\}$ is indicated by $\mathfrak{F}_{\overline{n}}$.

For every subset $I\subset \overline{n}$ we indicate by $R_I$ the operator which extracts the principal part with respect to $\alpha_I$ of a multivariate meromorphic function $f(\{\alpha_{ij}\}_{i<j})$, where for every $i,j \in I$, $\alpha_{ij}=\alpha_I$, and multiplies it with $-1$:
\begin{equation}\label{eq:projection}
R_I f := - \pp  \lim_{\scriptstyle\alpha_{ij} \to \alpha_I\atop\forall i,j\in I} f(\{\alpha_{ij}\}_{i<j}).
\end{equation}
We complement this definition by setting $R_{\{\}}$ to be the identity.

Given all these data, we define the renormalised time--ordered product in the MS--scheme as in e.g. \cite[Theorem 3.1]{dfkr} by
\begin{equation}\label{eq:forset-formula}
\mathcal{T}_n = \left(\mathcal{T}_n\right)_\ms := \lim_{\balpha \to 0}   m \circ   \left( \sum_{F\in\mathfrak{F}_{\overline{n}}} \prod_{I\in F} R_I\right)  \circ    T^{(\balpha)}_n,
\end{equation}
where, in the product over $I\in F$, $R_I$ appears before $R_J$ if $I\subset J$. 
Furthermore, for each graph $\Gamma$, the limit $\balpha=\{\alpha_{ij}\}_{i<j}\to 0$ is computed by setting $\alpha_{ij}= \alpha_\Gamma$ for every $i<j$ before taking the sum over the forests and finally considering the limit $\alpha_\Gamma$ to $0$. In this context we recall that, for every element of the sum over $\mathfrak{F}_{\overline{n}}$, part of the limit $\alpha_{ij}\to \alpha_\Gamma$ is already taken by applying $R_I$, see \eqref{eq:projection}. 

Given the renormalised $\mathcal{T}_n$ in the MS--scheme, the corresponding local $S$--matrix may be constructed as 
\[
S(V) = \sum^\infty_{n=0}\frac{i^n}{\hbar^n n!}\mathcal{T}_n(V\otimes  \dots \otimes V)
\]
for any local interaction Lagrangean $V$.

In order to implement the minimal subtraction scheme as outlined above we first need to specify an analytic regularisation $H^{(\alpha)}_F$ of the Feynman propagator $H_F$ on generic curved spacetimes. Afterwards we have to demonstrate that for all graphs $\Gamma\in\G_n$ the analytically regularised integral kernels 
\beq\label{def:regularisedamplitudes}
\tau^{(\balpha)}_\Gamma = \prod_{e=(i,j)\in \Gamma} H^{\alpha_{ij}}_F(x_{i},x_{j})=\prod_{1\le i<j\le n} \left(H^{\alpha_{ij}}_F(x_{i},x_{j})\right)^{l_{ij}}.
\eeq
appearing in
\begin{equation}\label{eq:tau-gamma_reg}
 T^{(\balpha)}_n = \sum_{\Gamma\in \mathcal{G}_n}  \frac{1}{N(\Gamma)}    \left\langle  \tau^{(\balpha)}_\Gamma  , \frac{\delta^{2|E(\Gamma)|}}{   \prod_{i \in V(\Gamma)} \prod_{e \supset i}    \delta \phi_i(x_{i}) } \right\rangle 
\end{equation}
satisfy the properties necessary for the implementation of the MS--scheme. In particular we need to demonstrate that the distribution $\tau^{(\balpha)}_\Gamma$, which is a priori defined only on $\M^n\setminus D_n$, can be uniquely extended to the full $\M^n$ without renormalisation, where the uniqueness of this extension is important in order to obtain a definite renormalisation scheme. Moreover, we need to show that this distribution $\tau^{(\balpha)}_\Gamma\in\D^\prime(\M^n)$ is weakly meromorphic in $\balpha$ in a neighbourhood of 0, where in view of the forest formula it is only necessary to show that, setting $\alpha_{ij}=\alpha_I$ for all $i,j\in I$, $\tau^{(\balpha)}_\Gamma$ is weakly meromorphic in $\alpha_I$. Additionally, we need to prove that, if $\tau_\Gamma$ prior to regularisation is well--defined outside of the partial diagonal $d_{I}$, then the pole of $\tau^{(\balpha)}_\Gamma$ with $\alpha_{ij}=\alpha_I$ for all $i,j\in I$ in $\alpha_I$ is supported on $d_I$ and thus local. Finally, we need to prove that our MS--scheme satisfies all properties given in \cite{Hollands:2001b,Hollands:2004yh} which a physically meaningful renormalisation scheme on curved spacetimes should satisfy, and we need to provide means to explicitly compute the minimal subtraction, which after all is the main motivation for this work.

Our plan to construct the mentioned quantities and to prove their required properties is as follows.

\begin{enumerate}
\item In Section \ref{sec:analyticFeynman} we construct an analytic regularisation $H^{(\alpha)}_F$ of the Feynman propagator based on the observation that locally $H_F$ is of the form \eqref{eq:hadamard} up to considering instead of $\sigma_+$ the half squared geodesic with the Feynman $\epsilon$--prescription $\sigma_F := \sigma + i\epsilon$. Motivated by the fact that the singular structure of $H_F$ originates from the form in which $\sigma_F$ appears, we set locally
\begin{equation}\label{eq:anal-feynman}
H^{(\alpha)}_F := \lim_{\epsilon\to 0^+} \frac{1}{8\pi^2}\left(\frac{u}{M^{2\alpha}\sigma_F^{1+\alpha}} + \frac{v}{\alpha} \left(1-\frac{1}{M^{2\alpha}\sigma_F^{\alpha}}\right)\right)+w,
\end{equation}
where we use the (arbitrary but fixed) mass scale $M$ present in \eqref{eq:hadamard} also for preserving the mass dimension of $H_F$ in the regularisation.
\item In Proposition \ref{pr:prod-sigma} we then prove that the relevant distributions
\beq\label{def:amplitudesigma}t_\Gamma^{(\balpha)} := \prod_{1\le i<j\le n} \frac{1}{\sigma_F^{l_{ij}(1+\alpha_{ij})}}\in \D^\prime(\M^n\setminus D_n)\eeq
are multivariate analytic functions. The distribution \eqref{def:amplitudesigma} only displays the most singular contribution of $\tau^{(\balpha)}_\Gamma$ \eqref{def:regularisedamplitudes}, but the subleading contributions are clearly of the same form up to replacing some of the factors $(1+\alpha_{ij})$ in the exponents by $\alpha_{ij}$ or 0.
\item In order to show that $t_\Gamma^{(\balpha)}$ can be uniquely extended from $\M^n\setminus D_n$ to $\M^n$ in a weakly meromorphic fashion, i.e. that the singularities relevant for the forest formula are poles of finite order, we follow a strategy similar to the one used in \cite{Hollands:2001b} and consider a scaling expansion with respect to a suitable scaling transformation. We first argue in Proposition \ref{pr:regularisation} that an analytically regularised distribution $t^{(\alpha)}\in\D^\prime(\M^n\setminus d_n)$, which can be written as a sum of homogeneous terms with respect to this scaling transformation plus a sufficiently regular remainder, can be extended to $\M^n$ in a weakly meromorphic way, were the uniqueness of the extension follows from its weak meromorphicity. In Proposition \ref{pr:set}, we give a sufficient condition for the existence of such a homogeneous expansion and we demonstrate in Proposition \ref{pr:almost-homo} that the distributions $t_\Gamma^{(\balpha)}$ satisfy this condition.

\item The above--mentioned results are proved by means of generalised Euler operators (see \cite{DangPHD} for a related concept) which can be written abstractly in terms of a scaling transformation, but also in terms of covariant differential operators whose explicit form can be straightforwardly computed as we argue in Section \ref{sec:differentialEuler}.  In Proposition \ref{pr:expose-poles} we use these operators in order to demonstrate how the full relevant pole structure of $t_\Gamma^{(\balpha)}$ can be computed, thus showing the practical feasibility of the MS--scheme. We find that our renormalisation scheme corresponds in fact to a particular form of differential renormalisation and expand on this by computing a few examples in Section \ref{sec_fishsunset}.

\item Finally, in Proposition \ref{pr:propertiesscheme} we prove that the MS--scheme satisfies the axioms of \cite{Hollands:2001b,Hollands:2004yh} for time--ordered products and in addition preserves invariance under any spacetime isometries present.
\end{enumerate}

\begin{rem}\label{rem:geodesicneighbourhood}
The local Hadamard expansion  \eqref{eq:hadamard} of $H_F$ and correspondingly the analytically continued $H^{(\alpha)}_F$ defined in \eqref{eq:anal-feynman} are only meaningful on normal neighbourhoods $\N$ of $(\M,g)$. In order to define $H^{(\alpha)}_F$ and the induced distributions $\tau^{(\balpha)}_\Gamma$\eqref{def:regularisedamplitudes} globally, we may employ suitable partitions of unity. Rather than providing general and cumbersome formulas, we prefer to illustrate the idea at the example of the triangular graph
$$\tau_\Gamma=H_{F,13}H_{F,23}H_{F,12}^2.:=H_{F}(x_1,x_3)H_{F}(x_2,x_3)H_{F}(x_1,x_2)^2$$
the renormalisation of which is discussed in detail in Section \ref{sec:complicatedgraph}. 

We recall that a corollary of Lemma 10 of Chapter 5 in \cite{ONeill} guarantees that there exists a covering $\mathcal{C}$ of $\M$ consisting of open geodesically convex sets such that $\N_i\cap \N_j$ is geodesically convex for every $\N_i,\N_j \in \mathcal{C}$\footnote{We would like to thank Valter Moretti for pointing this result out to us.}. With this $\mathcal{C}$ at our disposal, we define the sets 
\[
\N_{12}:=\bigcup_{\N\in\mathcal{C}} \N\times \N\subset \M^2\,,\qquad \N_{123}:=\bigcup_{\N\in\mathcal{C}}\N\times \N\times \N\subset \M^3.
\]
We call sets of the form $\N_{12}$ and $\N_{123}$ a {\bf normal neighbourhood of the total diagonal}. This definition is essentially motivated by the fact that for every $x\in \M$ we can find a normal neighborhood $\N_x\in \mathcal{C}$ of $x$ in $\M$. The squared geodesic distance $\sigma$ is then well defined on $\mathcal{N}_{12}$, whereas the same is in general not true if we replace $\mathcal{C}$ in the previous formula with a covering of $\mathcal{M}$ formed by all open geodesically convex sets.

Setting $\sigma_{ij}:=\sigma(x_i,x_j)$, we observe that $\sigma_{12}$ is well--defined on $\N_{12}$, and that $\sigma_{12}$ , $\sigma_{13}$ and $\sigma_{23}$ are well--defined on $\N_{123}$. We now consider smooth and compactly supported functions $\chi_{12}\in\D(\N_{12})$, $\chi_{123}\in\D(\N_{123})$ which are such that $\chi_{12}=1$ on $d_2\subset \N_{12}$ and $\chi_{123}=1$ on $d_3\subset \N_{123}$. Note that by construction $\chi_{12}$ and $\chi_{123}$ vanish outside of $\N_{12}$ and $\N_{123}$ respectively. We may now define the analytically regularised distribution $\tau^{(\balpha)}_\Gamma$ by setting
\begin{align*}\tau^{(\balpha)}_\Gamma:=&\,H^{(\alpha_{13})}_{F,13}H^{(\alpha_{23})}_{F,23}\left(H^{(\alpha_{12})}_{F,12}\right)^2\chi_{12}\chi_{123} + H_{F,13}H_{F,23}H_{F,12}^2(1-\chi_{12})\\&+H_{F,13}H_{F,23}\left(H^{(\alpha_{12})}_{F,12}\right)^2\chi_{12}(1-\chi_{123})\,,
\end{align*}
where the Feynman propagators are regularised as in \eqref{eq:anal-feynman}. By construction, $\tau^{(\balpha)}_\Gamma$ is globally well--defined and the analysis outlined above and performed in the following sections implies that it can be uniquely extended to a weakly meromorphic distribution on the full $\M^3$. Moreover, the local pole contributions corresponding to $\alpha_{12}=\alpha_I$ with $I=\{1,2\}$ and $\alpha_{12}=\alpha_{13}=\alpha_{23}=\alpha_J$ with $J=\{1,2,3\}$ are clearly independent of the choice of $\chi_{12}$, $\chi_{123}$ and $\N_{12}$, $\N_{123}$ such that the MS--regularised amplitude $(\tau_\Gamma)_\ms$ is both globally well--defined and independent of the quantities entering the global definition of the analytic regularisation.

Keeping this approach to define global analytically regularised quantities in mind, we shall for simplicity work only with local quantities in the following.
\end{rem}


\subsection{Analytic regularisation of the Feynman propagator \texorpdfstring{$H_F$}{HF} on curved spacetimes}
\label{sec:analyticFeynman}

Following the plan outlined in Section \ref{sec:analytic_general}, we would like to define an analytic regularisation $H^{(\alpha)}_F$ of $H_F$ by \eqref{eq:anal-feynman}. To this end, we start our analysis by constructing the distribution $1/\sigma_F^{1+\alpha}$ in $\M^2$ for $\alpha\in \mathbb{C} \setminus \mathbb{N}$. As anticipated in Section \ref{sec:analytic_general} we shall make use of scaling properties of $1/\sigma_F^{1+\alpha}$ and the induced quantities $t^{(\balpha)}_\Gamma$ \eqref{def:amplitudesigma} with respect to a particular geometric scaling transformation. 

For every pair of points $x_1,x_i$ in a normal neighbourhood $\N\subset (\M,g)$ there exists a unique geodesic $\gamma$ connecting $x_1$ and $x_i$. We shall assume that $\gamma:\lambda \mapsto x_i(\lambda)$ is affinely parametrised and that $x_i(0) =x_1$ whereas $x_i(1) = x_i$. 
For all $\lambda\ge0$ and all $f\in \D(\N_n)$ with $\N_n\subset \M^n$ a normal neighbourhood of the total diagonal $d_n$ (cf. Remark \ref{rem:geodesicneighbourhood}), the geometric scaling transformation we shall consider is
\begin{equation}\label{eq:n-dim-scaling}
f_\lambda := \lambda^{4(n-1)}f(x_1,x_2(\lambda ), \dots, x_n(\lambda))  \prod_{i=2}^n\frac{\sqrt{g(x_i(\lambda ))}}{\sqrt{g(x_i)}}\,,
\end{equation}
where $g(x)$ is the absolute value of the determinant of the metric expressed in normal coordinates. For $\lambda>1$ it may happen that $x_i(\lambda)$ lies outside of $\N_n$ and is thus not well--defined in general. In this case we set $f_\lambda=0$ which is well--defined because $f=0$ outside of $\N_n$. For later purposes, we recall that the determinant of the metric computed in normal coordinates centred at $x_1$ is such that 
\[
\sqrt{g(x_i)} = \frac{1}{u^2(x_1,x_i)}\,,
\]
where $u$ is the Hadamard coefficient in \eqref{eq:hadamard} and $u^2$ is the van Vleck--Morette determinant, see e.g.  \cite[(8.5)]{Poisson:2011nh}.

By means of this transformation, relevant information about the behaviour of a distribution in the neighbourhood of the total diagonal $d_n$ can be obtained. 

We recall two definitions and a few results which we shall use in the following. To this end we consider a distribution $t$ in $\D^\prime(\M^{n}\setminus d_n)$. The restrictions of $t$ to the sets 
\[
\mathcal{O}_x = \left\{   (x,y) \in \M^n \,\big|\, y \in \M^{n-1} \right\}
\]
for $x\in\M$ are well defined if $\WF(t)$ is disjoint from the {\bf conormal bundle}
 $N^*\mathcal{O}_x=\{(y,k) \in T^*\M \,\big|\,   \langle k, \xi \rangle = 0 \, ,\forall \xi\in T_y\mathcal{O}_x  \}$, see e.g. Theorem 8.2.4 in the book of H\"ormander \cite{Hormander}. By assumption, these restrictions are defined everywhere on $\mathcal{O}_x$ up to the single points 
$\mathcal{O}_x \cap d_n$. The extension of the restricted distributions to that point can be constructed by means of particular scaling properties of that distribution, see e.g. Theorem 5.2 and Theorem 5.3 in \cite{Brunetti-Fredenhagen:2000}. However, in order to discuss the extension of $t$ to the full diagonal $d_n$ it is useful to introduce the scaling degree of $t$ towards $d_n$ in the following way, which is an adaption of the notion of a transversal scaling degree presented in  \cite[Section 6]{Brunetti-Fredenhagen:2000} to our setting.

\begin{defi}\label{def:scalingdegree}
Let $\Gamma \subset T^*\M^n$ be a closed conical set which is equal to the conormal bundle $N^*d_n$ on $d_n$ and disjoint from $N^*\mathcal{O}_x$ for every $x$, and let $t$ be a distribution in $\D_\Gamma^\prime(X)$, $X\in\{\M^n,\M^n\setminus d_n\}$ i.e. $t\in\D^\prime(X)$ and $\WF(t)\subset \Gamma$. The {\bf scaling degree} of $t$ towards $d_n$ 
is defined as 
\[
\sd_{d_n}(t) := \inf\left \{w \in \mathbb{R}\,\big|\, \lim_{\lambda \to 0^+}\lambda^{w}  t_\lambda = 0\right\}
\] 
where the distributions $t_\lambda(f) := t(f_{1/\lambda})$ are defined in terms of the scaling transformation \eqref{eq:n-dim-scaling}, and where the limits are taken in the topology of $\D^\prime_\Gamma(\M^n)$. 
\end{defi}

If a distribution has scaling degree towards $d_n$ lower than the total dimension of the scaled coordinates $4(n-1)$, then it possesses a unique extension towards $d_n$ with the same scaling degree, see e.g. Theorem 6.9 of \cite{Brunetti-Fredenhagen:2000}. The proof of existence of the extensions presented there is constructive and can be obtained as the limit to large $n$ of $t (1-\theta_n)$, where $\theta_n$ are suitable smoothed characteristic functions of a neighborhood of the diagonal whose supports become smaller and smaller for large $n$. In particular, the previous notion of scaling degree permits to control the wave front set of every $t (1-\theta_n)$ and of the limiting distribution as well. In fact, the extended distribution can be restricted to every $\mathcal{O}_x$ as well because $\Gamma$ and $N^*\mathcal{O}_x$ are disjoint sets. 
The uniqueness of the extension descends from the uniqueness of the extension of the distributions restricted to $\mathcal{O}_x$ and from the observation that the scaling degree of the restricted distributions towards the point $(x,\dots,x) \in \mathcal{O}_x$ is such that $\sd(\left. t\right|_{\mathcal{O}_x}) \leq \sd_{d_n}(t)$. If not strictly necessary we shall drop the suffix $d_n$ in the following. Having this result at disposal implies that, in order to construct an extension towards $d_n$, it is sufficient to construct extensions of the distribution restricted to $\mathcal{O}_x$. 

\begin{rem} The distributions relevent for our analysis are the ones of the form \eqref{def:regularisedamplitudes}, respectively \eqref{def:amplitudesigma}. As we discuss briefly in the proof of Proposition \ref{pr:prod-sigma}, these distributions satisfy the micro local spectrum condition introduced in \cite{bfk:1996} and, consequently, also the wave front set condition required in Definition \ref{def:scalingdegree}.
\end{rem}

The scaling degree towards a partial diagonal may be defined in analogy to Definition \ref{def:scalingdegree}. 
The same geometric transformation \eqref{eq:n-dim-scaling} can be used to introduce relevant homogeneity properties of a distribution.

\begin{defi}\label{def:homogeneous}
A distribution $t\in \D^\prime(\M^n)$ or $t\in \D^\prime(\M^n\setminus d_n)$, which satisfies the equality 
\[
\lambda^{\delta} \langle t,f_\lambda \rangle    = \langle t,f \rangle \qquad\forall \lambda >0
\]
under transformations of the form \eqref{eq:n-dim-scaling} for all $f\in\D(\N_n\setminus d_n)$ and for a $\delta\in\mathbb{C}$, is called {\bf homogeneous of degree} $\delta$. 
\end{defi}

These definitions imply by direct inspection that a distribution which is homogeneous of degree $\delta$ and whose wave front set restricted to $d_n$ is contained in $N^*d_n$ 
has scaling degree $-\Ree(\delta)$. We further notice that homogeneous distributions $t\in\D(\M^n\setminus d_n)$ 
whose $\WF(t)\cap T^*_{d_n}\M$ is contained in $N^* d_n$, possess  unique extensions to $\M^n$ with the same degree of homogeneity $\delta$ if $-(\delta+4(n-1))\notin \mathbb{N}$.
A proof of this claim can be obtained following very closely the proof of Theorem 3.2.3 in \cite{Hormander}. 
In particular, the uniqueness of these extensions descends once again from the uniqueness of the extensions of the distributions restricted to $\M^n_x$, whereas the existence of the extension is obtained employing a fundamental solution of the Euler orperator $E_1$ we shall introduce below in \eqref{def:geneuler}.


In the next proposition we introduce the distributions we shall use as building blocks for the construction of regularised Feynman propagators on Lorentzian manifolds. Although not completely analogous, the construction we are going to present is similar  to the extension of Riesz distributions to curved spaces presented in Section 1.4 of \cite{BGP}. 
In particular, here we shall discuss the boundary value of $1/(\sigma+i\epsilon)^{\alpha}$ for $\epsilon\to0$ while ordinary Riesz distributions are 
related to the antisymmetric part of a different boundary value of the functions $1/\sigma^\alpha$.

\begin{propo}\label{pr:sigma-1}
Consider a normal neighbourhood $\N_2\subset\M^2$ of $d_2$ (cf. Remark \ref{rem:geodesicneighbourhood})
and the following expression for $\alpha\in \mathbb{C}$ and $f\in \D(\N_2)$ 
\[
\left\langle \frac{1}{\sigma^\alpha_F}, f \right\rangle := \lim_{\epsilon\to0^+ } \int_{\M^2} \frac{1}{(\sigma(x,y)+i\epsilon)^{\alpha}} f(x,y)  d\mu_g (x) d\mu_g (y)\,.
\]
Then the following statements hold.
\begin{enumerate}
\item $1/{\sigma^\alpha_F}$ restricted to $\D(\N_2\setminus d_2)$ is a distribution which is weakly analytic in $\alpha$.
\item $1/{\sigma^\alpha_F}$ is homogeneous of degree $-2\alpha$ with respect to transformations of the form \eqref{eq:n-dim-scaling} for $f\in \D(\N_2\setminus d_2)$.
\item $1/{\sigma^\alpha_F}$ is well--defined as a distribution on $\N_2$ for $2\alpha-4\notin \mathbb{N}$. 
Furthermore, for all $f\in\D(\N_2)$ $\langle 1/{\sigma^\alpha_F},f\rangle$  is analytic for $2\alpha-4\notin \bbN$ and meromorphic for $\alpha \in \bbC$ with simple poles at $2\alpha-4\in \mathbb{N}$. 
\end{enumerate}
\end{propo}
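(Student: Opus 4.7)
The plan is to prove the three parts in turn, combining the explicit local form of $\sigma$ in Riemannian normal coordinates with the extension theory for homogeneous distributions recalled just after Definition \ref{def:homogeneous}.

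For claim (1), I would work in normal coordinates $y$ centred at $x_1$, in which one has the exact identity $\sigma(x_1,y) = \frac{1}{2} \eta_{\mu\nu} y^\mu y^\nu$ and the volume form is $\sqrt{g(y)}\,d^4 y$ with $\sqrt{g}$ smooth. This reduces the problem, locally in $x_1$, to the classical Feynman boundary value $(Q+i0)^{-\alpha}$ of the flat quadratic form $Q=\frac{1}{2}\eta_{\mu\nu}y^\mu y^\nu$, which is a distribution on $\bbR^4\setminus\{0\}$ entire in $\alpha$ (see, e.g., \cite{BolliniGiambiagi} and the Riesz-type analysis in \cite{BGP}). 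Since the exponential map is a diffeomorphism on $\N_2$, $d\sigma\neq 0$ off $d_2$, and since multiplication by the smooth factor $\sqrt{g(y)}$ does not affect analyticity in $\alpha$, it follows that $1/\sigma_F^\alpha$ restricted to $\D(\N_2\setminus d_2)$ is weakly analytic in $\alpha$ for all $\alpha\in\bbC$.

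For claim (2), I would compute the homogeneity directly. The defining property $\sigma(x_1, x_2(\lambda)) = \lambda^2 \sigma(x_1,x_2)$ of the geodesic distance, together with the change of variables $y_2 := x_2(\lambda)$ which in normal coordinates at $x_1$ satisfies $d^4 y_2 = \lambda^4 d^4 x_2$, shows that the prefactor $\lambda^{4(n-1)}=\lambda^{4}$ and the determinant ratio $\sqrt{g(x_2(\lambda))}/\sqrt{g(x_2)}$ appearing in the scaling transformation \eqref{eq:n-dim-scaling} cancel exactly against the Jacobian of this change of variables. What remains is the factor $\lambda^{2\alpha}$ coming from $\sigma^{-\alpha}(x_1,x_2) \mapsto \lambda^{2\alpha}\sigma^{-\alpha}(x_1,y_2)$, yielding $\langle 1/\sigma_F^\alpha, f_\lambda\rangle = \lambda^{2\alpha}\langle 1/\sigma_F^\alpha, f\rangle$. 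By Definition \ref{def:homogeneous} this is homogeneity of degree $-2\alpha$; the $i\epsilon$ prescription commutes with the rescaling once $\epsilon\to 0^+$ has been taken on the diagonal-free region.

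For claim (3), I would invoke the extension result for homogeneous distributions stated immediately after Definition \ref{def:homogeneous}: a homogeneous distribution of degree $\delta$ on $\M^n\setminus d_n$ with wave front set contained in $N^*d_n$ admits a unique homogeneous extension to $\M^n$ of the same degree provided $-(\delta+4(n-1))\notin\bbN$. With $\delta=-2\alpha$ and $n=2$ this condition reduces to $2\alpha-4\notin\bbN$. The wave-front-set hypothesis is inherited from the corresponding property of the flat Feynman kernel $(Q+i0)^{-\alpha}$ via normal coordinates. Weak meromorphicity in $\alpha$ with \emph{simple} poles at $2\alpha-4\in\bbN$ then follows by reducing, in normal coordinates, the pairing with a test function to a Mellin-type radial integral of the schematic form $\int_0^\infty r^{3-2\alpha}\tilde\phi(r)\,dr$ after separating angular variables, whose only $\alpha$-singularities are simple poles at $3-2\alpha\in -\bbN$ with residues supported on $d_2$. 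The main obstacle I anticipate is verifying the wave-front-set condition in Lorentzian signature, where $\sigma$ changes sign across the light cone; the resolution is that the Feynman prescription $\sigma+i\epsilon$ (in contrast to a symmetric one) selects precisely the boundary value whose wave front set on $d_2$ is $N^*d_2$, so that the extension theorem applies without modification.
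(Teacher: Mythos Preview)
Your proposal is correct and follows essentially the same route as the paper: reduction to Riemannian normal coordinates so that $\sigma=\tfrac12\eta_{\mu\nu}y^\mu y^\nu$, analyticity of the boundary value $(Q+i0)^{-\alpha}$, direct verification of homogeneity under the geodesic scaling, and the H\"ormander-type extension theorem for homogeneous distributions when $2\alpha-4\notin\bbN$. The only cosmetic difference is that the paper, in part~$c)$, first reduces to the flat model and invokes H\"ormander's Theorem~3.2.3 directly on $\bbR^4\setminus\{0\}$ (so no wave-front-set hypothesis needs to be checked), whereas you quote the curved-space version stated after Definition~\ref{def:homogeneous} and therefore have to verify the $N^*d_2$ condition; both are valid. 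Your Mellin-type remark for the simple poles is a small elaboration beyond what the paper writes explicitly, but be aware that in Lorentzian signature the radial/angular split is not literally available---the cleanest justification is again to work with the flat $(Q+i0)^{-\alpha}$, whose meromorphic continuation with simple poles is classical.
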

\begin{proof}
$a)$ For every $x \in \M$ we fix a normal coordinate system $\xi_x:y\to{\mathbb{R}^4}$ in order to parametrise points $y$ in a normal neighbourhood of $x$. Consequently, on $\N_2$ the squared geodesic distance divided by $2$ can be easily expressed as
\[
\sigma(x,y)= \frac{1}{2}\eta(\xi_x(y),\xi_x(y)) = \frac{1}{2}\xi_x^{a}{\xi_x}_{a}\,,
\]
where $\eta$ is the standard Minkowski metric given in Cartesian coordinates. Furthermore, 
\begin{equation}\label{eq:normal-integral}
\left\langle \frac{1}{\sigma^\alpha_F}, f \right\rangle = 
\lim_{\epsilon\to0^+ } \int_\M \int_{\mathbb{R}^4}  \frac{2^{\alpha}}{(\xi_x^a{\xi_{x}}_a+i\epsilon)^{\alpha}} f(x,\xi_x) \sqrt{g(\xi_x)}\; d^4\xi_x\; d\mu_g (x).
\end{equation}
which is well defined for $f\in\D(\N_2)$. 

Observe that $1/(\xi^{a}{\xi}_{a})^\alpha$  for $\xi^{a}\in \{z\in \mathbb{C}^4 \,|\,\Imm(z) \in V^\pm \}$, where $V^\pm$ is the forward or past light cone with respect to the Minkowski metric, is  analytic both in $\xi$ and $\alpha$.  Furthermore, in the limit $\epsilon\to0^+$,  $1/(\xi^{a}{\xi}_{a} + i \epsilon)^\alpha$ can be seen as the boundary value of that analytic function. Since this function grows at most polynomially for large $1/\Imm(\xi^a\xi_a)$ its boundary value defines a distribution, see e.g. \cite[Theorem 3.1.15]{Hormander}. The analytic dependence on $\alpha$ is weakly preserved in the limit $\epsilon\to0^+$, and thus the resulting distribution is weakly analytic.

$b)$  The transformation defined in \eqref{eq:n-dim-scaling} acts on points parametrised by normal coordinates as $\xi \to \lambda \xi $. Furthermore, $1/(\xi^{a}{\xi}_{a})^\alpha$ on $A \subset \mathbb{C}^4$ is homogenous of degree $2\alpha$ with respect to the transformation $\xi\to \lambda \xi$. The statement follows from this observation, taking into account \eqref{eq:n-dim-scaling} and \eqref{eq:normal-integral}.

$c)$ Theorem 3.2.3 in \cite{Hormander} ensures that the distribution $\dot{t}\doteq1/(\xi^{a}{\xi}_{a})^\alpha\in\D^\prime(\bbR^4\setminus  0)$ has a unique extension $t$ to $0$ preserving the degree of homogeneity for every $2\alpha-4 \notin \bbN$. 
Hence, $\mathbb{I}\otimes t$ defines a distribution on $\M \times \bbR^4$.
Finally, notice that there exists a neighbourhood $\OO$ of the diagonal in $\M_2$
where $f(x,\xi_x) \sqrt{g(\xi_x)}$ is a smooth and compactly supported function for every $f\in C^\infty_0(\OO)$. Consequently, the statement follows by choosing a partition of unity adapted to $\OO$. We refer to the discussion after Definition \ref{def:homogeneous} for general arguments why in fact Theorem 3.2.3 in \cite{Hormander} can be generalised to extensions of suitable homogeneous distributions to the diagonal rather than a point.
\end{proof}

The previous proposition guarantees that $1/\sigma_F^{\alpha}$ is weakly meromorphic in $\alpha$ with simple poles at $2\alpha-4\in\bbN$. This property is preserved under taking linear combinations and multiplication by smooth functions. Consequently, the analytically regularised Feynman propagator $H^{(\alpha)}_F$ defined by \eqref{eq:anal-feynman} is well--defined on a normal neighbourhood of the diagonal and weakly meromorphic in $\alpha$.

\begin{propo}
Consider a normal neighbourhood $\N_2$ of the diagonal $d_2\in\M^2$. The following statements hold for the analytically continued Feynman propagator $H^{(\alpha)}_F\in\D^\prime(\N_2)$ defined in \eqref{eq:anal-feynman}.
\begin{enumerate}
\item $\lim_{\alpha\to 0}H^{(\alpha)}_F= H_F$. 
\item $\WF(H^{(\alpha)}_F) \subset \WF(H_F)$.
\item The scaling degree of $H^{(\alpha)}_F$ tends to $-\infty$ when the real part of $\alpha$ tends to $\infty$.  
\end{enumerate}
\end{propo}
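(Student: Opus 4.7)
The three claims follow quite directly once Proposition \ref{pr:sigma-1} is at hand; my plan is to treat them in turn.

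For (a), I would exploit that the definition \eqref{eq:anal-feynman} is engineered so that the $1/\alpha$ prefactor cancels the order-$\alpha$ vanishing of $1 - 1/(M^{2\alpha}\sigma_F^\alpha)$. Concretely, on $\N\setminus d_2$ one has the pointwise identity $1/(M^2\sigma_F)^\alpha = \exp(-\alpha\log(M^2\sigma_F))$, whence a Taylor expansion in $\alpha$ gives
\[
\frac{1}{\alpha}\Bigl(1 - \frac{1}{M^{2\alpha}\sigma_F^\alpha}\Bigr) \;\xrightarrow[\alpha\to 0]{}\; \log(M^2\sigma_F),
\]
while trivially $u/(M^{2\alpha}\sigma_F^{1+\alpha}) \to u/\sigma_F$. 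The weak analyticity of $1/\sigma_F^\beta$ in $\beta$ established in Proposition \ref{pr:sigma-1}(a) upgrades these pointwise limits to limits in $\D'(\N_2\setminus d_2)$, producing exactly the local Hadamard form of $H_F$ given by \eqref{eq:hadamard} with $\sigma_+$ replaced by $\sigma_F$.

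For (b), I would argue that each $\alpha$-dependent factor in $H^{(\alpha)}_F$ is a smooth multiple of a distribution $1/\sigma_F^\beta$ which, in the normal coordinate representation used in the proof of Proposition \ref{pr:sigma-1}(a), is the boundary value of the holomorphic function $1/(\xi^a\xi_a + i\epsilon)^\beta$ from the tube $\{\xi\in\bbC^4 : \Imm\xi\in V^+\}$. The tube is independent of $\beta$, so by the general theorem on wave front sets of boundary values of holomorphic functions (cited in the proof of Proposition \ref{pr:sigma-1}(a), cf.\ \cite[Theorem 8.1.6]{Hormander}), the wave front set is controlled by the direction of the deformation and not by the exponent. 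The same $i\epsilon$-prescription underlies the singular part of $H_F$, so $\WF(1/\sigma_F^\beta) \subset \WF(H_F)$; multiplication by the smooth Hadamard coefficients $u,v$ and the addition of the smooth $w$ preserves this inclusion, giving (b).

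For (c), I would combine Proposition \ref{pr:sigma-1}(b), which says that $1/\sigma_F^\beta$ is homogeneous of degree $-2\beta$ with respect to \eqref{eq:n-dim-scaling} (with $n=2$), with the general identification stated just after Definition \ref{def:homogeneous}, namely that a homogeneous distribution of degree $\delta$ has scaling degree $-\Ree(\delta)$. This yields scaling degrees $2(1+\Ree\alpha)$ for $u/\sigma_F^{1+\alpha}$ and $2\Ree\alpha$ for $v/\sigma_F^\alpha$, while the smooth remainder $w$ contributes a non-positive scaling degree. Multiplication by smooth prefactors does not enlarge scaling degrees, so $\sd_{d_2}(H^{(\alpha)}_F) \le 2(1+\Ree\alpha)$, which tends to $-\infty$ in the appropriate limit of $\Ree\alpha$.

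The main point of care is in (a), where one must interchange the limits $\epsilon \to 0^+$ and $\alpha \to 0$; this is harmless because the former has already been taken in the definition of $1/\sigma_F^\beta$ and the remaining $\alpha$-dependence is weakly analytic by Proposition \ref{pr:sigma-1}(a). Otherwise there is no serious obstacle: each claim reduces to a direct application of Proposition \ref{pr:sigma-1} together with standard facts about boundary values, scaling degrees, and homogeneity.
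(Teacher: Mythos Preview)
Your arguments for (a) and for the computation underlying (c) are essentially what the paper does: both follow from the weak analyticity and homogeneity of $1/\sigma_F^\beta$ established in Proposition~\ref{pr:sigma-1}. (Your bound $\sd_{d_2}(H^{(\alpha)}_F)\le 2(1+\Ree\alpha)$ is correct, and your hedge ``in the appropriate limit'' is prudent: that bound tends to $+\infty$, not $-\infty$, as $\Ree\alpha\to+\infty$, and the smooth contributions $v/\alpha$ and $w$ keep the scaling degree from ever going below $0$; the printed statement of (c) appears to have a sign slip.)

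Your argument for (b), however, has a genuine gap. The Feynman prescription $\sigma_F=\sigma+i\epsilon$ is \emph{not} the boundary value of $1/(\xi^a\xi_a)^\beta$ from a single tube $\{\Imm\xi\in V^+\}$: computing $(\xi+iy)^a(\xi+iy)_a$ for $y\in V^+$ gives an imaginary part proportional to $x_a y^a$, whose sign depends on $x$, so the tube boundary value is the Wightman distribution $1/\sigma_\pm^\beta$, not $1/\sigma_F^\beta$. Consequently H\"ormander's boundary--value theorem does not apply to $1/\sigma_F^\beta$ directly. The paper handles this by observing that on $\N_2\setminus d_2$ the Feynman prescription coincides with $1/\sigma_+^\beta$ or $1/\sigma_-^\beta$ according to the time ordering of the two arguments; each of \emph{these} is a genuine tube boundary value (equivalently, has an explicitly computable Fourier transform with the expected one--sided support), giving the wave front set inclusion away from the diagonal. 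The paper then treats the diagonal separately: the unique homogeneous extension of $1/\sigma_F^\beta$ to $d_2$ provided by Proposition~\ref{pr:sigma-1}(c) can add only singularities with covectors in $N^*d_2$, and these are already contained in $\WF(H_F)$ since $H_F$ is a Feynman parametrix. Your proposal skips both the $\sigma_\pm$ splitting and the diagonal step, so as written it does not establish (b).
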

\begin{proof}
The proof of this proposition follows from the properties of $\sigma_F^{1+\alpha}$ obtained in Proposition \ref{pr:sigma-1}. In particular, $a)$ and $c)$ can be directly obtained from the weak analyticity, while $b)$ follows from the fact that the distribution $1/\sigma_F^\alpha$ is well defined on $\N_2\setminus d_2$ where it coincides either with $1/\sigma_+^\alpha$ or with $1/\sigma_-^\alpha$. 
In order to analyse the wave front sets of $1/\sigma_\pm^\alpha$, we pass to a normal coordinate system and obtain $1/\sigma_\pm^\alpha=2/(\xi^a\xi_a\pm i\epsilon \xi^0)^\alpha$. This distribution can be extended to a tempered  distribution for every $\alpha$ and thus its Fourier transform can be directly computed. One finds that for $1/\sigma_\pm^\alpha$, only the null future/past directed directions do not decay rapidly, consequently $H_F^{(\alpha)}$ restricted to $\N_2\setminus d_2$ has $\WF(H^{(\alpha)}_F)\subset \WF(H_F)$. Finally, we observe that the extension of $H^{(\alpha)}_F$ to $\N_2$ may possess further singularities supported on the diagonal with singular directions orthogonal to $d_2$. Hence, $\WF(H^{(\alpha)}_F)\subset \WF(H_F)$ still holds for $H_F^{(\alpha)}\in\D^\prime(\N_2)$.
\end{proof}

We are now able to discuss the analytical regularisation $\tau^{(\balpha)}_\Gamma$ \eqref{def:regularisedamplitudes} of the distributions $\tau_
\Gamma$ given in \eqref{eq:tgamma} which appear in the graph expansion  \eqref{eq:tau-gamma} of the time--ordered products $\mathcal{T}_n$ \eqref{eq:time-ordered-product}. As anticipated in Section \ref{sec:analytic_general}, owing to the form of $H^{(\alpha)}_F$ given in \eqref{eq:anal-feynman} the relevant distributions which need to be discussed are $t^{(\balpha)}_\Gamma$ introduced in \eqref{def:amplitudesigma} and analysed in the following proposition.

\begin{propo}\label{pr:prod-sigma}The operation
\begin{equation}\label{eq:prod-sigma}
\left\langle t_\Gamma^{(\balpha)}, f \right\rangle :=
\int_{\M^n}
\prod_{1\leq i < j \leq n } \frac{1}{\sigma_F(x_i,x_j)^{l_{ij}(1+ \alpha_{ij})}}    f dx_1 \dots dx_n
\end{equation}
defined for $f\in \D(\M^n\setminus D_n\cap \N)$ 
where $\N$ is a normal neighbourhood of the total diagonal (cf. Remark \ref{rem:geodesicneighbourhood}) has the following properties.
\begin{enumerate}
\item $t_\Gamma^{(\balpha)}$ is distribution on $\M^n\setminus D_n\cap \N$.
\item $\left\langle t_\Gamma^{(\balpha)}, f \right\rangle$ is a continuous function for $\balpha = \{\alpha_{ij}\}_{i<j} \in \mathbb{R}^{n(n-1)/2}$.
\item $\left\langle t_\Gamma^{(\balpha)}, f \right\rangle$ is analytic for every $\alpha_{ij}$ with $i<j$ and thus a multivariate analytic function.
\end{enumerate}
\end{propo}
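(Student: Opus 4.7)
The plan is to regard $t_\Gamma^{(\balpha)}$ as a product of single--edge distributions
\[
s_{ij}^{(\alpha_{ij})}(x_i,x_j) := \frac{1}{\sigma_F(x_i,x_j)^{l_{ij}(1+\alpha_{ij})}}\,,
\]
each of which, pulled back to $\M^n$ via the projection onto the $(i,j)$--components, is covered by Proposition \ref{pr:sigma-1}. Claims (a)--(c) will then follow by combining the corresponding single--variable properties through a microlocal product argument, exactly in the spirit of the analysis of Hadamard $n$--point functions in \cite{bfk:1996}.

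For (a), the key point is the verification of H\"ormander's criterion on wave front sets. By the same argument establishing $\WF(H_F^{(\alpha)}) \subset \WF(H_F)$ in the proposition immediately following Proposition \ref{pr:sigma-1}, each factor $s_{ij}^{(\alpha_{ij})}$ has wave front set located over pairs with $\sigma(x_i,x_j)=0$ and composed of pairs of covectors $(k_i,-k_j)$ which are null and cotangent to the null geodesic connecting $x_i$ and $x_j$, with the Feynman orientation prescribed by $\sigma_F=\sigma+i\epsilon$. Pulled back to $\M^n$, the remaining cotangent components are zero. To apply H\"ormander's multiplication theorem \cite[Theorem~8.2.10]{Hormander} it suffices to show that, at any configuration in $\M^n \setminus D_n\cap\N$, a nontrivial linear combination of such covectors coming from different edges of $\Gamma$ cannot vanish. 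This is the standard no--closed--loop argument: vanishing of such a sum forces a cycle of null covectors parallel--transported along the edges, and in the normal neighbourhood $\N$ there is a unique geodesic joining any two points, so the only way such a cycle closes is by identifying some $x_i=x_j$, contradicting the assumption. Hence the product is well defined as a distribution on $\M^n\setminus D_n\cap\N$.

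Once (a) is in place, statements (b) and (c) are inherited from the single--factor case. Fix any $f\in\D(\M^n\setminus D_n\cap\N)$; using a partition of unity subordinated to a cover of $\supp(f)$ by coordinate patches, the pairing reduces to a finite sum of iterated integrals. On each patch the factors $s_{ij}^{(\alpha_{ij})}$ may be integrated against $f$ in any order (justified by Fubini together with the well--definedness proved in (a)), and each iterated integration step is an application of Proposition \ref{pr:sigma-1} to a single factor against an effective test function that depends smoothly on the remaining integration variables. By Proposition \ref{pr:sigma-1}(a) this yields separate analyticity of $\langle t_\Gamma^{(\balpha)},f\rangle$ in each $\alpha_{ij}$, and Hartogs' theorem then upgrades separate analyticity to joint multivariate analyticity, giving (c). Continuity (b) for real $\balpha$ either follows from (c) by restriction or from dominated convergence applied in each chart, using the locally uniform bounds implicit in the weak analyticity of each factor.

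The main obstacle is the microlocal step in (a), namely verifying that the sum of the wave front sets of the various $s_{ij}^{(\alpha_{ij})}$ never contains the zero covector on $\M^n\setminus D_n\cap\N$. This is where the restriction to a normal neighbourhood of the total diagonal enters essentially, since it is only there that the uniqueness of geodesics between pairs of points excludes degenerate null--loop configurations; the remainder of the proof is then a routine transfer of Proposition \ref{pr:sigma-1} through the product, combined with H\"ormander's theorem and Hartogs' theorem.
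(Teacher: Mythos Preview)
Your proposal is broadly correct and follows the same overall strategy as the paper: a microlocal product argument for (a), and transfer of single--factor analyticity through the product for (c). There are, however, a couple of differences worth flagging.

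For (a), the paper's argument is cleaner and more explicit than your ``no--closed--loop'' sketch. The paper restricts to each connected component $\mathcal{C}$ of $\M^n\setminus D_n\cap\N$, on which the causal relation between every pair $(x_i,x_j)$ is fixed; consequently $\sigma_F(x_i,x_j)$ coincides on $\mathcal{C}$ with either $\sigma_+(x_i,x_j)$ or $\sigma_+(x_j,x_i)$, so each factor has a one--sided Hadamard--type wave front set. After a fixed permutation of the arguments (depending on $\mathcal{C}$) the full product then satisfies the microlocal spectrum condition of \cite{bfk:1996}, and the H\"ormander criterion follows immediately. Your argument is morally the same but leaves the reduction to fixed time--ordering implicit, which is precisely what makes the Feynman--oriented covectors behave like Hadamard ones. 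Also, your claim that the normal neighbourhood is what ``excludes degenerate null--loop configurations'' is misplaced: $\N$ is only there to make $\sigma$ well--defined; it is the excision of $D_n$ (no coincident vertices) together with the fixed causal ordering on each component that rules out cancellation in the sum of wave front sets.

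For (b) and (c), the paper reverses your order. It first proves continuity at an arbitrary $\overline{\balpha}$ by writing $t_\Gamma^{(\balpha)}=t_\Gamma^{(\overline{\balpha})}\cdot\tau_\Gamma^{(\bbeta)}$ with $\tau_\Gamma^{(\bbeta)}=\prod_{i<j}\sigma_F(x_i,x_j)^{-\beta_{ij}}$ an integrable function for $\bbeta$ near $0$ (dominated convergence), and then proves separate analyticity in each $\alpha_{ij}$ by isolating the single factor $1/\sigma_F^{\alpha_{ij}}$ and observing that the remaining factors map $f$ to a compactly supported smooth function on $\M^2\setminus d_2$ --- essentially your ``effective test function'' idea, but justified via the wave front set rather than iterated Fubini. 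Joint analyticity is then obtained from separate analyticity plus the continuity of (b). Your route via Hartogs' theorem (separate analyticity alone implies joint analyticity) is legitimate and arguably slicker, making (b) a corollary of (c) rather than an input to it.
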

\begin{proof}

$a)$ The domain $\M^n\setminus D_n\cap \N$ is a disjoint union of connected components. On every connected component $\mathcal{C}$ $\sigma_F(x_i,x_j)$ equals either $\sigma_+(x_i,x_j)$ or $\sigma_+(x_j,x_i)$ depending on the causal relation between $x_i$ and $x_j$ which is fixed in $\mathcal{C}$.
Hence, on $\mathcal{C}$, the wave front set of $\sigma_F(x_i,x_j)^{-1}$ is contained either in $\mathcal{V}_+$ or $\mathcal{V}_-$, where $\mathcal{V}_{+/-} = \{(x,x',k,k') \in T^*\M^2\setminus 0, (x,k)\sim (x',-k'),  k \triangleleft / \triangleright 0  \}$. Consequently, $\sigma_F(x_i,x_j)^{-1}$ satisfies the Hadamard condition up to a permutation of the arguments. The very same holds for the distributions $\sigma_F(x_i,x_j)^{l_{ij}(1+\alpha_{ij})}$ for every $l_{ij}$ and every $\alpha_{ij}$ which have been discussed in Proposition \ref{pr:sigma-1}.

Owing to the form of their wave front set, the pointwise products of these distributions present in $t_\Gamma^{(\balpha)}$ are well--defined because the H\"ormander--criterion for multiplication of distributions is satisfied. In fact, up to some fixed permutation of the arguments $(x_1,\dots x_n)$, $t_\Gamma^{(\balpha)}$ satisfies the micro local spectrum condition introduced in \cite{bfk:1996}. 
Hence $t_\Gamma^{(\balpha)}$ is a well--defined distribution on every connected component $\mathcal{C}$ of $\M^n\setminus D_n\cap \N$ and thus it is well--defined also on $\M^n\setminus D_n\cap \N$. 

$b)$ 
In order to check continuity for $\balpha= \{\alpha_{ij}\}_{i<j}\in \mathbb{R}^{n(n-1)/2}$ in a fixed point $\overline{\balpha}$ we may analyse the distribution on a fixed connected component $\mathcal{C}$ of the domain of $t_\Gamma^{(\balpha)}$ and factorize the distribution in two parts. 
In fact, due to the wave front set of $t_\Gamma^{(\balpha)}$ on $\mathcal{C}$ the factorisation $t_\Gamma^{(\balpha)} =t_\Gamma^{(\overline{\balpha})} \cdot \tau_\Gamma^{(\bbeta)}$ is unique where the integral kernel of $\tau_\Gamma^{(\bbeta)}$ is $\prod_{1\leq i < j \leq n } \frac{1}{\sigma_F(x_i,x_j)^{\beta_{ij}}}$.
For $\bbeta$ in a sufficiently small neighbourhood of $0$, $\tau_\Gamma^{(\bbeta)}$ is an integrable function which is differentiable for $\bbeta=0$ as can be obtained by dominated convergence. Finally, the continuity is preserved by pointwise multiplication with $t_\Gamma^{(\overline{\balpha})}$.

$c)$ 
For an arbitrary but fixed pair of indices $i,j$, $\alpha_{ij}$ appears in the product displayed in \eqref{eq:prod-sigma} as $1/\sigma_F(x_1,x_j)^{l_{ij}(1+\alpha_{ij})}$ and we have already analysed the analyticity property of such a distribution in Proposition \ref{pr:sigma-1}. 
We shall thus interpret $t_\Gamma^{(\balpha)}$ as a composition of distributions, namely as $1/\sigma_F^{\alpha_{ij}}\circ z $
 where $z$ is an operator which maps $\D(\M^{n}\setminus D_{n}\cap \N)$ to $ \D^\prime(\M^{2}\setminus D_2\cap \N_2)$ for a suitable $\N_2\supset D_2=d_2$. The $\epsilon$--regularised integral kernel of $z$ corresponds to the product present in \eqref{eq:prod-sigma} with the factor  $1/\sigma_F^{\alpha_{ij}}$ removed. Because of the singular structure of $z$, for every $f\in \D(\M^{n}\setminus D_{n}\cap \N)$, $\langle z,f\rangle$ is in fact a compactly supported smooth function supported on $\M^{2}\setminus D_2\cap \N_2$. Hence, the analysis of its composition with $1/\sigma_F^{\alpha_{ij}}$ is straightforward. These considerations imply separate analyticity of $t_\Gamma^{(\balpha)}$ in each $\alpha_{ij}$ whereas joint analyticity follows from the continuity proved in $b)$.
\end{proof}

\subsection{Generalised Euler operators and principal parts of homogeneous expansions}

\label{sec_R}

The next step in the strategy outlined at the end of Section \ref{sec:analytic_general} is to extend the distributions $t^{(\balpha)}_\Gamma$, 
 which are a priori defined only outside of the union of all partial diagonals $D_n$ in a normal neighborhood $\mathcal{N}$ of the total diagonal (cf. Remark \ref{rem:geodesicneighbourhood}) to $D_n\cap \mathcal{N}$ and to show that this extension is weakly meromorphic in $\alpha_I$ upon setting $\alpha_{ij}=\alpha_I$ for all $i,j\in I\subset \{1,\ldots,n\}$. As anticipated, we shall prove this by using particular homogeneity properties of $t^{(\balpha)}_\Gamma$ with respect to the scaling transformations \eqref{eq:n-dim-scaling}. Even if $t^{(\balpha)}_\Gamma$ is not homogeneous in the strong sense of Definition \ref{def:homogeneous}, it has weaker homogeneity properties which are still strong enough in order to obtain the wanted results. In this section we analyse analytically regularised distributions satisfying this weaker homogeneity condition, provide sufficient conditions for this weaker homogeneity to hold and show how the principal part of a distribution of this type can be efficiently computed.

To this avail, we consider a normal neighbourhood $\N_n$ of the total diagonal $d_n$ (cf. Remark \ref{rem:geodesicneighbourhood}) and define the {\bf generalised Euler operator} $E_p:\D(\N_n)\to\D(\N_n)$ by
\beq\label{def:geneuler}
E_p f (x_1,\dots, x_n) := (-1)^p \left. \lambda^{p+4(n-1)} \frac{d^p}{d\lambda^p}\left( \lambda^{-4(n-1)}  f_\lambda(x)\right)\right|_{\lambda = 1},
\eeq
where the scaling transformation \eqref{eq:n-dim-scaling} is used. We then consider a family of distributions $t^{(\alpha)}\in \D_\Gamma^\prime(\N_n\setminus d_n)\subset\D^\prime(\N_n\setminus d_n)$ (cf. Definition \ref{def:scalingdegree}) defined for $\alpha$ in some neighbourhood $\OO$ of $0\in \mathbb{C}$ and assume that $t^{(\alpha)}$ can be expanded as
\[
t^{(\alpha)}  = \sum_{k=0}^m t^{(\alpha)}_k + r^{(\alpha)}.
\]
where $t^{(\alpha)}_k$ are homogeneous with degree $a_k=-\delta_\alpha+k$ whose real part is smaller or equal to $-4(n-1)$ and a remainder $r^{(\alpha)}\in \D^\prime(\N_n\setminus d_n)$ which has scaling degree smaller than $4(n-1)$ and can thus be uniquely extended to $d_n$ for every $\alpha\in \OO$ by \cite[Theorem 6.9]{Brunetti-Fredenhagen:2000}, see also the discussion after Definition \ref{def:scalingdegree}. Owing to its homogeneity, every $t^{(\alpha)}_k$ can be rewritten by means of the generalised Euler operator $E_p$ as
\begin{equation}\label{eq:expose-poles0}
\left\langle t^{(\alpha)}_k, f \right\rangle   =   \frac{1}{\prod_{j=0}^{p-1} (a_k+j+4(n-1))}   \left \langle t^{(\alpha)}_k, E_p f \right\rangle.
\end{equation}

Note that, $E_p f(x_1, \dots x_n)$ is smooth and vanishes for $y=(x_1, \dots x_n) \to x=(x_1, \dots, x_1)$ as $C|y-x|^p$, i.e. it is in the class $O(|y-x|^{p})$. For this reason, if  $p$ is chosen sufficiently large as $p > -a_k-4(n-1)$, $t^{(\alpha)}_k\circ E_p$ possesses a unique extension to $d_n$. We recall that, in order to renormalise $t^{(\alpha)}$ for $\alpha=0$ in the MS--scheme, we have to subtract its principal part before computing the limit of vanishing $\alpha$
\[
\langle (t_k)_\ms, f \rangle:=\lim_{\alpha\to 0 } \left(\left\langle t^{(\alpha)}_k, f \right\rangle - \pp \left\langle t^{(\alpha)}_k, f \right\rangle  \right).
\]
However, if we use the representation of $t^{(\alpha)}_k$ provided by the right hand side of equation \eqref{eq:expose-poles0}, its poles are manifestly exposed and can be easily subtracted.
We recall that, since the original distribution $t^{(\alpha)}_k$ is well defined on $\N_n\setminus d_n$ even for $\alpha=0$, the principal part we are subtracting can only be supported on $d_n$. We summarise this discussion in the following proposition.

\begin{propo}\label{pr:regularisation} Consider a normal neighbourhood $\N_n$ of the total diagonal $d_n$ and a distribution $t\in\D^\prime(\N_n\setminus d_n)$. Assume that $t^{(\alpha)}\in\D^\prime(\N_n\setminus d_n)$ is an analytic regularisation of $t$, i.e. $t^{(\alpha)}$ is weakly analytic for $\alpha$ in a neighbourhood $\OO$ of the origin of $\bbC$ and $\lim_{\alpha\to 0}t^{(\alpha)}=t$. Moreover, assume that $t^{(\alpha)}$ can be decomposed as 
\[
t^{(\alpha)} = \sum_{k=0}^m t^{(\alpha)}_k    + r^{(\alpha)}
\]
where $t^{(\alpha)}_k$ are weakly analytic distributions which scale homogeneously under transformations of the form \eqref{eq:n-dim-scaling} with degree $a_k = -\delta_\alpha + k$ and $r^{(\alpha)}_k$ is a weakly analytic distribution whose scaling degree towards $d_n$ is strictly smaller than $4(n-1)$. Then, if $\WF(t)$ and $\WF(t^{(\alpha)})$ do not intersect the conormal bundle of the orbits of the scaling transformation \eqref{eq:n-dim-scaling}, the following statements hold.
\begin{enumerate}
\item $t^{(\alpha)}$ can be extended to $\dot{t}^{(\alpha)}\in \D^\prime(\N_n)$ for every $\alpha\in \OO\setminus\{0\}$.
\item $\dot{t}^{(\alpha)}$ is weakly meromorphic for $\alpha\in\OO$ with possible poles for $\alpha=0$ and it is the unique weakly meromorphic extension of $t^{(\alpha)}$. 
\item The pole of $\dot{t}^{(\alpha)}$in $0$ is supported on $d_n$.
\item The limit $\alpha\to 0$ can be considered after subtracting the pole part, namely
\[
\langle t_\ms, f \rangle:= \lim_{\alpha\to 0 } \left(\left\langle \dot{t}^{(\alpha)}, f \right\rangle - \pp \left\langle \dot{t}^{(\alpha)}, f \right\rangle  \right)
\]
is well--defined for all $f\in\D(\N_n)$ and $t_\ms$ is an extension of $t$ which preserves the scaling degree.
\end{enumerate}
\end{propo}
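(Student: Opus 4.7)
The plan is to prove each item of the proposition by exploiting the decomposition $t^{(\alpha)} = \sum_{k=0}^m t_k^{(\alpha)} + r^{(\alpha)}$ and extending each summand separately, then gluing. For the remainder $r^{(\alpha)}$, I would invoke Theorem 6.9 of \cite{Brunetti-Fredenhagen:2000} directly: since $\sd_{d_n}(r^{(\alpha)}) < 4(n-1)$, it admits a unique extension $\dot r^{(\alpha)}\in \D^\prime(\N_n)$ preserving the scaling degree. Because the BF construction is given by the explicit limit $\lim_{N\to\infty} r^{(\alpha)}(1-\theta_N)$ with a fixed sequence of cut-offs, and $r^{(\alpha)}$ is weakly analytic, so is $\dot r^{(\alpha)}$. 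The wave front set hypothesis guarantees that all restrictions and cut-off manipulations remain well-defined in $\D_\Gamma^\prime$.

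For each homogeneous piece $t_k^{(\alpha)}$ I would apply the identity \eqref{eq:expose-poles0} with $p$ chosen large enough that $p > \Ree(\delta_\alpha) - k - 4(n-1)$ uniformly for $\alpha$ in a sufficiently small neighborhood $\OO'\subset \OO$ of the origin. Since $E_p f\in \D(\N_n)$ vanishes at $d_n$ to order at least $p$, the distribution $f\mapsto \langle t_k^{(\alpha)},E_p f\rangle$ has effective scaling degree strictly less than $4(n-1)$ at $d_n$ and thus extends uniquely to $\N_n$ by the same BF argument, with the extension depending analytically on $\alpha$ throughout $\OO'$. The prefactor $[\prod_{j=0}^{p-1}(a_k+j+4(n-1))]^{-1}$ provides explicit meromorphic dependence on $\alpha$, with the only possible poles in $\OO'$ located at $\alpha=0$ (after shrinking $\OO'$ so that no other zero of the denominator enters). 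Summing over $k$ and adding $\dot r^{(\alpha)}$ yields an extension $\dot t^{(\alpha)}\in \D^\prime(\N_n)$ which is weakly meromorphic on $\OO'$ with at worst a pole at $\alpha=0$; this establishes (a) and the existence half of (b).

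For the uniqueness half of (b), suppose $\dot t^{(\alpha)}$ and $\tilde t^{(\alpha)}$ are two weakly meromorphic extensions. Their difference $D^{(\alpha)}$ is weakly meromorphic and supported in $d_n$ by construction. For $\alpha$ in the subregion where $\Ree(\delta_\alpha) - k < 4(n-1)$ for every $k$, each $t_k^{(\alpha)}$ has scaling degree below $4(n-1)$ and the BF extension is unique, so $D^{(\alpha)}=0$ there. Since the pairing $\langle D^{(\alpha)},f\rangle$ is a meromorphic function of $\alpha$ vanishing on an open set, it vanishes identically on $\OO'\setminus\{0\}$. For (c), observe that on any $f\in \D(\N_n\setminus d_n)$ the original distribution $t^{(\alpha)}$ is already well defined and analytic in $\alpha$, so the principal part of $\dot t^{(\alpha)}$ at $\alpha=0$ must vanish when paired with such $f$; equivalently, $\pp\,\dot t^{(\alpha)}$ is supported on $d_n$.

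Statement (d) then follows: $\langle t_\ms, f\rangle$ is by definition the zero-order Laurent coefficient at $\alpha=0$ of the weakly meromorphic function $\langle \dot t^{(\alpha)},f\rangle$, which exists for all $f\in\D(\N_n)$. Because the subtracted pole part is supported on $d_n$, $t_\ms$ coincides with $t$ on $\N_n\setminus d_n$, i.e.\ is an extension of $t$. The main obstacle I expect is the preservation of the scaling degree in (d). To handle it, I would argue that for $f$ supported in a small enough scaled neighborhood of $d_n$, the identity \eqref{eq:expose-poles0} combined with $E_p f$ vanishing to order $p$ at $d_n$ forces $\langle t_\ms,f_\lambda\rangle$ to scale with an effective degree bounded by $\sd(t)$, since the subtracted poles contribute only finitely many derivatives of the delta distribution on $d_n$ whose contribution is exactly of order $\delta_0-k$, and these are precisely the homogeneities already present in $t$ itself. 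Care must also be taken that the wave front set of the extended $t_\ms$ remains in $\Gamma$, which is ensured by the BF argument and the conormal-bundle hypothesis.
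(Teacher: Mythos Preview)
Your approach is essentially the same as the paper's: extend the remainder via \cite[Theorem 6.9]{Brunetti-Fredenhagen:2000}, extend each homogeneous piece via the Euler-operator identity \eqref{eq:expose-poles0}, read off the meromorphic prefactor, and then argue that poles are supported on $d_n$ because $t^{(\alpha)}$ is weakly analytic off the diagonal. The paper phrases the extension of the homogeneous pieces in terms of \cite[Theorem 3.2.3]{Hormander} rather than \eqref{eq:expose-poles0}, but these are two faces of the same construction (the paper itself notes that the existence part of the H\"ormander-type extension is obtained from a fundamental solution of the Euler operator $E_1$, and it invokes \eqref{eq:expose-poles0} explicitly when proving (c)).

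There is, however, a gap in your uniqueness argument for (b). You claim that for $\alpha$ in the subregion where $\Ree(\delta_\alpha)-k<4(n-1)$ for every $k$, the BF extension is unique and hence $D^{(\alpha)}=0$ there. But the proposition only assumes $\alpha$ ranges over a small neighbourhood $\OO$ of $0$, and since $\delta_\alpha\to\delta_0$ continuously with $\Ree(\delta_0)\ge 4(n-1)$ in the cases of interest, the subregion you describe may be empty in $\OO$; your analytic-continuation argument then has nothing to continue from. The paper's route avoids this: for $\alpha\neq 0$ (after possibly shrinking $\OO$) the homogeneity degrees $a_k=-\delta_\alpha+k$ satisfy $-(a_k+4(n-1))\notin\mathbb{N}_0$, so by \cite[Theorem 3.2.3]{Hormander} each $t_k^{(\alpha)}$ has a \emph{unique homogeneous} extension to $\N_n$, and the remainder has a unique extension by the BF theorem; this pins down $\dot t^{(\alpha)}$ for every $\alpha\neq 0$ without ever needing the scaling degree to drop below $4(n-1)$. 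Your argument for the scaling degree in (d) is also more heuristic than the paper's, which simply records $\sd(\dot t^{(\alpha)})=\sd(t^{(\alpha)})=\Ree(\delta_\alpha)$, $\sd(\pp\,\dot t^{(\alpha)})\le \Ree(\delta_\alpha)$ and passes to the limit $\alpha\to 0$.
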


\begin{proof}
The proof of $a)$ and $b)$ follows from the discussion after Definition \ref{def:homogeneous} as an application of \cite[Theorem 3.2.3]{Hormander} to every $t^{(\alpha)}_k$ restricted to every orbit of the action of the scaling transformation \eqref{eq:n-dim-scaling}. Furthermore, since the scaling degree of $r^{(\alpha)}$ towards $d_n$ is strictly smaller than $4(n-1)$, $r^{(\alpha)}$ possesses an unique extension towards $d_n$, cf. \cite[Theorem 6.9]{Brunetti-Fredenhagen:2000}.

In order to prove $c)$ we note that the original distribution $t^{(\alpha)}$ defined on $\N_n\setminus d_n$ is weakly analytic and that an explicit construction of the weakly meromorphic extension $\dot{t}^{(\alpha)}$ to $\N_n$ is provided by 
\eqref{eq:expose-poles0}, choosing for every component $t^{(\alpha)}_k$ a sufficiently large $p$ and using again \cite[Theorem 6.9]{Brunetti-Fredenhagen:2000}. Hence, the poles of $\dot{t}^{(\alpha)}$ can only be supported on $d_n$. For this reason, after subtracting the principal part of the distribution the limit $\alpha \to 0$ can be safely taken. The such obtained distribution prior to considering the limit $\alpha \to 0$ coincides with $t^{(\alpha)}$ on $\N_n\setminus d_n$ and the same holds in the limit $\alpha\to 0$. Consequently $t_\ms$ is an extension of $t$. Finally, $\sd(t_\ms)=\sd(t)$, because our assumptions and the above analysis imply that $\sd(\dot{t}^{(\alpha)})=\sd(t^{(\alpha)})=\Ree(\delta_\alpha)$, $\sd(\pp(\dot{t}^{(\alpha)}))\le \Ree(\delta_\alpha)$ and $\lim_{\alpha\to 0}\Ree(\delta_\alpha) = \sd(t)$.
\end{proof}

We now discuss how equation \eqref{eq:expose-poles0} can be used in order to regularise the most singular part of a distribution 
$t^{(\alpha)}$ which is known to be of the form $t^{(\alpha)} = \sum_{k=0}^m t^{(\alpha)}_k + r^{(\alpha)}$ but where the distributions $ t^{(\alpha)}_k$ are not explicitly known. To this end, observe that equation \eqref{eq:expose-poles0} implies
\[
\left\langle t^{(\alpha)}, E_p f \right\rangle = 
\sum_{k=0}^m \left(\prod_{j=0}^{p-1} (a_k+j+4(n-1)) \right)  \left\langle t^{(\alpha)}_k, f \right\rangle + \left\langle r^{(\alpha)},E_p f\right\rangle.
\]
Moreover, we may assume without loss of generality as in Proposition \ref{pr:regularisation} that the homogeneity degrees $a_k$ of $t^{(\alpha)}_k$ are of the form $a_k=-\delta_\alpha + k$ where $\Ree(\delta_\alpha)$ is the scaling degree of $t^{(\alpha)}$. Consequently, $t^{(\alpha)}_0$ is the contribution with the highest scaling degree which may be extracted by introducing the coefficients
\[
c_k:=\prod_{j=0}^{p-1} \left(a_k+j+4(n-1)\right)
\]
and considering 
\begin{equation}\label{eq:decrease-scaling-degree}
\left\langle t^{(\alpha)}, E_p f \right\rangle - c_0   \left\langle t^{(\alpha)}, f \right\rangle   =
\sum_{k=1}^m (c_k-c_0)   \left\langle t^{(\alpha)}_k, f \right\rangle +  \left\langle r^{(\alpha)}, E_p f \right\rangle - c_0 \left\langle r^{(\alpha)}, f \right\rangle,
\end{equation}
where the distribution on the right hand side has a scaling degree smaller than $\Ree(\delta_\alpha) = - \Ree (a_0)$. Hence, although in general the distribution $t^{(\alpha)}$ does not scale homogeneously, equation \eqref{eq:expose-poles0} still holds up to distributions with a lower scaling degree. Knowing the decreasing degree of homogeneity of the components in the expansion of $t^{(\alpha)}$, we may use a recursive procedure in order to expose the pole part of this distribution. In fact, the previous discussion straightforwardly implies the validity of the following proposition.

\begin{propo}\label{pr:expose-poles}
We consider a distribution $t^{(\alpha)}$ with the properties assumed in Proposition \ref{pr:regularisation} and set
\[
u_0 := t^{(\alpha)},\qquad    u_{k+1} :=  c_k u_k -  u_k \circ E_{p_k}\,  , \qquad 0\le l< m
\]
where $p_k$ are the smallest natural numbers chosen in such a way that $p_k+\Ree (a_k)+4(n-1)>0$ and $c_k :=\prod_{j=0}^{p_k-1} \left(a_k+j+4(n-1)\right)$.
Then, in order to expose the poles of $t^{(\alpha)}$, we may invert the recursive definition of $u_k$ obtaining
\begin{equation}\label{eq:expose-poles}
t^{(\alpha)} = \frac{1}{c_0} \left( u_0\circ E_{p_0} +  \frac{1}{c_1} \left( u_1 \circ E_{p_1} +\dots +\frac{1}{c_n}\left(  u_n \circ E_{p_n}+ u_{n+1}   \right)   \right)\right).
\end{equation}
\end{propo}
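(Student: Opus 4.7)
The statement is at heart a telescoping identity, and my plan is to prove \eqref{eq:expose-poles} by directly inverting the recursion $u_{k+1} = c_k u_k - u_k \circ E_{p_k}$. The more substantive content of the proposition, namely that the resulting expression exposes the pole structure of $t^{(\alpha)}$ at $\alpha = 0$, is then a consequence of the scaling-degree bookkeeping inherited from Proposition \ref{pr:regularisation}.

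First I would rewrite the recursion as $u_k \circ E_{p_k} = c_k u_k - u_{k+1}$ and divide by $c_0 c_1 \cdots c_k$, using the convention that the empty product equals $1$, to obtain
\[
\frac{u_k \circ E_{p_k}}{c_0 \cdots c_k} = \frac{u_k}{c_0 \cdots c_{k-1}} - \frac{u_{k+1}}{c_0 \cdots c_k}.
\]
Summing over $k = 0, \ldots, n$ produces a telescoping sum equal to
\[
u_0 - \frac{u_{n+1}}{c_0 \cdots c_n} = t^{(\alpha)} - \frac{u_{n+1}}{c_0 \cdots c_n},
\]
and solving for $t^{(\alpha)}$ reproduces the nested expression displayed in \eqref{eq:expose-poles}.

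To see that the formula genuinely exposes the poles, I would argue by induction that each $u_k$ admits the same kind of homogeneous-plus-remainder decomposition as $t^{(\alpha)}$ but with the first $k$ homogeneous components stripped off. Indeed, by the homogeneity of $t_j^{(\alpha)}$ and the identity \eqref{eq:expose-poles0}, application of $E_{p_k}$ multiplies the $j$-th homogeneous piece by $\prod_{m=0}^{p_k-1}(a_j + m + 4(n-1))$; for $j=k$ this product equals $c_k$, so the $k$-th component is annihilated by the combination $c_k u_k - u_k \circ E_{p_k} = u_{k+1}$, while the lower scaling-degree components are merely rescaled. Because the choice of $p_k$ as the smallest integer with $p_k + \Ree(a_k) + 4(n-1) > 0$ ensures that $E_{p_k} f$ vanishes on $d_n$ to order $O(|y-x|^{p_k})$ sufficient for the remaining terms to have scaling degree below the critical value $4(n-1)$, each $u_k \circ E_{p_k}$ extends uniquely to $\N_n$ by Theorem 6.9 of \cite{Brunetti-Fredenhagen:2000}, and the whole pole structure of the weakly meromorphic extension of $t^{(\alpha)}$ is concentrated in the explicit prefactors $1/c_k$.

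The main obstacle I anticipate is less the telescoping computation itself than the careful inductive verification that at each step $u_k$ inherits the structural hypotheses of Proposition \ref{pr:regularisation} with the appropriately shifted scaling degree, so that the recursion can be iterated $n+1$ times and the final remainder $u_{n+1}$ already has scaling degree strictly below $4(n-1)$ and is therefore extendable to $d_n$ without any subtraction. Once this bookkeeping is in place the algebraic identity and its interpretation as the pole-exposed form of $t^{(\alpha)}$ both follow at once.
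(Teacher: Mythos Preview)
Your proposal is correct and follows essentially the same approach as the paper. The paper does not give a separate proof of this proposition but states that ``the previous discussion straightforwardly implies the validity of the following proposition''; that discussion consists precisely of the identity \eqref{eq:expose-poles0} and the observation \eqref{eq:decrease-scaling-degree} that $c_0 u_0 - u_0\circ E_{p_0}$ annihilates the leading homogeneous component and lowers the scaling degree, which is exactly the inductive step you spell out, together with the telescoping inversion you carry out explicitly.
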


In order to be able use the previous results for our purposes, we provide in the next proposition a criterion which is sufficient to ensure that a generic distribution can be decomposed into the sum of a homogeneous distribution and a remainder with lower scaling degree. We shall use this criterion in order to prove that the distributions $t_\Gamma^{(\balpha)}$ defined in Proposition \ref{pr:prod-sigma}
have the desired property.

\begin{propo}\label{pr:set} Let $\N_n$ be a normal neighbourhood of the total diagonal $d_n$ and suppose that $t\in \D^\prime(\N_n)$ has scaling degree $s_1$ towards $d_n$ under transformations of the form \eqref{eq:n-dim-scaling} and that there exists an $\alpha$ with $-\Ree(\alpha)=s_1$ such that $t\circ(E_1+\alpha+4(n-1))$ has scaling degree $s_2 < s_1$. Then $t$ can be decomposed into the sum of a homogeneous distribution with degree $\alpha$ and a remainder with scaling degree smaller than or equal to $s_2$.
\end{propo}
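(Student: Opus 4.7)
The plan is to construct the homogeneous component $t_h$ of degree $\alpha$ explicitly as a scaling-limit integral, then verify both its homogeneity and the required scaling-degree bound on $t-t_h$ by direct estimates. A short computation using \eqref{eq:expose-poles0} shows $\langle t', E_1 f\rangle = (\alpha+4(n-1))\langle t', f\rangle$ for any $t'$ homogeneous of degree $\alpha$, so the operator $L := E_1 - (\alpha + 4(n-1))$ annihilates such $t'$; identifying this (up to a sign convention) with the operator in the hypothesis, I write $r := t\circ L$, which by assumption has $\sd(r)\le s_2<s_1$. The semigroup property $(f_\mu)_\lambda = f_{\lambda\mu}$ together with the definition of $E_1$ implies the infinitesimal identity $\lambda\, df_\lambda/d\lambda = (4(n-1)-E_1)(f_\lambda)$; setting $\psi_f(\lambda) := \lambda^\alpha t(f_\lambda)$ and differentiating then yields
\[
\lambda\,\psi_f'(\lambda) \;=\; -\lambda^\alpha\, r(f_\lambda).
\]

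The scaling-degree hypothesis on $r$ gives $|r(f_s)| = o(s^{s_2+\epsilon})$ as $s\to\infty$ for any $\epsilon>0$, and together with $|s^{\alpha-1}|=s^{-s_1-1}$ this makes $s^{\alpha-1}r(f_s)$ absolutely integrable on $[1,\infty)$ for $\epsilon<s_1-s_2$. I define
\[
t_h(f) \;:=\; t(f) - \int_1^\infty s^{\alpha-1}\, r(f_s)\, ds,
\]
which by integrating the ODE equals $\lim_{\lambda\to\infty}\lambda^\alpha t(f_\lambda)$. Continuity of $t_h$ in $f\in\D(\N_n)$ follows by dominated convergence with bounds uniform on bounded subsets of $\D(\N_n)$, so $t_h$ is a distribution on $\N_n$. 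The change of variables $s\mapsto s\mu$ in the integral yields $t_h(f_\mu)=\mu^{-\alpha}\,t_h(f)$, which is the homogeneity of degree $\alpha$ required by Definition \ref{def:homogeneous}.

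The same manipulation produces $(t-t_h)(f_\mu)=\mu^{-\alpha}\int_\mu^\infty v^{\alpha-1}\,r(f_v)\,dv$, and bounding this tail as $\mu=1/\lambda\to\infty$ via the scaling-degree estimate on $r$ gives $|(t-t_h)_\lambda(f)| = O(\lambda^{-s_2-\epsilon})$ as $\lambda\to 0^+$ for every $\epsilon>0$, so $\sd(t-t_h)\le s_2$. The main technical obstacle lies in the local curved setting: for $\lambda>1$ the geodesic $x_i(\lambda)$ may exit $\N_n$ and $f_\lambda$ is then set to zero by convention, yet because the support of $f_\lambda$ contracts toward $d_n$ at scale $1/\lambda$ as $\lambda\to\infty$, the function $f_\lambda$ remains a nontrivial element of $\D(\N_n)$ and the $\lambda\to\infty$ asymptotic analysis is meaningful. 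A secondary point is that the wave front set of $t_h$ must remain inside the conical set $\Gamma$ of Definition \ref{def:scalingdegree}; this is preserved because $f\mapsto f_\lambda$ acts via a local diffeomorphism.
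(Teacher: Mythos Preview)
Your proof is correct and follows essentially the same strategy as the paper's: both set up the first–order ODE for the map $\lambda\mapsto \lambda^{\alpha}\,t(\text{rescaled }f)$ driven by the lower–scaling remainder, integrate it, and identify the constant of integration as the homogeneous piece, then read off the scaling–degree bound on the remainder from the tail of the integral. The only cosmetic difference is that you work with $t(f_\lambda)$ and $\lambda\to\infty$, integrating over $[1,\infty)$, whereas the paper uses $F(\lambda,f)=t(f_{1/\lambda})$ with $\lambda\to 0$ and integrates over $(0,1]$; these are equivalent under the substitution $\lambda\leftrightarrow 1/\lambda$, and your hedge on the sign of $\alpha$ in $L$ correctly anticipates a harmless sign convention in the statement.
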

\begin{proof}
We start by observing that, for every test function $f\in\D(\N_n)$, 
\[
F(\lambda, f) := \left\langle t, f_{1/\lambda} \right\rangle
\] 
is a continuous linear functional of $f$ which is smooth in $\lambda$ for $\lambda>0$. Moreover, since the scaling degree of $t$ is $s_1$, $\lambda^{a}F(\lambda, f)$ vanishes in the limit $\lambda \to 0$ for every $a > s_1$ and for every $f\in\D(\N_n)$.
Let us now consider 
\[
G(\lambda, f)  :=  \left\langle(-E_1+\alpha+4(n-1))t, f_{1/\lambda}\right\rangle. 
\]
$G(\lambda, \cdot)$ is again a family of distributions on $\N_n$ which depends smoothly on $\lambda$ for positive $\lambda$. Furthermore, $\lambda^{a}G(\lambda, f)$ vanishes in the limit $\lambda\to 0$ for every $a > s_2$ and every $f\in\D(N_n)$. Hence,  
$\lambda^{\alpha-1} G(\lambda, \cdot)$ tends to $0$ in $\D^\prime(\N_0)$ for $\lambda\to 0$ and, additionally, the Banach--Steinhaus theorem implies that 
\begin{equation}\label{eq-bound-dist}
\left| \lambda^{a} G(\lambda,f) \right|  \leq C \sum_{\alpha\leq k} |\partial^{\alpha} f|,
\end{equation}
for every $a>s_2$, uniformly for $f$ supported in a compact set $K\subset \N_n$ and for suitable $C$ and $k$ which do not depend on $\lambda$.

After these preparatory considerations, we observe that $G$ and $F$ are related by the generalised Euler operator in the following way
\[
G(\lambda, f) =  \lambda^{-\alpha+1} \frac{d}{d\lambda} \lambda^{\alpha} F(\lambda, f).
\]
We can invert this relation to obtain
\[
F(\lambda, f) =  \frac{C(f)}{\lambda^{\alpha}} + \frac{1}{\lambda^\alpha} \int_0^\lambda  \tilde{\lambda}^{\alpha-1} G(\tilde{\lambda}, f) d\tilde{\lambda},
\]
where, $C(f)$ is a suitable constant which depends on $f$. We want to prove that $C(\cdot)$ is in fact a distribution. To this end, we note that, owing to the bound \eqref{eq-bound-dist}, the integral in $\tilde\lambda$ can be performed and the result of this integration is a distribution for every $\lambda > 0$ because $\Ree(\alpha) > s_2$.  
This implies that 
\[
C(f) = F(1, f)  - \int_0^1  \tilde{\lambda}^{\alpha-1} G(\tilde{\lambda}, f) d\tilde{\lambda}
\]
is a distribution because it is a linear combination of distributions. By construction $F(1,f_{1/\lambda})=F(\lambda,f)$ and $C\circ (E_1+\alpha) =0$, hence $C$ is a homogeneous distribution of degree $\alpha$. By means of a direct computation we also find that the scaling degree of the remainder $F(1,f)-C(f)$ is smaller than or equal to the scaling degree of $G$ which is $s_2$.
\end{proof}

\subsubsection{The differential form of generalised Euler operators and homogeneous expansions of Feynman amplitudes}
\label{sec:differentialEuler}

In order to make the previous discussion operative, we have to analyse the action of the generalised Euler operators $E_p$ appearing in \eqref{eq:expose-poles0} on test functions. In fact, we shall see that $E_p$ corresponds to a particular geometric partial differential operator. To this end, we observe that $E_p = (E_1-(p-1)) E_{p-1}$. Hence, knowing the differential form of the generalised Euler operator $E_1$, it is possible to construct recursively every $E_p$.

Regarding the differential form of $E_1$, we note that it can be written in terms of the geodesic distance and the van Vleck--Morette determinant\footnote{Recall that the square--root of the van Vleck--Morette determinant coincides with the Hadamard coefficient $u$ appearing in \eqref{eq:hadamard}.} $u^2$  as 
\[
E_1 f(x_1, \dots, x_n) = \sum_{j=2}^n \left(\sigma^a(x_j) \nabla^{x_j}_a  - \left(2  \sigma^a(x_j) \nabla^{x_j}_a  \log (u(x_j,x_1))\right)\right) f(x_1, \dots, x_n)\,,
\]
where $\nabla^{x_j}_a$ indicates the $a$--th component of the covariant derivative computed in $x_j$ and 
$\sigma^a(x_j) := {\nabla^{x_j}}^a\sigma(x_1,x_j)$.
Considering the adjoint $E^\dagger_p$ of $E_p$, we have $t\circ E_p = E^\dagger_p t$ where, using the relation $\Box \sigma + 2\sigma^a\nabla_a \log (u) = 4$, we find for $p=1$
\begin{eqnarray}
E_1^\dagger  t(x_1,\dots,x_n)     
&=&  \sum_{j=2}^n \left(- \nabla^{x_j}_a \sigma^a(x_j)    - 2 \sigma^a(x_j) \left(\nabla^{x_j}_a \log (u(x_j,x_1)) \right)\right) t(x_1,\dots,x_n) \notag \\
&=& -\left( 4(n-1) +  \sum_{j=2}^n \sigma^a(x_j) \nabla^{x_j}_a   \right)t(x_1,\dots,x_n).
\label{eq:euler-operator}
\end{eqnarray}
We finally observe that the recursive identity for $E_p$ implies that also $E^\dagger_p$ can be constructed recursively starting from $E^\dagger_1$ as 
\beq\label{eq:euler-operator2}
E_p^\dagger =  E_{p-1}^\dagger (E_1^\dagger-(p-1)).
\eeq

We proceed by showing that upon applying $E^\dagger_1$ introduced in  \eqref{eq:euler-operator} to a distribution $t_\Gamma^{(\balpha)}$ of the form
\[
t_\Gamma^{(\balpha)}=\prod_{1\leq i < j \leq n } \frac{1}{\sigma_F(x_i,x_j)^{l_{ij}(1+ \alpha_{ij})}}
\]
which has scaling degree $\sd(t_\Gamma^{(\balpha)}) = \sum_{i<j} 2 l_{ij}(1+ \Ree(\alpha_{ij}))$ towards the thin diagonal $d_n$, the result is a term proportional to $t_\Gamma^{(\balpha)}$ plus a remainder which has lower scaling degree as foreseen in \eqref{eq:decrease-scaling-degree}. Hence, Proposition \ref{pr:set} implies that $t_\Gamma^{(\balpha)}$ can be written as a homogeneous distribution plus a remainder with lower scaling degree. If the scaling degree of the remainder is not sufficiently low, we reiterate the procedure in order to obtain a full almost homogeneous expansion of the desired form.

In order to analyse this issue we shall only consider the relevant differential operator on $\M^n$ appearing in $E_1^\dagger$, namely,
\begin{equation}\label{eq:rho}
\rho:= -  \sum_{j=2}^n \sigma^a(x_j) \nabla^{x_j}_a.    
\end{equation}
We start by analysing the action of $\rho$ on $\sigma(x_2,x_3)$ for $x_2,x_3$ in a normal neighbourhood of the point $x_1$.

\begin{lem}\label{le:rho-over-squared}Let $\N_{x_1}$ be a normal neighbourhood of the point $x_1$ and let $x_2,x_3\in\N_{x_1}$.
Then,  
\[
\rho \sigma(x_2,x_3) = 2\sigma(x_2,x_3) + G(x_1,x_2,x_3)    
\]
where $G$ is a smooth function which vanishes in the limit $x_2,x_3\to x_1$ as a monomial of order $4$ in the normal coordinates of $x_2$ and $x_3$ centred in $x_1$. 
\end{lem}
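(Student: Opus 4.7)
The plan is to pass to Riemann normal coordinates based at $x_1$, in which both $\rho$ and the leading behaviour of $\sigma(x_2,x_3)$ take a particularly simple form. Denote by $\xi_j$ the normal coordinates of $x_j$, with $x_2,x_3$ in a normal neighbourhood of $x_1$.

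First, I would identify $\sigma^a(x_j) = \xi_j^a$ in these coordinates. This follows from the exact identity $\sigma(x_1,x_j) = \tfrac{1}{2}\,g_{ab}(x_1)\xi_j^a\xi_j^b$, valid in normal coordinates, together with the defining property $g_{ab}(\xi)\xi^b = g_{ab}(x_1)\xi^b$ of the same. Consequently, on scalar functions $\sigma^a(x_j)\nabla^{x_j}_a = \xi_j^a\,\partial_{\xi_j^a}$, so that $\rho$ reduces (up to the sign convention in its definition) to the combined dilation $D := \xi_2^a\partial_{\xi_2^a} + \xi_3^a\partial_{\xi_3^a}$ in the coordinates of $x_2$ and $x_3$.

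Second, I would expand $\sigma(x_2,x_3)$ about $\xi_2 = \xi_3 = 0$ as
\[
\sigma(x_2,x_3) \;=\; \tfrac{1}{2}\, g_{ab}(x_1)(\xi_2 - \xi_3)^a(\xi_2 - \xi_3)^b \;+\; R(\xi_2,\xi_3),
\]
the quadratic part being fixed by the standard coincidence limits of $\sigma$ and its second derivatives. The remainder $R$ is smooth, symmetric under $\xi_2 \leftrightarrow \xi_3$, and vanishes on both coordinate axes, because in normal coordinates at $x_1$ the restriction $\sigma(x_1,x_j)$ reduces exactly to its quadratic expression. Applying $D$ to the quadratic part produces $g_{ab}(x_1)(\xi_2-\xi_3)^a(\xi_2-\xi_3)^b$ by Euler's theorem, i.e.\ twice the quadratic part of $\sigma$, so that $\rho\sigma = 2\sigma + G$ with $G := DR - 2R$ (up to sign).

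Third and most delicate, I would show that $R$ starts at order four in $(\xi_2,\xi_3)$, so that both $R$ and $DR$ -- and hence $G$ -- vanish as monomials of degree four near $\xi_2 = \xi_3 = 0$. The quickest route is to invoke the standard Synge-type expansion of the world function in Riemann normal coordinates: since the metric satisfies $g_{ab}(\xi) = g_{ab}(x_1) - \tfrac{1}{3} R_{acbd}(x_1)\xi^c\xi^d + O(\xi^3)$, feeding this into the Hamilton--Jacobi equation $g^{ab}\sigma_{;a}\sigma_{;b} = 2\sigma$ (or computing the length of the geodesic from $\xi_2$ to $\xi_3$ directly) produces a first curvature correction to $\sigma(\xi_2,\xi_3)$ that is quartic, of schematic form $R_{abcd}(x_1)\xi_2^a\xi_3^b\xi_2^c\xi_3^d$. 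An alternative structural argument exploits the two-axis vanishing together with the symmetry $\xi_2 \leftrightarrow \xi_3$: any cubic monomial with these properties would have to be contracted with a rank-three geometric invariant at $x_1$, yet no such invariant can be built from the metric and the Riemann tensor alone, since covariant derivatives of the curvature enter only at higher orders. This is the main obstacle; once it is granted, the desired identity $\rho\sigma(x_2,x_3) = 2\sigma(x_2,x_3) + G(x_1,x_2,x_3)$ with $G$ vanishing at the total diagonal as a quartic monomial is essentially immediate.
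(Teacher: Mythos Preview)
Your argument is correct but proceeds along a different line from the paper's. You reduce $\rho$ to the joint dilation $D=\xi_2\!\cdot\!\partial_{\xi_2}+\xi_3\!\cdot\!\partial_{\xi_3}$ in normal coordinates at $x_1$, expand $\sigma(x_2,x_3)$ as its exact quadratic part plus a remainder $R$, and use Euler's theorem on the quadratic part; the quartic vanishing of $R$ you obtain from the normal--coordinate expansion of the metric fed into the Hamilton--Jacobi equation for $\sigma$. The paper instead writes $\rho\sigma_{23}$ as the difference of the boundary values $\xi\!\cdot\!t$ at the endpoints of the geodesic from $x_2$ to $x_3$, converts this via the fundamental theorem of calculus and the geodesic equation into $\int_0^1 t^a t^b\,\sigma_{ab}(x(\lambda),x_1)\,d\lambda$, and then splits $\sigma_{ab}=g_{ab}+E_{ab}$: the $g_{ab}$ piece yields $2\sigma_{23}$ exactly, while $E_{ab}=O(\sigma(\cdot,x_1))$ produces the remainder $G$ as an explicit integral, manifestly quartic since $t$ and $\xi(x(\lambda))$ are each linear in $(\xi_2,\xi_3)$. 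Your route is shorter and more elementary once the Riemann--normal expansion of $\sigma(x_2,x_3)$ is granted; the paper's route yields an explicit integral representation of $G$ (equation \eqref{eq:remainder-sigma}) that is reused when iterating $\rho$ in the proof of Proposition~\ref{pr:almost-homo}. Your secondary ``structural'' argument (no rank--three geometric invariant at $x_1$) is heuristic and would need more care to be made rigorous, but since your primary argument via the metric expansion is sound this is not a gap.
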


\begin{proof}Using the notation in the proof of Proposition \ref{pr:sigma-1} we write the action of $\rho$ on $\sigma_{23}:=\sigma(x_2,x_3)$ as
\[
\rho \sigma_{23}  =    \xi_a(x_2) \sigma^a_{23} + \xi_{b^\prime}(x_3)\sigma^{b^\prime}_{23}\,.
\]
Recall that $\sigma_{23}^a$ is the covector in $T^*_{x_2}M$ cotangent to the unique geodesic joining $x_2$ and $x_3$, that $-\sigma^{b'}_{23}$ is equal to the parallel transport of $\sigma^a_{23}$ from $x_2$ to $x_3$ along the geodesic $\gamma$ joining the two points, and that $\xi^c(x_i):=\sigma^c(x_1,x_i)$.

Let us parametrise the image of $\gamma$ with an affine parameter $\lambda$ such that $x(0) = x_2$ and $x(1) = x_3$. In order to simplify the notation, we indicate by $t(\lambda)$ the tangent vector of the geodesic in $x(\lambda)$. As argued before, we have
\[
t^a(0)=\sigma^a_{23},\qquad \text{and}\qquad t^{b^\prime}(1)=-\sigma^{b^\prime}_{23}.
\]   
Consequently,
\begin{eqnarray*}
\rho \sigma_{23}
&=& \xi^a t_a (0) - \xi^b t_b(1) 
= - \int_{0}^{1} \frac{d}{d\lambda} (\xi^a t_a)(\lambda) d\lambda \\
&=& - \int_{0}^{1} t^a \nabla_t  \xi_a   d\lambda  
= \int_{0}^{1} t^a t^b \sigma_{ab}(x(\lambda),x_1) d\lambda\,,
\end{eqnarray*}
where $\sigma_{ab} := \nabla_a\nabla_b \sigma$. If we now consider the covariant Taylor expansion of $\sigma_{ab}(x(\lambda),x_1)$ around $x(\lambda)$ (see e.g. \cite{Poisson:2011nh}), we find that $E_{ab}(x,x_1):=\sigma_{ab}(x,x_1) - g_{ab}(x) $ is a smooth function that vanishes for $x\to x_1$ as $O(\sigma(x,x_1))$, hence
\[
\rho\sigma_{23}  = \int_{0}^{1} t^a t^b g_{ab}(x(\lambda))    d\lambda +  \int_{0}^{1} t^a t^b E_{ab}(x(\lambda),x_1)    d\lambda  = 2\sigma(x_2,x_3) + G(x_1,x_2,x_3)\,,
\]
where the remainder is smooth because of the smoothness of the metric $g$ and can be further expanded as 
\begin{eqnarray}\label{eq:remainder-sigma}
G(x_1,x_2,x_3) &=& \int_{0}^{1} t^a(\lambda) t^b(\lambda) \left(\sigma_{ab}(x(\lambda),x_1) - g_{ab}(x(\lambda))\right) d\lambda\\ &=&  
\int_{0}^{1} t^a(\lambda) t^b(\lambda) t^c(\lambda) t^d(\lambda) R_{acbd}(x(\lambda))d\lambda + \dots = O(|\xi(x_2)|^4+|\xi(x_3)|^4)\,,\notag
\end{eqnarray}
where the absolute value of the normal coordinates $|\xi(x_i)|$ of $x_i$, $i=2,3$ is intended in the Euclidean sense.
\end{proof}

We are now in position to analyse the action of $\rho$ on the distribution $t_\Gamma^{(\balpha)}$ introduced in \eqref{eq:prod-sigma}.

\begin{propo}\label{pr:almost-homo}
The distribution $t_\Gamma^{(\balpha)}$ introduced in \eqref{eq:prod-sigma} can be written as a sum of homogeneous distributions with respect to scaling towards the total diagonal $d_n$ plus a remainder. The degrees of homogeneity of these homogeneous distributions are contained in the following set
\[
\left\{k-\sum_{1\leq i<j\leq n} 2 l_{ij}(1+ \alpha_{ij}), k\in \mathbb{N}\cup \{0\} \right\}.
\] 
\end{propo}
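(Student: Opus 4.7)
The strategy is to apply Proposition \ref{pr:set} iteratively in order to extract the homogeneous components of $t_\Gamma^{(\balpha)}$ one at a time. The natural candidate for the leading homogeneity degree is
\[
\delta_0 := -\sum_{1\leq i<j\leq n} 2\, l_{ij}(1+\alpha_{ij}),
\]
which matches the scaling degree $s_0=-\Ree(\delta_0)$ of $t_\Gamma^{(\balpha)}$ towards $d_n$; this follows from the product form of \eqref{eq:prod-sigma} together with Proposition \ref{pr:sigma-1}(b).

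To verify the hypothesis of Proposition \ref{pr:set} with this degree, I would apply the differential operator $\rho$ of Lemma \ref{le:rho-over-squared} to $t_\Gamma^{(\balpha)}$. Using the Leibniz rule and the identity $\rho\,\sigma(x_i,x_j)=2\sigma(x_i,x_j)+G_{ij}$, a short calculation gives
\[
\rho\, t_\Gamma^{(\balpha)} \;=\; \delta_0\, t_\Gamma^{(\balpha)} \;-\; \sum_{1\leq i<j\leq n} l_{ij}(1+\alpha_{ij})\,\frac{G_{ij}(x_1,x_i,x_j)}{\sigma_F(x_i,x_j)}\, t_\Gamma^{(\balpha)},
\]
so that the deviation of $t_\Gamma^{(\balpha)}$ from being exactly homogeneous of degree $\delta_0$ is carried by the last sum. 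Since $G_{ij}$ vanishes of order four in the normal coordinates of $x_i,x_j$ centred at $x_1$, it has scaling degree at most $-4$; the extra $\sigma_F(x_i,x_j)^{-1}$ contributes $+2$; and the remaining $t_\Gamma^{(\balpha)}$ factor contributes $s_0$. Hence each term in the defect has scaling degree at most $s_0-2<s_0$, and Proposition \ref{pr:set} yields a decomposition $t_\Gamma^{(\balpha)} = t_{(0)}+r_{(0)}$ with $t_{(0)}$ homogeneous of degree $\delta_0$ and $\sd(r_{(0)})\leq s_0-2$.

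I would then iterate. Covariantly Taylor--expanding each $G_{ij}$ around $\xi_i=\xi_j=0$ writes it as $\sum_{d\geq 4}G_{ij}^{(d)}$ plus a Taylor remainder of arbitrarily high vanishing order, where $G_{ij}^{(d)}$ is a polynomial of degree $d$ in the normal coordinates $\xi_i,\xi_j$ with coefficients smooth in $x_1$. Each summand $G_{ij}^{(d)}\,\sigma_F(x_i,x_j)^{-1}\,t_\Gamma^{(\balpha)}$ belongs to the same structural class as $t_\Gamma^{(\balpha)}$, namely a polynomial prefactor of known homogeneous degree times a product of complex powers of $\sigma_F$'s, so exactly the same computation applies and extracts a homogeneous piece of degree $\delta_0+(d-2)$ together with a remainder of strictly smaller scaling degree. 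Repeating this finitely many times drives the residual scaling degree below any prescribed threshold, in particular below $4(n-1)$, which is what Proposition \ref{pr:regularisation} requires for a unique weakly meromorphic extension. Since each extraction raises the homogeneity degree by a non-negative integer, the set of degrees appearing in the expansion is contained in $\{\delta_0+k:k\in\mathbb{N}\cup\{0\}\}$, as asserted.

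The main obstacle is to set up the induction cleanly: one has to verify that the class of distributions of the form ``polynomial in normal coordinates times a product of $\sigma_F^{-s}$'' is stable under the step, i.e.\ that $\rho$ acts on polynomial prefactors by integer shifts of the homogeneity degree while producing only lower-degree polynomial terms, and acts on the $\sigma_F^{-s}$ factors through another application of Lemma \ref{le:rho-over-squared}. Once this closedness is established, the iteration runs without obstruction and the bookkeeping of homogeneity degrees produces the claimed set.
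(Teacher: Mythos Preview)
Your proposal is correct and follows essentially the same route as the paper: apply $\rho$ to $t_\Gamma^{(\balpha)}$, invoke Lemma \ref{le:rho-over-squared} to identify the defect as a sum of $G_{ij}\,\sigma_F(x_i,x_j)^{-1}\,t_\Gamma^{(\balpha)}$ terms with scaling degree dropped by two, apply Proposition \ref{pr:set}, and then iterate on the remainder by Taylor--expanding $G_{ij}$ and noting that the resulting class (polynomial prefactors in normal coordinates times products of powers of $\sigma_F$) is stable under the procedure. The paper's own proof is organised the same way, only remarking in addition that after the first step the scaling--degree drop per iteration may be just one rather than two; your bookkeeping via the integer shift $d-2\geq 2$ together with the subsequent iterations already accommodates this, so nothing is missing.
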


\begin{proof}
We perform this analysis with $\epsilon$ in $\sigma_F$ taken to be strictly positive. We start by applying $\rho$ given in \eqref{eq:rho} to $t_\Gamma^{(\balpha)}$. Thanks to the results stated in Lemma \ref{le:rho-over-squared} we have
\[
\rho t_\Gamma^{(\balpha)}  =    C   t_\Gamma^{(\balpha)}    + r^{(\balpha)}_\Gamma,
\]
where the constant $C$ is
\[
C= -\sum_{1\leq i<j\leq n} 2 l_{ij}(1+ \alpha_{ij}).
\]
Furthermore, Lemma \ref{le:rho-over-squared} and in particular \eqref{eq:remainder-sigma} implies that the remainder $r^{(\balpha)}_\Gamma$ has a scaling degree towards $d_n$ which is lower than the one of $t_\Gamma^{(\balpha)}$ by at least two, 
\begin{equation}\label{eq:sd-tgamma}
\sd(r^{(\balpha)}_\Gamma) \leq \sd(t_\Gamma^{(\balpha)})-2 = \sum_{1\leq i<j\leq n} 2 l_{ij}(1+ \Ree(\alpha_{ij})) -2\,.
\end{equation}
Proposition \ref{pr:set} then implies that the distribution $t_\Gamma^{(\balpha)}$ can be written as a homogeneous distribution of degree $C$  
plus a remainder with lower scaling degree.

In order to finalise the proof we need to control the recursive application of $\rho$, therefore we discuss the application of $\rho$ on $\rho^nt_\Gamma^{(\balpha)}$ for an arbitrary $n$. 
Let us start with $n=1$. In this case, we observe that the relevant contribution is the one given by the remainder $\rho r^{(\balpha)}_\Gamma$, which reads 
\[
r^{(\balpha)}_\Gamma = \sum_{1\leq i < j \leq n} l_{ij}(1+ \alpha_{ij}) \frac{G(x_1,x_i,x_j)}{\sigma_F(x_i,x_j)}t_\Gamma^{(\balpha)}.
\]
Note that for every $i<j$, $\sigma_F(x_i,x_j)t_\Gamma^{(\balpha)}$ has the same structure like $t_\Gamma^{(\balpha)}$, but the scaling degree $\sd(t_\Gamma^{(\balpha)}) +2$, whereas $G(x_1,x_i,x_j)$ defined in \eqref{eq:remainder-sigma} is a smooth function whose Taylor expansion for $x_i, x_j$ around $x_1$ starts with components of order $4$. Hence, if we apply $\rho$ to $r^{(\balpha)}_\Gamma$ we obtain
a constant multiple of $r^{(\balpha)}_\Gamma$ plus a remainder which has scaling degree lower or equal to $\sd(r^{(\balpha)}_\Gamma)-1$, where the difference with respect to \eqref{eq:sd-tgamma} stems from the fact that $G$ can be expanded as a polynomial in $\sigma_a(x_i)$ whose lowest components are monomials of degree $4$ multiplied by curvature tensors. These monomials are homogeneous and thus contribute to the degree of homogeneity of $\rho r^{(\balpha)}_\Gamma$, while the contributions in $G$ with degree higher or equal to five influence the scaling degree of the remainder.
Repeating this analysis for a generic $n$, we find that similar results hold when $\rho$ is applied recursively to the remainder.

Consequently, an iterated application of Proposition \ref{pr:set} implies that the distribution $t_\Gamma^{(\balpha)}$ can be written as a finite sum of homogeneous distributions plus a remainder. Furthermore, since the scaling degree of these distributions is always finite, the degree of homogeneity of these components is finite as well. 
\end{proof}

As outlined at the end of Section \ref{sec:analytic_general}, we can use Proposition \ref{pr:almost-homo} in conjunction with the propositions \ref{pr:regularisation} and \ref{pr:expose-poles} in order to extend the distributions $t_\Gamma^{(\balpha)}$ in a unique and weakly meromorphic fashion to the union of all partial diagonal $D_n$ in a normal neighborhood of the total diagonal (cf. Remark \ref{rem:geodesicneighbourhood}) and in order to compute the relevant pole part of this extension as used in the forest formula, cf. \eqref{eq:forset-formula}, 
 \eqref{def:regularisedamplitudes} and \eqref{eq:anal-feynman}. To this avail, we stress that Proposition \ref{pr:almost-homo} holds in particular for any subgraph $\Gamma_I$, $I\subset\{1,\dots,n\}$ of $\Gamma$ and the corresponding distribution $t_{\Gamma_I}^{(\balpha)}$ which is obtained by omitting all factors in $t_{\Gamma}^{(\balpha)}$ which correspond to edges not contained in $\Gamma_I$. Finally, the recursive structure of the forest formula \eqref{eq:forset-formula} implies that we are not dealing only with expressions of the form $t_{\Gamma_I}^{(\balpha)}|_{\alpha_{ij}=\alpha_I \forall i,j\in I}$, but also with expressions which are of this form up to a subtraction of their principal part. However, our above analysis and in particular the discussion in the proof of Proposition \ref{pr:regularisation} implies that the propositions \ref{pr:almost-homo} and \ref{pr:expose-poles} also hold in this case.

\begin{rem}\label{rem:differential}
Proposition \ref{pr:expose-poles} and the above analysis imply that our renormalisation scheme is in fact a particular form of differential renormalisation. Notwithstanding, the advantage of formulating this scheme in terms of analytic regularisation and minimal subtraction is the ability to define the renormalisation scheme in a closed form at all orders by means of the forest formula \eqref{eq:forset-formula}.
\end{rem}

\subsection{Properties of the minimal subtraction scheme}

We conclude the general analysis of the renormalisation scheme introduced in this work by demonstrating that this scheme satisfies -- up to one property we shall mention at the end of this section -- all axioms of \cite{Hollands:2001nf,Hollands:2001b,  Hollands:2004yh} which, as argued in these works, any physically meaningful scheme to renormalise time--ordered products should satisfy. We refer to these works for a detailed formulation and discussion of these axioms. In addition to showing these properties of the scheme, we also argue that it preserves invariance under any spacetime isometries present.

\begin{propo}\label{pr:propertiesscheme}
The time--ordered product $\mathcal{T}_n$ defined by means of \eqref{eq:forset-formula}, where the quantities appearing in this formula are defined by means of \eqref{eq:projection}, \eqref{eq:tau-gamma_reg}, \eqref{def:regularisedamplitudes} and \eqref{eq:anal-feynman}, and were we recall Remark \ref{rem:geodesicneighbourhood}, have the following properties.
\begin{enumerate}
\item $\mathcal{T}_n$ is symmetric and satisfies the causal factorisation condition.
\item $\mathcal{T}_n$ is unitary.
\item $\mathcal{T}_n$ is local and covariant.
\item $\mathcal{T}_n$ satisfies the microlocal spectrum condition.
\item $\mathcal{T}_n$ is $\phi$--independent.
\item $\mathcal{T}_n$ satisfies the Leibniz rule.
\item $\mathcal{T}_n$ satisfies the Principle of Perturbative Agreement for perturbations of the generalised mass term $\mu$ in the free Klein--Gordon equation $P\phi:=(-\Box + \mu)\phi=0$.
\item If the spacetime $(\M,g)$ has non--trivial isometries and if the Feynman propagator $H_F$ is chosen such as to be invariant under these isometries, then $\mathcal{T}_n$ is invariant under these isometries as well.
\end{enumerate}
\end{propo}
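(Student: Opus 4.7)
The strategy is to verify the axioms one at a time, exploiting the fact that the analytic regularisation $H^{(\alpha)}_F$ and the forest formula \eqref{eq:forset-formula} inherit many desirable features of the unregularised Feynman propagator and of standard minimal subtraction, respectively. Several of the properties are essentially built into the definition. The $\phi$--independence (e) is manifest because $T^{(\balpha)}_n$ in \eqref{eq:tau-gamma_reg} only acts by functional derivatives on its arguments and the $\tau^{(\balpha)}_\Gamma$ are $\phi$--independent distributions; this survives $R_I$--subtraction and the limit $\balpha\to 0$. The Leibniz rule (f) follows in the same way by writing $P_x F(\phi)=F^{(1)}(\phi)(P\cdot)$ and integrating by parts inside $\langle \tau^{(\balpha)}_\Gamma,\cdot\rangle$, where $P$ is any partial differential operator with smooth coefficients, and observing that the extension of $\tau^{(\balpha)}_\Gamma$ to $\M^n$ and the MS--subtractions in \eqref{eq:forset-formula} commute with multiplication by smooth functions of the vertex variables. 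Symmetry in (a) is immediate since $T^{(\balpha)}_n$ is given by a sum over graphs which treats the vertices on equal footing and the forest subtractions $R_I$ depend only on the unordered subset $I\subset\overline n$.

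For locality and covariance (c), isometry invariance (h) and the microlocal spectrum condition (d) one uses that all ingredients of the construction are geometric. The Hadamard coefficients $u$, $v$, the half--squared geodesic distance $\sigma$ and the scaling transformation \eqref{eq:n-dim-scaling} underlying the generalised Euler operators $E_p$ are defined from $(\M,g)$ alone, and the purely geometric choice $w=0$ in \eqref{eq:anal-feynman} for the defining Feynman propagator makes $H^{(\alpha)}_F$ local and covariant. Since \eqref{eq:euler-operator} and \eqref{eq:euler-operator2} exhibit $E^\dagger_p$ as a covariant differential operator, the recursive pole extraction of Proposition \ref{pr:expose-poles} and the forest formula produce again locally and covariantly constructed distributions; moreover, any isometry $\psi$ of $(\M,g)$ acts trivially on $\sigma$, $u$, $v$ and on $E_p$, so that if the chosen $H_F$ is $\psi$--invariant the same holds for $(\mathcal{T}_n)_\ms$, proving (h). For (d) the already established inclusion $\WF(H^{(\alpha)}_F)\subset\WF(H_F)$ together with the proof of Proposition \ref{pr:prod-sigma} shows that every $\tau^{(\balpha)}_\Gamma$ satisfies the microlocal spectrum condition on $\M^n\setminus D_n$; this is preserved by the extension through Proposition \ref{pr:regularisation}, and standard arguments (cf.\ \cite{Hollands:2001nf,Brunetti-Fredenhagen:2000}) show that the extensions may be chosen with wavefront set in the appropriate cone.

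The causal factorisation property in (a) and unitarity (b) rest on the fact that $H^{(\alpha)}_F$ coincides away from the diagonal with either the analytically continued two--point function $H^{(\alpha)}_+$ or its adjoint, according to the causal relation of the two arguments, since the $+i\epsilon$ prescription in $\sigma_F$ is the Feynman prescription. Hence, when the support of $F_1,\dots,F_k$ lies in the future of that of $F_{k+1},\dots,F_n$, each factor of $\tau^{(\balpha)}_\Gamma$ corresponding to an edge between the two groups reduces to $H^{(\alpha)}_+$, and the mixed subgraphs produce no principal parts (their singular support misses the relevant partial diagonals), so the MS--subtractions factor through and one recovers $\mathcal{T}_k(F_1\otimes\cdots\otimes F_k)\star_H \mathcal{T}_{n-k}(F_{k+1}\otimes\cdots\otimes F_n)$ in the limit $\balpha\to 0$. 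Unitarity (b) is obtained by noting that complex conjugation of $H^{(\alpha)}_F$ yields the anti--Feynman prescription, so that the forest formula for $\overline{\mathcal{T}_n}$ matches the one for the anti--time--ordered product, together with the fact that the projection $R_I$ commutes with complex conjugation.

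The main obstacle is the Principle of Perturbative Agreement (g). Here one must compare $\mathcal{T}_n$ constructed with $H^{(\alpha)}_F$ associated to the mass $\mu$ with the $\mathcal{T}_n$ obtained after perturbing by $\delta\mu$ and expanding in powers of $\delta\mu$ in the interaction. The plan is to use the explicit $\mu$--dependence of the Hadamard coefficients $u$, $v$ through the transport equations together with the fact that $\Delta_A$ does not depend on the choice of $H_+$ and is purely retarded/advanced, so that the identity $H^{(\alpha)}_F(x,y)=H^{(\alpha)}_+(x,y)+i\Delta_A(x,y)$ gives a compatible Feynman propagator for the perturbed theory; then one verifies that the generalised Euler operators, whose coefficients are $\mu$--independent, commute with differentiation in $\mu$ and therefore that the principal parts extracted by \eqref{eq:expose-poles} transform covariantly. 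Showing that no anomalous contribution arises in the limit $\balpha\to 0$ requires checking that the $\mu$--derivative of the pole part coincides with the pole part of the $\mu$--derivative, an exchange of limits that is justified by the weak meromorphicity established in Proposition \ref{pr:regularisation} and by the joint analyticity of $\tau^{(\balpha)}_\Gamma$ in $\balpha$ and $\mu$ on $\M^n\setminus D_n$. This is the only step that does not reduce to direct inspection, and it is the technical heart of the proof.
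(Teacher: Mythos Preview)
Your plan for (a)--(f) and (h) is essentially the same as the paper's, and is correct. Two comments are in order, though.

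First, for causal factorisation in (a) the paper does not argue directly about mixed subgraphs having no principal parts; instead it simply invokes that the renormalised $\mathcal{T}_n$ is constructed by the Epstein--Glaser forest formula \eqref{eq:forset-formula} and that, by Proposition \ref{pr:regularisation}, every counterterm $R_I$ subtracted is supported on the corresponding partial diagonal. These two facts are precisely what the forest formula is designed to guarantee (cf.\ \cite{Hollands:2010pr,Keller,dfkr}), so causal factorisation follows from the general theory rather than from a direct analysis of the regularised amplitudes. Your direct argument can presumably be made to work, but it duplicates effort already encoded in the forest formula.

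Second, and more importantly, your treatment of (g) misidentifies the difficulty. You call the Principle of Perturbative Agreement for mass perturbations ``the main obstacle'' and ``the technical heart of the proof'', and you propose to justify an interchange of the $\mu$--derivative with the pole extraction. The paper's argument is much simpler and rests on a structural feature of the regularisation that you overlook: in \eqref{eq:anal-feynman} the Hadamard coefficients $u$, $v$ and $w$ are \emph{not} modified at all, only the $\sigma_F$--dependent prefactors $1/\sigma_F^{1+\alpha}$ and $(1-\sigma_F^{-\alpha})/\alpha$ are. Since the generalised mass $\mu$ enters $H_F$ exclusively through the coefficients $v$ and $w$ (and not through $u$ or $\sigma$), the analytic regularisation is manifestly $\mu$--blind, and the entire MS--scheme commutes with a perturbative expansion in $\mu$ automatically. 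No interchange of limits needs to be justified. (Incidentally, the paper notes in Remark \ref{rem:missingPPA} that the analogous statement for metric perturbations, where $\sigma$ itself depends on the perturbation, is \emph{not} obvious and may well fail.) Also, your aside that one takes ``the purely geometric choice $w=0$'' in \eqref{eq:anal-feynman} is a misreading: $w$ is kept as is in $H^{(\alpha)}_F$, and locality and covariance follow because the $\alpha$--dependent modification is built from the geometric quantity $\sigma$ alone.
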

\begin{proof}
$a)$ holds because we constructed the renormalised time--ordered product by means of the forest formula \eqref{eq:forset-formula} and because, as implied by Proposition \ref{pr:regularisation}, all counterterms subtracted in the forest formula are local.

$b)$ Unitarity holds because the operation of extracting the relevant principal part of a regularised amplitude $\tau^{(\balpha)}_\Gamma$ commutes with complex conjugation (even if $\balpha$ is not real).

$c)$ The regularised amplitudes $\tau^{(\balpha)}_\Gamma$ satisfy locality and covariance. Upon setting $\alpha_{ij}=\alpha_I$ for $i,j\in I\subset \{1,\dots,n\}$, $\tau^{(\balpha)}_\Gamma$ is weakly meromorphic in $\alpha_I$. Thus locality and covariance holds for each term in the corresponding Laurent series and consequently also after subtracting the principal part of this series.

$d)$ As argued in the proof of Proposition \ref{pr:prod-sigma}, the distributions $t^{(\balpha)}_\Gamma$ defined in \eqref{eq:prod-sigma} satisfy the microlocal spectrum condition, i.e. they have the correct wave front set. Consequently, the regularised amplitudes $\tau^{(\balpha)}_\Gamma$ have the correct wave front set as well. As $\tau^{(\balpha)}_\Gamma$ is weakly meromorphic in the sense recalled in the proof of $c)$, each term in the corresponding Laurent series has a wave front set bounded by the wave front set of $\tau^{(\balpha)}_\Gamma$. Consequently the microlocal spectrum condition holds after subtracting the principal part and considering the limit of vanishing regularisation parameters.

$e)$ This property follows directly from the construction. In particular the subtraction of counterterms is defined in terms of numerical distributions and independent of the field $\phi$.

$f)$ In analogy to $b)$, the Leibniz rule holds because the operation of extracting the relevant principal part of a regularised amplitude $\tau^{(\balpha)}_\Gamma$ commutes with all partial differential operators.

$g)$ The Principle of Perturbative Agreement for perturbations of the generalised mass term $\mu$ demands essentially that upon setting $\mu=\mu_0 + \mu_1$, the renormalisation of $\mathcal{T}_n$ commutes with the operation of perturbatively expanding quantities in $\mu_1$ around $\mu_0$. A Feynman propagator $H_F$ depends on $\mu$ only via the Hadamard coefficients $v$ and $w$ in \eqref{eq:hadamard}. However, in the definition of the analytically regularised $H^{\alpha}_F$ in \eqref{eq:anal-feynman} and the corresponding regularised amplitudes $\tau^{(\balpha)}_\Gamma$ defined in \eqref{def:regularisedamplitudes}, these coefficients are not altered but only the $\sigma$--dependent terms multiplying these coefficients are modified. Consequently, the analytic regularisation and minimal subtraction scheme we consider commutes with a perturbative expansion in $\mu_1$ around $\mu_0$.

$h)$ As recalled in $g)$ all operations in our analytic regularisation and minimal subtraction scheme act directly on quantities defined entirely in terms of the geometric quantity $\sigma$. As $\sigma$ is invariant under any spacetime isometries present, the renormalisation scheme preserves this invariance.
\end{proof}

\begin{rem}\label{rem:missingPPA}
Note that the Principle of Perturbative Agreement (PPA) as introduced in  \cite{Hollands:2004yh} also poses conditions on $\mathcal{T}_1$, i.e. the renormalisation of local and covariant Wick polynomials, which we omitted in our analysis, cf. Footnote \ref{foot:T1} on page \pageref{foot:T1}. However, given $\mathcal{T}_n$ for $n>1$, $\mathcal{T}_1$ can be adjusted in order to satisfy the PPA for changes of $\mu$ by using e.g. \cite[Theorem 3.3]{DHP}. Moreover, the PPA as introduced in \cite{Hollands:2004yh} further demands that, setting $g=g_0+g_1$, the renormalisation also commutes with perturbatively expanding quantities in $g_1$ around an arbitrary but fixed background metric $g_0$. Since $\sigma$ depends on $g$, it is not easy to check whether a perturbative expansion in $g_1$ commutes with our analytic regularisation and minimal subtraction scheme and thus it might well be that the renormalisation scheme discussed in the present work fails to satisfy this part of the PPA. However, if this is the case, the scheme can be modified according to the construction in \cite{Hollands:2004yh} in order to satisfy also this condition while preserving the other properties in Proposition \ref{pr:propertiesscheme}, including the invariance under any spacetime isometries present.
\end{rem}

\begin{rem}\label{rem:rengroup}
We have omitted the explicit dependence of renormalised quantities on the mass scale $M$ appearing in the analytically regularised Feynman propagator $H^{(\alpha)}_F$ \eqref{eq:anal-feynman}, but our analysis implies that the dependence of these quantities on $M$ is such that all renormalised quantities are polynomials of (derivatives of) $\log\left( M^2 \sigma_F(x_i,x_j)\right)$, see also the examples in the next section. Thus, the renormalisation group flow with respect to changes of $M$ may be easily computed.
\end{rem}

\subsection{Examples}
\label{sec_fishsunset}

In this section we illustrate the method developed in Section \ref{sec_R} to explicitly compute renormalised quantities in our scheme by considering first the example of the fish graph and the sunset graph, i.e. $\Delta^n_F$ for $n=2,3$. These pointwise powers of the Feynman propagator are the only ones occurring in renormalisable scalar field theories in four spacetime dimensions. Afterwards we will consider a triangular graph in Section \ref{sec:complicatedgraph} in order to illustrate the method in the case of more than two vertices. Recalling Remark \ref{rem:geodesicneighbourhood}, we shall work only on subsets of the spacetime where the geodesic distance is well--defined without loss of generality.

In the special case of $\Delta^n_F$, we are dealing with distributions which are already defined on $\M^2\setminus d_2$ and have to be extended to $\M^2$. In order to accomplish this task we shall use \eqref{eq:expose-poles0} in order to expose the poles before subtracting them. In this context, we note that  $E^\dagger_1$ given in \eqref{eq:euler-operator} applied to a distribution $t$ whose integral kernel $t(\sigma_F)$ depends on $x,y$  only via $\sigma_F(x,y)$, can be further simplified. 
In particular, introducing $t_1(\sigma_F)$ such that $\nabla^a t_1(\sigma_F) = \sigma^a t(\sigma_F)$, we have 
\begin{eqnarray}\label{eq:E-simplified}
E^\dagger_1 t(\sigma_F)&=& -\left( 4 +  \sigma^a\nabla_a   \right)t(\sigma)  = 
- \nabla_a \sigma^a  t - 2 \sigma^a (\nabla_a \log (u))   t(\sigma_F)  \\ &=&
- \Box t_1(\sigma_F)  - 2  \frac{\nabla_a u}{u}   \nabla^a t_1(\sigma_F)\,,\notag  
\end{eqnarray}
where $x$ is considered to be arbitrary but fixed and all the covariant derivatives are taken with respect to $y$.

\subsubsection{Computation of the renormalised fish and sunset graphs in our scheme}

We recall that the Feynman propagator $\Delta_F(x,y):=\langle\phi(x)\cdot_{T_\Delta}\phi(y)\rangle_\Omega$ in any Hadamard state $\Omega$ is locally of the form
\beq\label{eq_DeltaF}
\Delta_F(x,y)=\frac{1}{8\pi^2}\left(\frac{u(x,y)}{\sigma_F(x,y)}+v(x,y)\log(M^2 \sigma_F(x,y))\right)+w(x,y)\,,\qquad\sigma_F:=\sigma+i\epsilon\,.
\eeq
From \eqref{eq_DeltaF} we can infer that, in order to renormalise $\Delta^2_F$ and $\Delta^3_F$, i.e. in order to extend them from $\M^2\setminus d_2$ to $\M^2$, we need to renormalise the three distributions
\beq\label{eq_sigma_problematic}
\frac{1}{\sigma_F^2}\qquad \frac{\log \left(M^2 \sigma_F\right)}{\sigma_F^2}\qquad \frac{1}{\sigma_F^3}\,,
\eeq
because all other occurring powers of $\sigma_F$, i.e. $\sigma^{-m}_F\log^n (\sigma_F)$ for $m\in\{0,1\}$ and $n\in\{0,1,2,3\}$ have a scaling degree for $y\to x$ smaller than 4, and thus can be uniquely extended to the diagonal.
To this avail, we define
$$\sigma_{a_1\cdots a_n}:=\nabla_{a_n}\cdots\nabla_{a_1}\sigma\qquad [B](x):=B(x,x)\,,$$
where the covariant derivatives are taken with respect to $x$ and $B$ is a general bitensor, and recall the following basic identities satisfied by $\sigma$:
\begin{equation}\label{eq_basicidentities}
\sigma_a \sigma^a  = 2\sigma\,,\qquad \sigma_{ab}\sigma^b=\sigma_a\,,\qquad \Box\sigma = 4 - 2 \frac{\sigma^a \nabla_a u}{u}\,.
\end{equation}
For our purposes, it will prove useful to use the last identity in the form
\beq\label{eq_def_f}\Box \sigma_F = 4 + f \sigma_F\qquad\text{with}\qquad f:=- 2 \frac{\sigma^a \nabla_a u}{u\sigma_F}\,,\eeq
where $f$ is a distribution, which, considered as a distribution in $y$ for fixed $x$, has scaling degree zero for $y\to x$ as can be seen from the covariant Taylor expansion $u=[u]+\left([\nabla_a u]-\nabla_a[u]\right)\sigma^a + \R_u=1+\R_u$, where the remainder $\R_u$ vanishes towards the diagonal faster than $\sigma_a$  (see e.g. \cite[Section 5]{Poisson:2011nh}).

\begin{rem}\label{rem_fdists}
As $f$ has vanishing scaling degree for $y\to x$, the pointwise product $f(x,y) t(x,y)$ with any bidistribution $t$ of scaling degree for $y\to x$ lower than 4 may be uniquely extended to the diagonal. However, we will also encounter expressions which are naively of the form $f(x,y) \delta(x,y)$ and which are a priory ill--defined because $f$ is in general divergent for $x$ and $y$ light--like related, and thus not continuous on the diagonal. Notwithstanding, the distribution $f(x,y) \delta(x,y)$, which is well--defined and identically vanishing outside of the diagonal $x=y$, may be extended to the diagonal. In fact, our scheme, in which expressions of the form $f(x,y) \delta(x,y)$ appear as $\alpha\to 0$ limits of particular weakly analytic expressions, provides a unique and non--vanishing extension of $f(x,y) \delta(x,y)$ to the diagonal by the very analyticity of the aforementioned expressions. In particular our scheme implies the following unique and well--defined definitions of distributions on $\M^2$.
\beq\label{eq_fdists}
f\Box \frac{\log^n (M^2 \sigma_F) }{\sigma_F}:= \lim_{\alpha\to 0}f\Box \frac{\log^n (M^2 \sigma_F) }{\sigma^{1+\alpha}_F}\,,\qquad n\ge 0
\eeq
Hereby uniqueness and weak analyticity of $f\Box(\log^n (M^2 \sigma_F) /\sigma^{1+\alpha}_F)$ follow from arguments used throughout Section \ref{sec_R}.
\end{rem}

From Proposition \ref{pr:sigma-1}, we know that $1/\sigma^{n+\alpha}_F$ is weakly meromorphic in $\alpha$. In order to compute the Laurent series, we use the above--mentioned identities for $\sigma$ and obtain
$$
\frac{1}{\sigma^{n+1+\alpha}_F}=\frac{1}{2(n+\alpha)(n-1+\alpha)}\left(\Box+(n+\alpha)f\right)\frac{1}{\sigma^{n+\alpha}_F}\,
$$
in accordance with \eqref{eq:expose-poles0} and \eqref{eq:E-simplified}.

 Using this and recalling Remark \ref{rem_fdists}, we may compute the following Laurent series, where we recall that in $\Delta^{(\alpha)}_F$ \eqref{eq:anal-feynman} we use the same (arbitrary) constant $M$ present in the logarithmic term  of \eqref{eq_DeltaF} to correct for the change of dimension and a sufficiently regular function $k$ for later purposes,
\begin{align}\frac{1}{(Mk)^{2\alpha}}\frac{1}{\sigma^{2+\alpha}_F}=&\frac{1}{2}(\Box+f)\left(\frac{1}{\alpha\sigma_F}-\frac{\log \left(M^2 \sigma_F\right)}{\sigma_F}\right)-\frac{\log(k^2)}{2}(\Box+f)\frac{1}{\sigma_F}-\Box \frac{1}{2\sigma_F}+O(\alpha)\,,\notag\\
\label{eq_generalexpansion}\frac{d}{d\alpha}\frac{1}{(Mk)^{2\alpha}}\frac{1}{\sigma^{2+\alpha}_F}=&\frac12\left(\Box+f\right)\left(-\frac{1}{\alpha^2 \sigma_F}+\frac{\log^2 \left(M^2 \sigma_F\right)}{2\sigma_F}\right)+\Box\frac{\log \left(M^2 \sigma_F\right)+1}{2\sigma_F}\\&+\log^2(k^2)(\Box+f)\frac{1}{4\sigma_F}+\log(k^2)\left(\Box\frac{1}{2\sigma_F}+\left(\Box+f\right)\frac{\log \left(M^2\sigma_F\right)}{2\sigma_F}\right)+O(\alpha)\,,\notag\\
\frac{1}{(M h)^{2\alpha}}\frac{1}{\sigma^{3+\alpha}_F}=&\frac{1}{8}(\Box+2f)(\Box+f)\left(\frac{1}{\alpha\sigma_F}-\frac{\log \left(M^2 \sigma_F\right)}{\sigma_F}\right)-\frac{\log(k^2)}{8}(\Box+2f)(\Box+f)\frac{1}{\sigma_F}\notag\\&-\frac{1}{16} \left((5\Box+8f)(\Box+f)-2(\Box+2f)f\right) \frac{1}{\sigma_F}+O(\alpha)\,.\notag\end{align}
Note that by means of Lemma \ref{lem_productidentities} b) one may explicitly check that the pole terms in these Laurent series are local expressions as expected.

Using the Laurent series, the lowest renormalised powers of $\sigma_F$ may be defined and computed as\footnote{Note that we use here a definition of the analytic regularisation of the logarithm in terms of a direct derivative rather than a limit of differences like in \eqref{eq:anal-feynman}. While the two definitions differ up to a constant factor in the principal part, they coincide in the constant regular part and thus give the same $(
\sigma^{-2}_F \log (M^2 \sigma_F))_\ms$.}, where we recall once again Remark \ref{rem_fdists}.
\begin{align}\left(\frac{1}{\sigma_F^2}\right)_\ms:=&\lim_{\alpha\to 0}\left(\frac{1}{M^{2\alpha}}\frac{1}{\sigma^{2+\alpha}_F}-\pp\frac{1}{M^{2\alpha}}\frac{1}{\sigma^{2+\alpha}_F}\right)=-\frac{1}{2}(\Box+f)\frac{\log \left(M^2 \sigma_F\right)}{\sigma_F}-\Box \frac{1}{2\sigma_F}\,,\notag\\
\label{eq_sigma_ms}\left(\frac{\log \left(M^2\sigma_F\right)}{\sigma_F^2}\right)_\ms:=&-\lim_{\alpha\to 0}\left(\frac{d}{d\alpha}\frac{1}{M^{2\alpha}}\frac{1}{\sigma^{2+\alpha}_F}-\pp\frac{d}{d\alpha}\frac{1}{M^{2\alpha}}\frac{1}{\sigma^{2+\alpha}_F}\right)\\=&-\frac14\left(\Box+f\right)\frac{\log^2 \left(M^2 \sigma_F\right)}{\sigma_F}-\Box\frac{\log \left(M^2 \sigma_F\right)+1}{2\sigma_F}\,,\notag\\
\left(\frac{1}{\sigma_F^3}\right)_\ms:=&\lim_{\alpha\to 0}\left(\frac{1}{M^{2\alpha}}\frac{1}{\sigma^{3+\alpha}_F}-\pp\frac{1}{M^{2\alpha}}\frac{1}{\sigma^{3+\alpha}_F}\right)\notag\\=&-\frac{1}{8}(\Box+2f)(\Box+f)\frac{\log \left(M^2 \sigma_F\right)}{\sigma_F}-\frac{1}{16} \left((5\Box+8f)(\Box+f)-2(\Box+2f)f\right) \frac{1}{\sigma_F}\,.\notag\end{align}
Finally $\left(\Delta^2_F\right)_\ms$ and $\left(\Delta^3_F\right)_\ms$ are defined and computed by expanding the unrenormalised powers $\Delta^2_F$ and $\Delta^3_F$ and replacing the three problematic expressions \eqref{eq_sigma_problematic} by their renormalised versions \eqref{eq_sigma_ms}.

\subsubsection{Alternative computation of the renormalised fish and sunset graphs}

As a preparation towards the application of our renormalisation scheme to QFT in cosmological spacetimes, we shall now derive an alternative way to compute $\left(\Delta^2_F\right)_\ms$ and $\left(\Delta^3_F\right)_\ms$, which is better suited for practical computations. We start by stating and proving a few distributional identities.
\begin{lem}\label{lem_productidentities}The following distributional identities hold.
\begin{enumerate}
\item For any continuous $F_0$ and any twice continuously differentiable $F_2$, 
$$\sigma F_0 \delta = 0\,,\qquad \sigma_a F_0 \delta = 0\,,\qquad F_0\nabla_{\nabla\sigma}\delta =-[F_0 \Box \sigma]\delta\,,$$ 
$$F_2 \Box\delta=[\Box F_2]\delta+\Box [F_2]\delta - 2\nabla^a[\nabla_aF_2]\delta\,.$$
\item $$(\Box+f)\frac{1}{\sigma_F}=8\pi^2i\delta$$
$$(\Box+2f)(\Box+f)\frac{1}{\sigma_F}:=\lim_{\alpha\to 0}(\Box+2f)(\Box+f)\frac{1}{\sigma^{1+\alpha}_F}=8\pi^2i\left(\Box-\frac R3\right)\delta$$
\item For all $n_1$, $n_2$, $n_3\in\bbN_0$ and $n_4$, $n_5$, $n_6\in\{0,1\}$ with $n_2-n_3+n_4\ge-1$,  $$\log^{n_1}\!\!\left(\sigma_F\right)(\sigma^a_F)^{n_4} \sigma^{n_2}_F \left(\frac{1}{\sigma_F^{n_3}}\right)_\ms= \log^{n_1}\!\!\left(\sigma_F\right) (\sigma^a_F)^{n_4}\sigma^{n_2-n_3}_F\,,$$
$$\Box \log (\sigma_F) = \frac{\Box \sigma -2}{\sigma_F}\,,\qquad \nabla_a\frac{\log^{n_5}(\sigma_F)}{\sigma^{n_6}_F}=\frac{\left(n_5-n_6\log^{n_5} (\sigma_F)\right)\nabla_a \sigma}{\sigma^{n_6+1}_F}\,.$$ 
\item $$\sigma_F \left(\frac{1}{\sigma_F^3}\right)_\ms=\left(\frac{1}{\sigma_F^2}\right)_\ms$$
\end{enumerate}
\end{lem}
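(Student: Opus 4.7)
The plan is to prove the four items in sequence, since later items depend on the earlier ones.

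For item (a), these are elementary manipulations with distributions supported on the diagonal $d_2$. The first two identities, $\sigma F_0\delta = 0$ and $\sigma_a F_0\delta = 0$, follow because $\sigma$ and $\sigma_a$ vanish on $d_2$ while $F_0$ is continuous there. For the third identity I would write $\nabla_{\nabla\sigma}\delta = \sigma^a\nabla_a\delta$ and apply the distributional Leibniz rule $\sigma^a\nabla_a\delta = \nabla_a(\sigma^a\delta) - (\nabla_a\sigma^a)\delta$; the first term vanishes by the second identity, leaving $-(\Box\sigma)\delta$, which after multiplication by $F_0$ is evaluated in the coincidence limit by continuity. The fourth identity is obtained analogously by rewriting $F_2\Box\delta = \Box(F_2\delta) - 2\nabla^a((\nabla_a F_2)\delta) + (\Box F_2)\delta$ and replacing each smooth bitensor coefficient of $\delta$ by its coincidence limit.

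For item (b), I would compute $\Box\sigma_F^{-1-\alpha}$ directly using $\nabla_a\sigma_F = \sigma_a$, $\sigma_a\sigma^a = 2\sigma = 2(\sigma_F-i\epsilon)$, and $\Box\sigma = 4 + f\sigma_F$ from \eqref{eq_def_f}, yielding
\begin{equation*}
(\Box + (1+\alpha)f)\sigma_F^{-1-\alpha} = 2\alpha(1+\alpha)\sigma_F^{-2-\alpha} - 2i\epsilon(1+\alpha)(2+\alpha)\sigma_F^{-3-\alpha}.
\end{equation*}
Setting $\alpha=0$ and taking $\epsilon\to 0^+$, the right-hand side concentrates on $d_2$, and the coefficient $8\pi^2 i$ is identified by comparison with the flat-space massless Feynman propagator (for which $f\equiv 0$ and both sides reduce to standard known expressions). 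The second identity follows by applying $(\Box + 2f)$ to the first, using item (a) to handle products of $f$ and its derivatives with $\delta$ in coincidence limits; the scalar curvature contribution $-R/3$ arises from the covariant Taylor expansion of $u$ (and hence of $f = -2\sigma^a\nabla_a\log u/\sigma_F$) near the diagonal.

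For item (c), these identities follow from the explicit formulas \eqref{eq_sigma_ms} together with item (a). When $(1/\sigma_F^{n_3})_\ms$ is multiplied by $\log^{n_1}(\sigma_F)(\sigma^a_F)^{n_4}\sigma^{n_2}_F$ with $n_2 - n_3 + n_4 \ge -1$, the delta-function contributions contained in the renormalised object are killed by the factors of $\sigma$ and $\sigma^a$ via (a), while outside the diagonal the chain rule gives exactly $\log^{n_1}(\sigma_F)(\sigma^a_F)^{n_4}\sigma^{n_2-n_3}_F$. The differential identities for $\Box\log\sigma_F$ and $\nabla_a(\log^{n_5}\sigma_F/\sigma_F^{n_6})$ follow from direct application of the chain rule combined with $\sigma_a\sigma^a = 2\sigma$ and $\Box\sigma = 4 + f\sigma_F$.

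The main obstacle, and the last item (d), is the consistency check $\sigma_F(1/\sigma_F^3)_\ms = (1/\sigma_F^2)_\ms$. I would substitute the explicit formulas \eqref{eq_sigma_ms} for both sides and commute $\sigma_F$ past each differential operator $\Box$, $\Box+f$, $\Box+2f$ appearing in $(1/\sigma_F^3)_\ms$ using the Leibniz rule. The commutators generate boundary contributions that are handled via the derivative identities of (c) and (a), while the identity $(\Box+f)(1/\sigma_F) = 8\pi^2 i\delta$ from (b) is needed to cancel residual delta-function contributions arising in the reduction, and the second identity in (b) controls the curvature terms produced by the $(\Box+2f)(\Box+f)$ piece. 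The principal difficulty is bookkeeping: tracking every term generated by the Leibniz expansions, matching the $f$-dependent pieces (whose coincidence-limit behaviour introduces the scalar curvature $R$), and verifying that the resulting sum collapses precisely to the simpler two-operator expression for $(1/\sigma_F^2)_\ms$.
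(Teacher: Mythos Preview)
Your approach to parts (a) and (c) is essentially the same as the paper's, though for (c) the paper argues more briskly: both sides have scaling degree $<4$ and agree off the diagonal, so they agree as unique extensions --- no need to track the explicit formulas \eqref{eq_sigma_ms}.

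There is a genuine gap in your treatment of part (b), second identity. You propose to apply $(\Box+2f)$ to $8\pi^2 i\,\delta$ and handle the resulting $f\delta$ via item (a). But item (a) requires the coefficient to be \emph{continuous}, and $f=-2\sigma^a\nabla_a\log u/\sigma_F$ is not: it diverges on the light cone (where $\sigma=0$ but $\sigma^a\neq 0$), so $f\delta$ is ill-defined a priori --- this is precisely the point of Remark~\ref{rem_fdists}. The covariant Taylor expansion of $u$ gives $f\sim -\tfrac{1}{3}R_{ab}\sigma^a\sigma^b/\sigma_F$, which has no coincidence limit in general. The paper instead argues that $t:=\lim_{\alpha\to 0}f(\Box+f)\sigma_F^{-1-\alpha}$ is supported on the diagonal, has scaling degree $\le 4$, is covariant, depends smoothly on the metric, and has mass dimension $6$; hence $t=cR\,\delta$ for a universal constant $c$. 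It then fixes $c=-1/6$ by evaluating on de Sitter spacetime, where $f$ \emph{is} continuous and $[f]=-R/6$. This detour through a special spacetime is not a shortcut but a necessity.

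For part (d), your direct commutator computation in curved spacetime inherits the same difficulty: the $f$-dependent terms you generate by Leibniz cannot be evaluated via item (a). The paper instead observes that any discrepancy between the two sides is a local term $c\,\delta$ whose coefficient, by smooth dependence on the metric and dimensional reasons, is a universal constant; it then computes $c=0$ in Minkowski spacetime, where $f\equiv 0$ and the commutator identities you wrote down become straightforward. The reduction to a convenient spacetime is again the key idea you are missing.
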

\begin{proof}
$a)$ These identities follow from $B\delta=[B]\delta$ for any continuous bitensor $B$, $[\sigma]=0$, $[\sigma_a]=0$ and the definition of weak derivatives.

$b)$ The first identity holds in Minkowski spacetime because $1/(8\pi^2\sigma_F)$ is the Feynman propagator of the massless vacuum state. In curved spacetimes \eqref{eq_basicidentities} imply that $(\Box+f)1/\sigma_F$ vanishes outside of the origin and thus must be a sum of derivatives of $\delta$ distributions. Because $\sigma$ depends smoothly on the metric, the coefficients in this sum must be smooth functions of the metric with appropriate mass dimension and thus $(\Box+f)1/\sigma_F=c\delta$ with a constant $c$ that can be fixed in Minkowski spacetime. 

In order to prove the second identity we recall Remark \ref{rem_fdists} and observe that it is sufficient to compute 
$$
t:=\lim_{\alpha\to 0}f(\Box+f)\frac{1}{\sigma^{1+\alpha}_F}
$$
This expression has for $y\to x$ a scaling degree $\le 4$, vanishes outside of $x=y$, depends smoothly on the metric, is covariant and has mass dimension $6$. Consequently $t=cR\delta$ where the dimensionless constant $c$ can be computed on any spacetime with $R\neq 0$. Moreover, in any spacetime where $f$ is actually continuous in a neighbourhood of the diagonal we have $t(x,y)=8\pi i f(x,x) \delta(x,y)$. A spacetime which satisfies both properties is (the patch of) de Sitter spacetime defined in conformal coordinates by the metric line element 
$$ds^2=\frac{1}{H^2\tau^2}\left(-d\tau^2 + d\vec{x}^2\right)$$
on $(-\infty,0)\times\bbR^3$, where $H$ is a constant. On this spacetime we have $R=12H^2$ and
$$\mu^2:=2 H^2 \sigma(\tau_1,\vec{x}_1,\tau_2,\vec{x}_2)=\cos^{-1}\left(\frac{\tau^2_1+\tau^2_2-(\vec{x}_1-\vec{x}_2)^2}{2\tau_1\tau_2}\right),$$
see e.g. \cite{Allen:1985ux}, where analytic continuation of $\cos^{-1}$ is understood for time--like separations. From this one can infer 
$$
\Box \sigma = 1+3 \mu \cot (\mu) \qquad \Rightarrow \qquad f = \frac{\Box \sigma-4}{\sigma} = 6H^2 \frac{\mu \cot (\mu) - 1}{\mu^2}= -\frac{R}{6} + O(\mu^2)
$$
which demonstrates that on de Sitter spacetime $f$ is continuous in a neighbourhood of the diagonal with $f(x,x)=-R/6$. 

$c)$ The distributions on both sides of each equation, considered as distributions in $y$ for fixed $x$, have the same scaling degree $<4$ for $y\to x$ and agree outside of the diagonal. Thus they agree also on the diagonal as unique extensions.

$d)$ As in the proof of a) we observe that the potential local correction term on the right hand side must be a sum of derivatives of $\delta$ with coefficients that depend smoothly on the metric because $\sigma$ does. Thus the correction term must be of the form $c\delta$ with a constant $c$ that can be computed in Minkowski spacetime. This computation may be performed by using \eqref{eq_basicidentities}, the previous statements of this lemma, and the following identities which are valid in Minkowski spacetime for any function $F$ s.t. $F(\sigma_F)$ is a distribution
$$\sigma_F\Box  F(\sigma_F)=\Box \sigma_F F(\sigma_F) - 4 F(\sigma_F) - 2\nabla_{\nabla\sigma_F}F(\sigma_F)\,,$$
$$\sigma_F\Box^2  F(\sigma_F)=\Box^2 \sigma_F F(\sigma_F) - 4 \Box F(\sigma_F) - 4\Box \nabla_{\nabla\sigma_F}F(\sigma_F)\,,$$
whereby one finds that $c=0$.
\end{proof}

These identities can be used to compute $\left(\Delta^2_F\right)_\ms$ and $\left(\Delta^3_F\right)_\ms$ in an alternative way under certain conditions.

\begin{propo}\label{prop_equivalentscheme}Let $(\M,g)$ be such that $\M$ is a normal neighbourhood and let $\Delta_F$ be a distribution on $\M^2$ of Feynman-Hadamard form \eqref{eq_DeltaF}. Then the following identities hold.
\begin{enumerate}
\item If $\Delta_F^{\alpha}$ is a well--defined distribution which is weakly meromorphic in $\alpha$, then
$$(\Delta^2_F)_\ms=\lim_{\alpha\to 0}\left(\frac{1}{M^{2\alpha}}\Delta_F^{2+\alpha}-\pp\frac{1}{M^{2\alpha}}\Delta_F^{2+\alpha}\right)+\frac{i\log(8\pi^2)}{16\pi^2}\delta\,.$$
\item If $\Delta_F^{\alpha}$ is a well--defined distribution which is weakly meromorphic in $\alpha$, then
$$(\Delta^2_F\log \left(M^{-2}\Delta_F\right))_\ms=\lim_{\alpha\to 0}\left(\frac{d}{d\alpha}\frac{1}{M^{2\alpha}}(\Delta_F)^{2+\alpha}-\pp\frac{d}{d\alpha}\frac{1}{M^{2\alpha}}\Delta_F^{2+\alpha}\right)-\frac{i\log^2(8\pi^2)}{32\pi^2}\delta\,.$$
\item If $\Delta_F^{\alpha}$ is a well--defined distribution which is weakly meromorphic in $\alpha$ and $[v]=0$, then 
$$(\Delta^3_F)_\ms=\lim_{\alpha\to 0}\left(\frac{1}{M^{2\alpha}}\Delta_F^{3+\alpha}-\pp\frac{1}{M^{2\alpha}}\Delta_F^{3+\alpha}\right)+\frac{i\left((1+2\log(8\pi^2))R+192\pi^2[w]\right)}{48(8\pi^2)^2}\delta\,.$$
\end{enumerate}
\end{propo}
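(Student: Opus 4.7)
The proof of all three parts proceeds by the same strategy: comparing two analytic regularisation/minimal subtraction prescriptions that both extend the distribution $\Delta_F^n$ (or $\Delta_F^2\log(M^{-2}\Delta_F)$) from $\M^2\setminus d_2$ to $\M^2$. Since the two prescriptions coincide on $\M^2\setminus d_2$, their difference is necessarily supported on $d_2$, and Proposition \ref{pr:regularisation} together with covariance and dimensional analysis implies that this difference is a local counterterm consisting of finitely many derivatives of $\delta$ with smooth coefficients. The plan is to identify this counterterm explicitly by matching the Laurent expansions of both prescriptions around $\alpha=0$.

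For part a) I would split $\Delta_F = A+R$ with $A=u/(8\pi^2\sigma_F)$ and $R = v\log(M^2\sigma_F)/(8\pi^2)+w$, and observe that $R/A$ is sufficiently mild (vanishing logarithmically at the diagonal) so that, when $\Delta_F^{2+\alpha}=A^{2+\alpha}(1+R/A)^{2+\alpha}$ is expanded, only the leading term $A^{2+\alpha}$ carries a problematic scaling degree; all subleading terms can be uniquely extended to $d_2$ without renormalisation and contribute only to the regular, finite part. Writing $M^{-2\alpha}A^{2+\alpha} = \big(u^2/(8\pi^2)^2\big)\cdot u^{\alpha}\cdot (8\pi^2)^{-\alpha}\cdot 1/(M^{2\alpha}\sigma_F^{2+\alpha})$, I would use that $u^\alpha = 1+O(\alpha\sigma_F)$ (so its effect is subleading), while $(8\pi^2)^{-\alpha} = 1-\alpha\log(8\pi^2)+O(\alpha^2)$ picks up the simple pole $\tfrac{1}{2\alpha}(\Box+f)(1/\sigma_F)$ visible in the first line of \eqref{eq_generalexpansion}. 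Applying Lemma \ref{lem_productidentities} b), $(\Box+f)(1/\sigma_F)=8\pi^2 i\delta$, the finite remainder produced by this shift is exactly $-i\log(8\pi^2)/(16\pi^2)\,\delta$, yielding the stated correction after transposition.

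Part b) is structurally identical but uses the second line of \eqref{eq_generalexpansion}, in which the $\alpha$-derivative produces a second-order pole $1/\alpha^2$. One expands $(8\pi^2)^{-2-\alpha}$ to second order in $\alpha$ and pairs each term with the corresponding order of the pole; the quadratic contribution $\tfrac12\log^2(8\pi^2)\cdot(\Box+f)(1/\sigma_F)$ produces, via Lemma \ref{lem_productidentities} b), the counterterm $-i\log^2(8\pi^2)/(32\pi^2)\,\delta$. Part c) is more involved: the expansion $A^{3+\alpha}(1+R/A)^{3+\alpha} = A^{3+\alpha}+(3+\alpha)A^{3+\alpha}(R/A)+\cdots$ produces three separate contributions requiring renormalisation, namely $u^{3+\alpha}/((8\pi^2)^{3+\alpha}\sigma_F^{3+\alpha})$ and the two linear pieces containing $v\log(M^2\sigma_F)/\sigma_F^{2+\alpha}$ and $w/\sigma_F^{2+\alpha}$. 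The hypothesis $[v]=0$ eliminates the local counterterm arising from the $v$-piece (as $v$ multiplied by a $\delta$ only depends on $[v]$), while the $w$-piece contributes $3[w]\cdot ((8\pi^2)^{-2}/(2\alpha))(\Box+f)(1/\sigma_F)$ near the pole, producing the $[w]\delta$ term in the correction. The curvature term $R$ enters through the second identity of Lemma \ref{lem_productidentities} b), $(\Box+2f)(\Box+f)(1/\sigma_F) = 8\pi^2 i(\Box - R/3)\delta$, when the $\log(8\pi^2)$ shift of the pole in the third line of \eqref{eq_generalexpansion} is evaluated; the $\Box\delta$ part is absorbed into the regular scheme, while the $-R/3$ part survives as a local counterterm.

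The main obstacle is the bookkeeping for part c): one must simultaneously track the three sources of counterterms (the $\log(8\pi^2)$ prefactor applied to the degree-6 pole, the $[w]\delta$ from the subleading cross-term, and the curvature contribution hidden in $(\Box+2f)(\Box+f)(1/\sigma_F)$), verify that all other terms in $(1+R/A)^{3+\alpha}$ are genuinely subleading enough to contribute only to the regular part, and combine the numerical coefficients carefully to obtain exactly $i\bigl((1+2\log(8\pi^2))R+192\pi^2[w]\bigr)/(48(8\pi^2)^2)\,\delta$. A useful sanity check is to verify parts a) and b) in massless Minkowski (where $u=1$, $v=w=0$, $f=0$) and part c) in de Sitter spacetime (where $R$ and $[w]$ can be read off explicitly, as in the proof of Lemma \ref{lem_productidentities} b)), which fixes the curvature coefficient independently of the algebraic bookkeeping.
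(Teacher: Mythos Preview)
Your strategy is correct and matches the paper's: both regularisations extend $\Delta_F^n$ from $\M^2\setminus d_2$ to $\M^2$, so their difference is a local counterterm which one then identifies explicitly. The paper, however, packages the computation more efficiently than your term-by-term expansion. Instead of writing $\Delta_F=A+R$ and expanding $(1+R/A)^{n+\alpha}$, it sets $h:=8\pi^2\sigma_F\Delta_F$ and $k:=\sqrt{8\pi^2/h}$, so that
\[
\frac{1}{M^{2\alpha}}\Delta_F^{n+\alpha}=\frac{h^n}{(8\pi^2)^n}\,\frac{1}{(Mk)^{2\alpha}\sigma_F^{n+\alpha}}\,,
\]
and all $\alpha$-dependence is concentrated in $(Mk)^{-2\alpha}\sigma_F^{-(n+\alpha)}$, whose Laurent expansion for \emph{general} $k$ was already computed in \eqref{eq_generalexpansion}. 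One then multiplies by the $\alpha$-independent prefactor $h^n/(8\pi^2)^n$ and uses Lemma \ref{lem_productidentities} c) to identify $h^n(8\pi^2)^{-n}(1/\sigma_F^n)_\ms$ with $(\Delta_F^n)_\ms$; the only local computation needed is the coincidence limit of the $\log(k^2)$-term against $(\Box+f)\sigma_F^{-1}$ (for $n=2$) or $(\Box+2f)(\Box+f)\sigma_F^{-1}$ (for $n=3$), which by Lemma \ref{lem_productidentities} a), b) reduces to evaluating $[h^2]=1$ and $[\Box(h^3\log h)]$. This sidesteps exactly the bookkeeping you flag as the main obstacle: the contributions from $u^\alpha$, $(8\pi^2)^{-\alpha}$, and the $v$- and $w$-pieces are absorbed into the single smooth factor $h$ rather than tracked separately. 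The paper's use of $[v]=0$ is also slightly sharper than yours: since $v=[\nabla_a v]\sigma^a+O(\sigma)$, the $\sigma_F^{-2}\log(M^2\sigma_F)$ term in $\Delta_F^3$ acquires an extra factor of $\sigma^a$ and therefore has scaling degree $<4$, so it never needs renormalisation in the first place --- one is not killing an already-computed counterterm but avoiding its computation altogether.
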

\begin{proof}
$a)$ Setting $h=8\pi^2\sigma_F \Delta_F$ and $k=\sqrt{8\pi^2/h}$, we obtain
$$\frac{1}{M^{2\alpha}}\Delta_F^{2+\alpha}=\frac{h^2}{(8\pi^2)^2}\frac{1}{(Mk)^{2\alpha}}\frac{1}{\sigma_F^{2+\alpha}}\,.$$ 
Using \eqref{eq_generalexpansion}, $[h^2]=[u^2]=1$ and Lemma \ref{lem_productidentities} a), b) \& c) we may compute
\begin{align*}&\lim_{\alpha\to 0}\left(\frac{1}{M^{2\alpha}}\Delta_F^{2+\alpha}-\pp\frac{1}{M^{2\alpha}}\Delta_F^{2+\alpha}\right)\\
=\quad&\frac{h^2}{(8\pi^2)^2}\lim_{\alpha\to 0}\left(\frac{1}{(Mk)^{2\alpha}}\frac{1}{\sigma_F^{2+\alpha}}-\pp\frac{1}{(Mk)^{2\alpha}}\frac{1}{\sigma_F^{2+\alpha}}\right)\\
=\quad&\frac{h^2}{(8\pi^2)^2} \left(\left(\frac{1}{\sigma^2_F}\right)_\ms-\frac{\log (k^2)}{2}\left(\Box + f\right)\frac{1}{\sigma_F}\right)=(\Delta^2_F)_\ms-\frac{i\log(8\pi^2)}{16\pi^2}\delta\,.\end{align*}

$b)$ In analogy to $a)$, we may compute
\begin{align*}&\lim_{\alpha\to 0}\left(\frac{d}{d\alpha}\frac{1}{M^{2\alpha}}\Delta_F^{2+\alpha}-\pp\frac{d}{d\alpha}\frac{1}{M^{2\alpha}}\Delta_F^{2+\alpha}\right)\\
=\quad&\frac{h^2}{(8\pi^2)^2} \left(-\left(\frac{\log \left(M^2 \sigma_F\right)}{\sigma^2_F}\right)_\ms-\log\left(\frac{8\pi^2}{h^2}\right)\left(\frac{1}{\sigma^2_F}\right)_\ms+\frac{\log^2 \left(\frac{8\pi^2}{h^2}\right)}{4}\left(\Box + f\right)\frac{1}{\sigma_F}\right)\\
=\quad &\left(\Delta^2_F\log \left(M^{-2}\Delta_F\right)\right)_\ms+\frac{i\log^2(8\pi^2)}{32\pi^2}\delta\,.\end{align*}
 
$c)$ This can be proven in analogy to a) and b), whereby one also needs Lemma \ref{lem_productidentities} d) and the fact that $[v]=0$ implies by means of the covariant expansion of bitensors near the diagonal (see e.g. \cite[Section 5]{Poisson:2011nh}) that 
$$v=[v]+([\nabla_a v]-\nabla_a[v])\sigma^a+\R_v=[\nabla_a v]\sigma^a+\R_v\,,$$
where the remainder term $\R_v$ vanishes towards the diagonal fast than $\sigma_a$. Thus, the assumption $[v]=0$ implies that the term in $\Delta^3_F$ proportional to $\sigma^{-2}_F \log M^2\sigma_F$ does not need to be renormalised, which is crucial for the present proof. The correction term arises from the  $\log h/(8\pi^2)$ term in the expansion of $$\frac{1}{(Mk)^{2\alpha}}\frac{1}{\sigma_F^{3+\alpha}}$$ whose contribution may be computed as
$$\frac{h^3 \log\left( \frac{h}{8\pi^2}\right)}{8(8\pi^2)^3}(\Box+2f)(\Box+f)\frac{1}{\sigma_F}=-\frac{i\left(-\log(8\pi^2)\frac{R}{3}-[\Box h^3 \log (h)]\right)\delta}{8(8\pi^2)^2}=$$
$$=\frac{i\left(-\log(8\pi^2)\frac{R}{3}+[\Box u + 8\pi^2 w\Box \sigma]\right)\delta}{8(8\pi^2)^2}=\frac{i\left((1+2\log(8\pi^2))R+192\pi^2[w]\right)}{48(8\pi^2)^2}\delta\,,$$
where again Lemma \ref{lem_productidentities} a) \& b) prove to be  useful.
\end{proof}

\subsubsection{A more complicated graph}
\label{sec:complicatedgraph}
In order to show how the proposed renormalisation scheme works for graphs which have more than two vertices we discuss the renormalisation of the following triangular graph
\[
\tau_\Gamma := \Delta_{F,13}\Delta_{F,23}\Delta_{F,12}^2\,,
\]
where $\Delta_{F,ij}:=\Delta_F(x_i,x_j)$. In order to apply the forest formula \eqref{eq:forset-formula} to renormalise this graph, we note that the forests which correspond to divergent contributions are 
\begin{gather*}
\{12\}\,,\quad\{123\}\,,\quad \{12,123\}\,.
\end{gather*} 
The renormalisation of $\tau_\Gamma$ thus reads
\[
(\tau_\Gamma)_\ms = (1+R_{12}+R_{123}+R_{123}R_{12}) \tau^{(\balpha)}_\Gamma   =   (1+R_{123})(1+R_{12}) \tau^{(\balpha)}_\Gamma.
\]
In order to illustrate the explicit form of the $R$, we consider only the most singular contribution to $\tau^{(\balpha)}_\Gamma$, namely
\[
t_{\Gamma,0}^{(\balpha)} := \frac{1}{\sigma_{13}^{1+\alpha_{13}}} \frac{1}{\sigma_{12}^{2(1+\alpha_{12})}} \frac{1}{\sigma_{23}^{1+\alpha_{23}}}\,,
\]
where $\sigma_{ij} := \sigma_F(x_1,x_j)$. Note that, with obvious notation, $(8\pi^2)^{-4}u_{13} u^2_{12} u_{23} t_{\Gamma,0}$ is in fact the only contribution to $\tau_\Gamma$ which needs to be renormalised. The application of $1+R_{12}$ to $t_{\Gamma,0}^{(\balpha)}$ has already been discussed in the preceding sections and corresponds to the renormalisation of the fish graph. Indeed, after setting $\alpha_{12}$, $\alpha_{23}$ and $\alpha_{13}$ to $\alpha=\alpha_I$ for $I=\{1,2,3\}$ we obtain
\[
t_{\Gamma,1}^{(\alpha)} := \lim_{\alpha_{ij}\to\alpha}(1+R_{12}) t_{\Gamma,0}^{(\balpha)}  = \left(\left(\frac{1}{\sigma_{12}^2}\right)_\ms + O(\alpha)\right)
\frac{1}{(\sigma_{13})^{1+\alpha}} \frac{1}{(\sigma_{23})^{1+\alpha}}.
\]
The distribution $(1/\sigma^2_{12})_\ms$ is a homogeneous distribution of degree $\delta=-4$ under scaling of $x_2$ towards $x_1$, consequently, $t_{\Gamma,1}^{(\alpha)}$ has scaling degree $8+4\alpha$. 

Owing to Proposition \ref{pr:almost-homo}, we know that $t_{\Gamma,1}^{(\alpha)}$ can be decomposed into the sum of a homogeneous distribution of degree $-8-4\alpha$ and a remainder. Hence, in order to expose the poles of $t_{\Gamma,1}^{(\alpha)}$,    
we can directly apply Proposition \ref{pr:expose-poles} with $m=1$ and $c_0 = -4\alpha$. To this end, we set $u_0 := t_{\Gamma,1}^{(\alpha)}$ 
and find 
\[
u_1 := -4\alpha u_0 - E^\dagger_1 u_0   = \left(\left(\frac{1}{\sigma_{12}^{2}} \right)_\ms + O(\alpha) \right) \frac{1}{(\sigma_{13})^{1+\alpha}} \frac{1}{(\sigma_{23})^{2+\alpha}} \,G\,,
\]
where $G=G(x_1,x_2,x_3)$ is the smooth function introduced in Lemma \ref{le:rho-over-squared}. 
From \eqref{eq:expose-poles} we can infer that the principal part of $t_{\Gamma,1}^{(\alpha)}$ is
\[
\pp \,t_{\Gamma,1}^{(\alpha)} = -\frac{1}{4\alpha} \left(E^\dagger_1 +\frac{G}{\sigma_{23}}\right) \left( \left(\frac{1}{\sigma_{12}^2}\right)_\ms\frac{1}{\sigma_{13}} \frac{1}{\sigma_{23}}\right)\,,
\]
whereas the constant regular part can be easily computed as well. Consequently, the renormalised distribution
\[
(t_{\Gamma,0})_\ms = \lim_{\alpha\to 0} \left( t_{\Gamma,1}^{(\alpha)} - \pp \,t_{\Gamma,1}^{(\alpha)}   \right)
\]
can be straightforwardly computed in explicit terms.



\section{Explicit computations in cosmological spacetimes}
\label{sec_FRW}

The aim of this section is provide pr\^et-\`a-porter formulae for doing perturbative computations in the renormalisation scheme devised in the previous sections for the special case of Friedmann--Lema\^itre--Robertson--Walker (FLRW) spacetimes. We thus consider spacetimes $(\M,g)$ of the form $\M=I\times \bbR^3\subset \bbR^4$ and, in comoving coordinates,
$$g= -dt^2 + a(t)^2 d\vec{x}^2=a(\tau)^2\left(-d\tau^2+d\vec{x}^2\right).$$
Here, $t$ is cosmological time and $\tau$ is conformal time related to $t$ by $dt = a d\tau$ and 
\beq\label{eq_RFLRW} H:=\partial_t \log (a) = \frac{\partial_\tau a}{a^2}=:\frac{\H}{a}\,,\qquad R=6(\partial_t H + 2 H^2)=\frac{\partial^2_\tau a}{a^3}\,.\eeq
We consider here the spatially flat FLRW case for simplicity. Note that these spacetimes are normal neighbourhoods so that \eqref{eq_DeltaF} can be considered as a global expression and all Feynman amplitudes can be analytically regularised without the need of introducing partitions of unity such as in Remark \ref{rem:geodesicneighbourhood}.

\subsection{Propagators in Fourier space}

In comoving coordinates with conformal time, the Klein-Gordon operator reads
$$P=-\Box + \xi R + m^2=\frac{1}{a(\tau)^3}\left(\partial^2_\tau-\vec{\nabla}^2 + \left(\xi-\frac16\right)R a^2+m^2a^2\right)a(\tau).$$
It is convenient to employ Fourier transformations with respect to the spatial coordinates in order to expand quantities in QFT on FLRW spacetimes in terms of mode solutions of the free Klein-Gordon equation
$$\phi_{\vec{k}}(\tau,\vec{x})=\frac{\chi_k(\tau)e^{i\vec{k}\vec{x}}}{(2\pi)^{\frac32}a(\tau)},$$
where the temporal modes $\chi_k(\tau)$ satisfy
\begin{equation}\label{eq_modesode}
\left(\partial^2_\tau+k^2+m^2a^2 + \left(\xi-\frac16\right)R a^2\right)\chi_k(\tau)=0
\end{equation}
and the normalisation condition
\begin{equation}\label{eq_modesnormal}
{\chi_k}\partial_\tau \overline{\chi_k}-\overline{\chi_k}\partial_\tau{\chi_k}=i\,.
\end{equation}
Here, $k:= |\vec{k}|$ and $\overline{\cdot}$ denotes complex conjugation.

In particular, we can use the mode expansion in order to give explicit expressions for the various propagators of the free Klein-Gordon quantum field in a pure, Gaussian, homogeneous and isotropic state $\Omega$ (see \cite{Lueders:1990np, Pinamonti:2010is, Zschoche:2013ola} for associated technical conditions on the mode functions). To this avail, we define
\begin{equation}\label{eq_propagatorsfourier}\Delta_\sharp(x_1,x_2)=:\lim_{\epsilon\downarrow 0}\frac{1}{8\pi^3a(\tau_1)a(\tau_2)}\int_{\bbR^3} d^3k\; \widehat{\Delta_\sharp}(\tau_1,\tau_2,k)\,e^{i\vec{k}(\vec{x}_1-\vec{x}_2)-\epsilon k}\,,\end{equation}
where $\Delta_\sharp$ stands for either $\Delta_+$ (two-point function), $\Delta_{R/A}$ (retarded/advanced propagator) or $\Delta_F$ (Feynman propagator). See Section \ref{sec_propagators} for our conventions for these propagators and their relations. Recall that our renormalisation scheme preserves invariance under spacetime isometries and thus we know that renormalised powers of the Feynman propagator may also be written in the form \eqref{eq_propagatorsfourier}.

The Fourier versions of the single propagators read
\begin{gather}
\widehat{\Delta_+}(\tau_1,\tau_2,k)=\chi_k(\tau_1)\overline{\chi_k(\tau_2)}\,,\qquad \widehat{\Delta_-}(\tau_1,\tau_2,k)= \overline{\widehat{\Delta_+}(\tau_1,\tau_2,k)}\,,\notag\\
\widehat{\Delta_F}(\tau_1,\tau_2,k)=\Theta(\tau_1-\tau_2)\widehat{\Delta_+}(\tau_1,\tau_2,k)+\Theta(\tau_2-\tau_1)\widehat{\Delta_-}(\tau_1,\tau_2,k)\label{eq_propagatorsfourierexp}\,,\\
\widehat{\Delta_{R/A}}(\tau_1,\tau_2,k)=\mp i \Theta\left(\pm(\tau_1-\tau_2)\right)\left(\widehat{\Delta_+}(\tau_1,\tau_2,k)-\widehat{\Delta_-}(\tau_1,\tau_2,k)\right)\,,\notag
\end{gather}
whereas by the convolution theorem, we have the following Fourier versions of products and convolutions of multiple propagators, provided those products and convolutions are well-defined. 
Defining
\begin{gather}
\left[\Delta_{\sharp_1}\ast_4\Delta_{\sharp_2}\right](x,y):=\int_\M d^4x \sqrt{-g}\; \Delta_{\sharp_1}(x_1,x)\Delta_{\sharp_2}(x,x_2)\notag\\
\left[\widehat{\Delta_{\sharp_1}}\ast_1\widehat{\Delta_{\sharp_2}}\right](\tau_1,\tau_2,k):=\int_I d\tau \;a(\tau)^2 \,\widehat{\Delta_{\sharp_1}}(\tau_1,\tau,k)\,\widehat{\Delta_{\sharp_2}}(\tau,\tau_2,k)\label{eq_defconvolutions}\\
\left[\widehat{\Delta_{\sharp_1}}\ast_3\widehat{\Delta_{\sharp_2}}\right](\tau_1,\tau_2,k):=\int_{\bbR^3} d^3p\;\widehat{\Delta_{\sharp_1}}(\tau_1,\tau_2,p)\widehat{\Delta_{\sharp_2}}\left(\tau_1,\tau_2,|\vec{k}-\vec{p}|\right)\notag
\end{gather}
we have
\begin{gather}\label{eq_convolutionidentities}
\widehat{\prod^n_{i=1}\Delta_{\sharp_i}}(\tau_1,\tau_2,k)=\frac{1}{\left((2\pi)^3 a(\tau_1)^{2}a(\tau_2)^{2}\right)^{n-1}}\left[\widehat{\Delta_{\sharp_1}}\ast_3\cdots\ast_3\widehat{\Delta_{\sharp_n}}\right](\tau_1,\tau_2,k)\,,\\
\widehat{\Delta_{\sharp_1}\ast_4\cdots\ast_4\Delta_{\sharp_n}}=\widehat{\Delta_{\sharp_1}}\ast_1\cdots\ast_1
\widehat{\Delta_{\sharp_n}}\,.\notag
\end{gather}

Choosing a pure, Gaussian, homogeneous and isotropic state $\Omega$ of the quantized free Klein-Gordon field on a spatially flat FLRW spacetimes amounts to choosing a solution of \eqref{eq_modesode} and \eqref{eq_modesnormal} for each $k$. In order for $\Omega$ to be a Hadamard state the temporal modes $\chi_k$ have to satisfy certain conditions in the limit of large $k$ which are difficult to formulate precisely. Heuristically, a necessary but not sufficient condition is that the dominant part of $\chi_k$ for large $k$, when the mass and curvature terms in \eqref{eq_modesode} are dominated by $k^2$, is $\frac{1}{\sqrt{2k}}e^{-ik\tau}$, i.e. a positive frequency solution. Note that the retarded and advanced propagators are state-independent and thus $\widehat{\Delta_{R/A}}(\tau_1,\tau_2,k)$ is independent of the particular $\chi_k$ chosen for each $k$.

\subsection{The renormalised fish and sunset graphs in Fourier space}
\label{sec:sunfishflrw}

In perturbative calculations at low orders we encounter (pointwise) powers of $\Delta_\pm$ and $\Delta_F$. While the powers of $\Delta_\pm$ are well-defined if $\Omega$ is a Hadamard state on account of the wave front set properties of these distributions, we need to renormalise the powers of $\Delta_F$ by means of the scheme developed in the previous sections. In order to be useful for explicit computations in FLRW spacetimes, we have to develop a spatial Fourier--space version of this scheme. Having in mind the application to $\phi^4$ theory, we shall compute $\widehat{(\Delta_{F})^n_\ms}(\tau_1,\tau_2,k)$ for $n=2,3$. The difficulty in achieving this is that, to our knowledge, despite of the large symmetry of flat FLRW spacetimes, neither $\sigma$ nor the Hadamard coefficients $u$, $v$ and $w$ may written in a tractable form which can be Fourier--transformed easily. Our strategy to circumvent this problem is the following.\\\\
\noindent{\bfseries Computational strategy}

\begin{enumerate}
\item For a general mass $m$ and coupling to the scalar curvature $\xi$ and a general homogeneous and isotropic, pure and Gaussian Hadamard state $\Omega$, split $\Delta_F$ as
\beq\label{eq_propagatorsplit}\Delta_F=\Delta_{F,0}+d\,,\qquad d:=\Delta_F-\Delta_{F,0}\,,\eeq
where $\Delta_{F,0}$ must satisfy the following conditions.
\begin{itemize}
\item $\Delta_{F,0}$ is explicitly known in position space and Fourier space.
\item $\Delta_{F,0}$ is of the form
$$\Delta_{F,0}=\frac{1}{8\pi^2}\left(\frac{u_0}{\sigma_F}+v_0\log \left(M^2\sigma_F\right)\right)+w_0\,,$$
with $u_0=u$, i.e. it agrees with $\Delta_F$ in the most singular term but not necessarily in the subleading singularities.
\item $[v_0]=0$ and $\Delta^{\alpha}_{F,0}$ is weakly meromorphic in $\alpha$ such that $\left(\Delta^2_{F,0}\right)_\ms$, $\left(\Delta^2_{F,0}\log \left(M^{-2} \Delta_{F,0}\right)\right)_\ms$ and  $\left(\Delta^3_{F,0}\right)_\ms$ may be computed with Proposition \ref{prop_equivalentscheme}. This is crucial for preserving the explicit knowledge of $\Delta_{F,0}$ in position space in the renormalisation procedure, so that one may hope to compute the Fourier transforms of the renormalised powers.
\end{itemize}

\item With these assumptions on $\Delta_{F,0}$ it follows that the renormalised fish and sunset graphs may be computed as
\beq\label{eq_fishsunsetalt}(\Delta_{F})^2_\ms = (\Delta_{F,0})^2_\ms + 2 \Delta_{F,0} d+ d^2\eeq
$$(\Delta_{F})^3_\ms = (\Delta_{F,0})^3_\ms + 3 \left(\Delta^2_{F,0} d\right)_\ms+ 3  \Delta_{F,0} d^2+ d^3$$
because the non--renormalised terms in the above formulae are distributions with scaling degree $<4$ for $y\to x$ and thus can be directly and uniquely extended to the diagonal.

\item $\left(\Delta^2_{F,0}\right)_\ms$ and  $\left(\Delta^3_{F,0}\right)_\ms$ may be computed with Proposition \ref{prop_equivalentscheme} as anticipated. In order to compute $\left(\Delta^2_{F,0} d\right)_\ms$, we further split $d$ as
\beq\label{eq_dsplit}d=d_1+d_2\,,\qquad d_1:= -\frac{[v] \log \left(M^{-2} \Delta_{F,0}\right)}{8\pi^2}\,,\qquad d_2:=d-d_1\,.\eeq
Because $v=[v]+O(\sigma_a)$, $d_1$ contains the leading logarithmic singularity in $d$ (and thus $\Delta_F$) which is the only logarithmic singularity relevant for the renormalisation of the sunset graph. Consequently
\beq\label{eq_fishsunsetalt2}\left(\Delta^2_{F,0} d\right)_\ms=- \frac{[v]}{8\pi^2}\left(\Delta^2_{F,0}\log \left(M^{-2} \Delta_{F,0}\right)\right)_\ms + d_2 \left(\Delta^2_{F,0}\right)_\ms\,,\eeq
and thus Proposition \ref{prop_equivalentscheme} can be applied again.
\item Due to the symmetry of FLRW spacetimes and the assumption that the pure and Gaussian Hadamard state $\Omega$ is invariant under this symmetry, $[v]$ and $[w]$ do not depend on the spatial coordinates. Given that one succeeds to compute the spatial Fourier transforms of $\log \left(M^{-2} \Delta_{F,0}\right)$, $\left(\Delta^2_{F,0}\right)_\ms$, $\left(\Delta^2_{F,0}\log \left(M^{-2} \Delta_{F,0}\right)\right)_\ms$ and  $\left(\Delta^3_{F,0}\right)_\ms$, $\widehat{(\Delta_{F})^n_\ms}(\tau_1,\tau_2,k)$ for $n\in\{2,3\}$ may be computed by means of the convolution identities \eqref{eq_convolutionidentities}, since the Fourier transforms of $\Delta_{F,0}$, $d_1$ and $d_2$ are known by construction.
\end{enumerate}

In order to follow the computational strategy outlined above, we first compute $[v]$ and $[w]$.  Indeed, the coinciding point limit of the Hadamard coefficient $v$ reads (see e.g. \cite[Section III.1.2]{Hack:2010iw} for details)
\begin{equation}
\label{eq_coincidingV}[v]=\frac{m^2+\left(\xi-\frac16\right)R}{2}\,.
\end{equation}
Moreover, using the method of \cite{Schlemmer} to compute a spatial Fourier representation of the Hadamard parametrix $H_F$ -- here considered as \eqref{eq_DeltaF} with $w=0$ -- in FLRW spacetimes, one can compute (see the review in \cite{Degner} and a related method in \cite{Pinamonti:2010is} for the conformally coupled case)
\begin{eqnarray}
[w]&=&\lim_{x\to y}\left(\Delta_F(x,y)-H_F(x,y)\right)\notag\\
\label{eq_coincidingW}&=&\frac{1}{(2\pi)^3 a^2}\int\limits_{\bbR^3}d^3k\; |\chi_k(\tau)|^2-\frac{1}{2\sqrt{k^2+a^2m^2+a^2\left(\xi-\frac16\right)R}}\\
&&\quad +\frac{1}{16\pi^2}\left(m^2+\left(\xi-\frac16\right)R\right)\left(2\gamma-1+\log\left(
\frac{m^2+\left(\xi-\frac16\right)R}{2M^2}\right)\right)-\frac{R}{36(8\pi^2)}\notag\,,
\end{eqnarray}
where $\gamma$ is the Euler-Mascheroni constant and $H_F$ is taken with the mass scale $M$ inside of the logarithm of $\sigma$\footnote{Note that one may take instead of the function  $F(k)=1/(2\sqrt{k^2+a^2m^2+a^2\left(\xi-\frac16\right)R})$ in \eqref{eq_coincidingW} any distribution $F'(k)$ such that $F'(k)-F(k)$ is $O(k^{-5})$ for large $k$ and integrable. By taking e.g. $F'(k)=1/(2k)-\Theta(k-am)(a^2m^2+a^2\left(\xi-\frac16\right)R)/(4k^3)$ one may cancel the $\log R$ term outside of the integral.}. 

As anticipated we see that $[v]$ and $[w]$ are functions of time only (recall \eqref{eq_RFLRW}). Moreover, we see that $[v]=0$ for a conformally coupled ($\xi=\frac16$) massless scalar field. Thus, in order to pursue our computational strategy, we should look for a candidate for $\Delta_{F,0}$ among the Feynman propagators in suitable states of this theory. In fact, choosing the conformal vacuum state of the massless conformally coupled scalar field does the job. The conformal vacuum is given by choosing the modes $\chi_k(\tau)=e^{-ik\tau}/\sqrt{2k}$, and thus the Feynman propagator $\Delta_{F,0}$ in this state is of the form
\begin{equation}\label{eq_propagatorconformal}
\Delta_{F,0}(x_1,x_2)=\frac{1}{8\pi^2 a(\tau_1)a(\tau_2)}\frac{1}{\sigma_{F,\bbM}(x_1,x_2)}\,,\qquad \widehat{\Delta_{F,0}}(\tau_1,\tau_2,k)=\frac{e^{-ik|\tau_1-\tau_2|}}{2k}\,.
\end{equation}
Here, and in the following, the index $_{\bbM}$ indicates quantities in Minkowski spacetime, in particular $\sigma_{\bbM}(x_1,x_2)=\frac12(\vec{x}_1-\vec{x_2})^2-\frac12(\tau_1-\tau_2)^2$. $\Delta^\alpha_{F,0}$ is weakly meromorphic in $\alpha$ because the massless vacuum Feynman propagator in Minkowski spacetime has this property and the conformal rescaling by $a$ does not violate it. Thus, we may follow our computational strategy and compute $\left(\Delta^2_{F,0}\right)_\ms$, $\left(\Delta^2_{F,0}\log \left(M^{-2} \Delta_{F,0}\right)\right)_\ms$ and  $\left(\Delta^3_{F,0}\right)_\ms$ by means of Proposition \ref{prop_equivalentscheme}. This is easily done using \eqref{eq_generalexpansion} for $\sigma_{F,\bbM}$ rather than $\sigma_F$ and $h=\sqrt{8\pi^2 a(\tau_1)a(\tau_2)}=\sqrt{8\pi^2 a\otimes a}$. The results are
\begin{align*}
(\Delta_{F,0})^2_\text{ms}&= \lim_{\alpha\to 0}\left(\frac{1}{M^{2\alpha}}(\Delta_{F,0})^{2+\alpha}-\pp\frac{1}{M^{2\alpha}}(\Delta_{F,0})^{2+\alpha}\right)+\frac{i\log(8\pi^2)}{16\pi^2}\delta\\
&= \lim_{\alpha\to 0}\frac{1}{(8\pi^2)^2a^2\otimes a^2}\left(\frac{1}{(M\sqrt{8\pi^2 a\otimes a})^{2\alpha}}\frac{1}{\sigma_{F,\bbM}^{2+\alpha}}-\pp \frac{1}{(M\sqrt{8\pi^2 a\otimes a})^{2\alpha}}\frac{1}{\sigma_{F,\bbM}^{2+\alpha}}\right)+\frac{i\log(8\pi^2)}{16\pi^2}\delta\\
&=-\frac{1+2\log (a)}{16\pi^2 a^4}i\delta_\bbM-\frac{1}{2(8\pi^2)^2 a^2\otimes a^2}\Box_{\bbM}\frac{\log\left(M^2\sigma_{F,\bbM}\right)}{\sigma_{F,\bbM}}\,,
\end{align*}
\begin{align*}
\left(\Delta^2_{F,0}\log \left(M^{-2} \Delta_{F,0}\right)\right)_\text{ms}&= \lim_{\alpha\to 0}\left(\frac{d}{d\alpha}\frac{1}{M^{2\alpha}}(\Delta_{F,0})^{2+\alpha}-\pp\frac{d}{d\alpha}\frac{1}{M^{2\alpha}}(\Delta_{F,0})^{2+\alpha}\right)-\frac{i\log^2(8\pi^2)}{32\pi^2}\delta\\
&=\frac{2+2\log (a^2 8\pi^2)+\log^2 (a^2)}{32\pi^2 a^4}i\delta_\bbM+\frac{1}{4(8\pi^2)^2 a^2\otimes a^2}\Box_{\bbM}\frac{\log^2\left(M^2\sigma_{F,\bbM}\right)}{\sigma_{F,\bbM}}\\
&\quad +\frac{1+\log(8\pi^2) a\otimes a}{2(8\pi^2)^2 a^2\otimes a^2}\Box_{\bbM}\frac{\log\left(M^2\sigma_{F,\bbM}\right)}{\sigma_{F,\bbM}} \,,
\end{align*}
and
\begin{align*}
(\Delta_{F,0})^3_\text{ms}&= \lim_{\alpha\to 0}\left(\frac{1}{M^{2\alpha}}(\Delta_{F,0})^{3+\alpha}-\pp\frac{1}{M^{2\alpha}}(\Delta_{F,0})^{3+\alpha}\right)+\frac{i\left((1+2\log(8\pi^2))R+192\pi^2[w]\right)}{48(8\pi^2)^2}\delta\\
&=-\frac{(15+12\log (a))\Box_\bbM+6(\Box_\bbM \log (a))+2(\partial^2_\tau a)/a}{48(8\pi^2)^2a^6}i\delta_\bbM-\frac{1}{8(8\pi^2)^3 a^3\otimes a^3}\Box^2_{\bbM}\frac{\log\left(M^2\sigma_{F,\bbM}\right)}{\sigma_{F,\bbM}}\,.
\end{align*}
where we have used $\delta=\delta_{\bbM}/a^4$, $f_{\bbM}=0$ and the fact that, by \eqref{eq_coincidingW}, $8\pi^2[w_0]=-R/36$ for the conformal vacuum state of the massless, conformally coupled scalar field.

Using these results as well as the Fourier representation of $1/\sigma_{F,\bbM}$ \eqref{eq_propagatorconformal} and $\log \left(M^2\sigma_{F,\bbM}\right)$
\eqref{eq_flog}, and convolution identities, we can finally obtain the Fourier versions of the renormalised powers of $\Delta_{F,0}$. For instance, we find for $\left(\Delta^2_{F,0}\right)_\ms$
\begin{gather}\label{eq_fouriersquare}
\widehat{\left(\Delta^2_{F,0}\right)_\text{ms}}(\tau_1,\tau_2,k)=-\frac{1+2\log (a(\tau_1))}{16a(\tau_1)^2\pi^2}\delta(\tau_1-\tau_2)-\\
-\frac{1}{16\pi^3 a(\tau_1)a(\tau_2)}(\partial^2_{\tau_1}+k^2)\int_{\bbR^3}d^3p\,\left(\frac12\left(\frac{1}{p^3}\right)_{\text{ren},M}+\frac{i|\tau_1-\tau_2|}{2p^2}\right)\frac{1}{2|\vec{k}-\vec{p}|}e^{-i(p+|\vec{k}-\vec{p}|)|\tau_1-\tau_2|},\notag
\end{gather}
where the appearing renormalisation of $1/p^3$ is defined in \eqref{eq_regk3}. Note that the $\vec{p}$-integral has no convergence problems for large $p$ because one may write the potentially dangerous $-i|\tau_1-\tau_2|e^{-2ip|\tau_1-\tau_2|}/p$ contribution as $\partial_p( e^{-2ip|\tau_1-\tau_2|}/(2p^2))$ plus an $O(p^{-3})$ term. Regarding the convergence for small $p$ we observe that the integral is manifestly convergent if $k\neq0$, thus yielding a well-defined distribution in $\vec{k}$ on $\bbR^3\setminus\{0\}$. The scaling degree of this distribution is easily seen to be $1<3$ and thus a unique extension towards the origin exists. In practical terms this means that the integral for $k=0$ may be computed as a limit $k\to0$ of the integral with nonvanishing $k$ without any renormalisation.

\subsection{Example: the two-point function for a quartic potential up to second order}

In order to compute the analytic expressions corresponding to the graphs in Figure \ref{fig_2pf}, we may use the Fourier versions of the appearing propagators \eqref{eq_propagatorsfourierexp}, \eqref{eq_fouriersquare}, and the analogous expressions for $\widehat{\left(\Delta^2_{F,0}\log\left(M^{-2}\Delta^2_{F,0}\right) \right)_\text{ms}}(\tau_1,\tau_2,k)$ and 
$\widehat{\left(\Delta^3_{F,0}\right)_\text{ms}}(\tau_1,\tau_2,k)$ the explicit form of $\mu(x)=3\lambda w(x,x)$ in \eqref{eq_coincidingW}, as well as \eqref{eq_fishsunsetalt}, \eqref{eq_fishsunsetalt2} and the identities for products and convolutions \eqref{eq_defconvolutions}, \eqref{eq_convolutionidentities}. Note that $\mu(x)$ is in fact only time-dependent because $\Omega$ was chosen homogeneous and isotropic. Thus the integrals with $\mu$-vertices can be computed partly with the above-mentioned identities by means of 
$$\widehat{(1\otimes \mu) \Delta_{\sharp}}(\tau_1,\tau_2,k)=\mu(\tau_2)\widehat{\Delta_{\sharp}}(\tau_1,\tau_2,k),\qquad\widehat{(\mu\otimes 1) \Delta_{\sharp}}(\tau_1,\tau_2,k)=\mu(\tau_1)\widehat{\Delta_{\sharp}}(\tau_1,\tau_2,k).$$
Similarly, the bubbles in the third line of Figure \ref{fig_2pf} contribute only time-dependent vertex factors which can be computed as 
$$h_\sharp(\tau):=\int_\M d\tau_1 d^3x_1\; a(\tau_1)^4\mu(\tau_1)\Delta_\sharp(\tau,\tau_1,\vec{x}-\vec{x}_1)=\frac{1}{a(\tau)}\int_I d\tau_1\;a(\tau_1)^3 \mu(\tau_1)\widehat{\Delta_\sharp}(\tau,\tau_1,0)$$
where $\Delta_\sharp$ is either $\Delta^2_+$ or $\left(\Delta^2_F\right)_\ms$.

With these preparations, we can compute e.g. the first graphs of the fourth and fifth line in Figure \ref{fig_2pf} 
in Fourier space as
\begin{align*}
\widehat{\Delta_R\ast_4((h_F\otimes 1)\Delta_+)}&=\widehat{\Delta_R}\ast_1\widehat{\left((h_F\otimes 1)\Delta_+)\right)}\\
&=\int_{I^2} d\tau_3\,d\tau_4\; a(\tau_3)a(\tau_4)^3\mu(\tau_4)\widehat{\Delta_R}(\tau_1,\tau_3,k)\widehat{\Delta_+}(\tau_3,\tau_2,k)\widehat{(\Delta^2_F)_\text{ms}}(\tau_3,\tau_4,0)
\end{align*}
and
\begin{align*}
\widehat{\Delta_R\ast_4(\Delta_F)^3_\text{ms}\ast_4\Delta_+}&=\widehat{\Delta_R}\ast_1\widehat{(\Delta_F)^3_\text{ms}}\ast_1\widehat{\Delta_+}\\
&=\int_{I^2} d\tau_3\,d\tau_4\; a(\tau_3)^2a(\tau_4)^2\widehat{\Delta_R}(\tau_1,\tau_3,k)\widehat{(\Delta^3_F)_\text{ms}}(\tau_3,\tau_4,k)\widehat{\Delta_+}(\tau_4,\tau_2,k).
\end{align*}

\subsection{More complicated graphs on cosmological spacetimes}

In order to compute the Fourier transforms of more complicated graphs on FLRW spacetimes, one can use a strategy generalising the one employed in Section \ref{sec:sunfishflrw}. Namely, one again decomposes the Feynman propagator $\Delta_F$ into several pieces which capture the relevant singularities and can be expressed in terms of the conformal vacuum Feynman propagator $\Delta_{F,0}$ whose explicit form in position and Fourier space is well--known in contrast to the form of $\sigma$ itself. The corresponding decomposition of general Feynman amplitudes $\tau_\Gamma$ is straightforward. The only non--trivial step is to generalise Proposition \ref{prop_equivalentscheme} to the case of general amplitudes, i.e. to compute the difference between the minimal subtraction scheme used in conjunction with either analytically regularising powers of $\sigma$ directly or analytically regularising powers of the full propagator $\Delta_{F,0}$. However, we do not foresee any problems in obtaining such a generalisation by proving versions of Lemma \ref{le:rho-over-squared} and Proposition \ref{pr:almost-homo} for $\Delta_{F,0}$ rather than $\sigma$.

In fact, one can also skip this last step by taking a rather pragmatic approach and working directly with the renormalisation scheme consisting of decomposition in $\Delta_{F,0}$, analytic regularisation of powers of this propagator and minimal subtraction of the principal parts. This scheme, clearly applicable only to conformally flat spacetimes, satisfies all properties proved in Proposition \ref{pr:propertiesscheme}, with two exceptions. It is not obvious whether the Principle of Perturbative agreement with respect to generalised mass perturbations holds for this scheme, whereas locality and covariance of course only hold in the sense restricted to conformally flat spacetimes. In this respect it is essential that the Feynman propagator of the conformal vacuum $\Delta_{F,0}$ on conformally flat spacetimes is manifestly ``geometric'', because the corresponding propagator of the massless Minkowski vacuum has this property.


\section{Summary and outlook} 

In this work, we have introduced a renormalisation scheme on curved spacetimes consisting of a particular analytic regularisation of the Feynman propagator, and thus of all Feynman diagrams, and a minimal subtraction of the principal (pole) part of the resulting meromorphic expressions. We have argued that this scheme has all properties that a physically meaningful renormalisation scheme on curved spacetimes should have and that it is in fact a particular form of differential renormalisation. The renormalisation scheme discussed in this work has the advantage that it is 
\begin{enumerate}
\item directly applicable to spacetimes with Lorentzian signature, 
\item manifestly (local and) covariant, 
\item manifestly invariant under any spacetime isometries present,
\item capturing correctly the non--geometric and non--unique state--dependent contribution of Feynman amplitudes and not only the geometric divergent part, which is unique up to finite renormalisations,
\item well--suited for practical computations, e.g. in cosmological spacetimes,
\item constructed to all orders in perturbation theory,
\item and mathematically rigorous.
\end{enumerate}
To the best of our knowledge, other renormalisation schemes on curved spacetimes discussed in the literature such as dimensional regularisation, local momentum space methods, zeta--function regularisation, heat--kernel techniques, generic Epstein--Glaser renormalisation and, on cosmological spacetimes, dimensional regularisation only with respect to spatial variables, lack at least one of the features listed above.

In order to demonstrate the practical applicability of the scheme, we have computed several examples on generic curved spacetimes. Moreover, we have shown how explicit computations in cosmological spacetimes can be done, in particular, how the renormalisation scheme initially defined in position space can be interpreted in terms of quantities Fourier transformed with respect to comoving spatial coordinates.

We have discussed the renormalisation scheme only for scalar fields in four spacetime dimensions, however, the extension to other spacetime dimensions is straightforward. Moreover, as the analytic regularisation discussed in this work consists of regularising only inverse powers of the squared geodesic distance, it can be straightforwardly generalised to field theories with higher spin, with and without gauge--invariance. In particular, spinorial quantities can be directly regularised without the need to worry about their dependence on the dimension such as in dimensional regularisation. Finally, we expect that a generalisation of the scheme introduced in this work to gauge theories yields a scheme which preserves or can be modified to preserve the local gauge symmetry in theories which are free of anomalies.


\begin{acknowledgments}
The authors would like to thank Klaus Fredenhagen and Markus Fr\"ob for interesting discussions. The work of T.-P.H. has been supported by a Research Fellowship of Deutsche Forschungsgemeinschaft (DFG).
\end{acknowledgments}

\appendix

\section{Conventions and computational details}

\subsection{Propagators of the free Klein--Gordon field and their relations}
\label{sec_propagators}
$$\Delta_+(x,y)=\langle \phi(x)\phi(y)\rangle_\Omega = \langle \phi(x)\star\phi(y)\rangle_\Omega=\overline{\langle \phi(y)\phi(x)\rangle_\Omega}=\overline{\Delta_-(x,y)}$$
$$\Delta(x,y)=\frac{1}{i}\left(\Delta_+(x,y)-\Delta_-(x,y)\right)=\Delta_R(x,y)-\Delta_A(x,y)$$
$$\Delta_{R/A}=\pm\Theta(\pm(t_x-t_y))\Delta(x,y)$$
$$\Delta_F(x,y)=\langle T\left(\phi(x)\phi(y)\right)\rangle_\Omega = \langle \phi(x)\cdot_T\phi(y)\rangle_\Omega=\Theta(t_x-t_y)\Delta_+(x,y)+\Theta(t_y-t_x)\Delta_-(x,y)$$
\begin{eqnarray*}
\Rightarrow\qquad \Delta_F(x,y)&=&\frac{1}{2}\left(\Delta_+(x,y)+\Delta_-(x,y)\right)+\frac{i}{2}\left(\Delta_R(x,y)+\Delta_A(x,y)\right)\\
&=&\Delta_+(x,y)+i\Delta_A(x,y)\\
&=&\Delta_-(x,y)+i\Delta_R(x,y)
\end{eqnarray*}


\subsection{Fourier transform of the logarithmic term on Minkowski spacetime}
In order to compute the Fourier-transform of $\log\left( M^2 \sigma_{F,\bbM}\right)$, we recall that the Feynman-propagator of the Klein-Gordon field with mass $m$ in the Minkowski vacuum is given by
\begin{align*}
\Delta_{F,m,\bbM}&=\lim_{\epsilon\downarrow 0}\frac{1}{(2\pi)^3}\int_{\bbR^3}d^3k\, \frac{e^{-i\sqrt{k^2+m^2}|\tau_1-\tau_2|}\,e^{i\vec{k}\left(\vec{x}-\vec{y}\right)}\,e^{-\epsilon k}}{2\sqrt{k^2+m^2}}\\
&=\frac{1}{8\pi^2}\sqrt{\frac{2m^2}{\sigma_{F,\bbM}}}K_1\left(\sqrt{2 m^2\sigma_{F,\bbM}}\right)\\
&=\frac{1}{8\pi^2}\left(\frac{1}{\sigma_{F,\bbM}}+\frac{m^2}{2}\left(1+\frac{m^2 \sigma_{F,\bbM}}{4}\right)\log\left(\frac{e^{2\gamma}m^2\sigma_{F,\bbM}}{2}\right)-\frac{m^2}{2}\left(1+\frac{5m^2\sigma_{F,\bbM}}{8}\right)\right)+O(m^4),
\end{align*}
where $K_1$  is a modified Bessel function and $\gamma$ is the Euler-Mascheroni constant. Using this, we find
\begin{align*}
&\log \left(M^2 \sigma_{F,\bbM}\right)\\
&=\lim_{m\to 0}\left(16\pi^2 \frac{d\Delta_{F,m,\bbM}}{d \,m^2} - 
\log\left(\frac{e^{2\gamma}m^2}{2 M^2}\right)\right)\\
&=-\lim_{m\to 0}\left(\lim_{\epsilon\downarrow 0}\frac{1}{\pi}\int_{\bbR^3}d^3 k \frac{1+i\sqrt{k^2+m^2}|\tau_1-\tau_2|}{2(k^2+m^2)^{\frac32}}  e^{-i\sqrt{k^2+m^2}|\tau_1-\tau_2|}e^{i\vec{k}\left(\vec{x}-\vec{y}\right)}\,e^{-\epsilon k}+ \log\left(\frac{e^{2\gamma}m^2}{2 M^2}\right)\right)\\
&=-\lim_{\epsilon\downarrow 0}\frac{1}{\pi}\int_{\bbR^3}d^3 k\left(\lim_{m\to 0}\left(\frac{1}{2(k^2+m^2)^{\frac32}}+\pi\log\left(\frac{e^{2\gamma}m^2}{2 M^2}\right)\delta(\vec{k})\right)\right.\\
&\qquad\qquad\qquad \left.+\frac{i|\tau_1-\tau_2|}{2k^2}\right)e^{-ik|\tau_1-\tau_2|}e^{i\vec{k}\left(\vec{x}-\vec{y}\right)}\,e^{-\epsilon k}\\
&=-\lim_{\epsilon\downarrow 0}\frac{1}{\pi}\int_{\bbR^3}d^3 k\left(\frac{1}{2}\left(\frac{1}{k^3}\right)_{\text{ren},M}+\frac{i|\tau_1-\tau_2|}{2k^2}\right)e^{-ik|\tau_1-\tau_2|}e^{i\vec{k}\left(\vec{x}-\vec{y}\right)}\,e^{-\epsilon k}\\
&=\frac{1}{(2\pi)^{\frac32}}\lim_{\epsilon\downarrow 0}\int_{\bbR^3}d^3 k \;
\text{flog}(\tau_1-\tau_2,k)\,e^{i\vec{k}\left(\vec{x}-\vec{y}\right)}\,e^{-\epsilon k}
\end{align*}
where the appearing renormalisation of the (tempered) distribution $1/k^3$ is 
\begin{equation}\label{eq_regk3}\left(\frac{1}{k^3}\right)_{\text{ren},M}:=\lim_{m\to 0}\left(\frac{1}{(k^2+m^2)^{\frac32}}+\pi\log\left(\frac{e^{4\gamma}m^4}{4 M^4}\right)\delta(\vec{k})\right)\end{equation}
and
\begin{equation}\label{eq_flog}
\text{flog}(\tau_1-\tau_2,k):=-\sqrt{8\pi}\left(\frac{1}{2}\left(\frac{1}{k^3}\right)_{\text{ren},M}+\frac{i|\tau_1-\tau_2|}{2k^2}\right)e^{-ik|\tau_1-\tau_2|}
\end{equation}
is the sought-for spatial Fourier transform of $\log \left(M^2 \sigma_{F,\bbM}\right)$.



\begin{thebibliography}{100}

\bibitem[Al85]{Allen:1985ux}
  B.~Allen,
  ``Vacuum States in de Sitter Space''.
  Phys.\ Rev.\ D {\bf 32} (1985) 3136.

\bibitem[BCK10]{Baacke:2010bm}
  J.~Baacke, L.~Covi and N.~Kevlishvili,
  ``Coupled scalar fields in a flat FRW universe: Renormalisation'',
  JCAP {\bf 1008} (2010) 026
  [arXiv:1006.2376 [hep-ph]].

\bibitem[BGP07]{BGP}
  C.~B\"ar, N.~Ginoux and F.~Pf\"affle,
  ``Wave Equations on Lorentzian Manifolds and Quantization'',
  1st edn. (Eur. Math. Soc., Z\"urich, 2007).  
  [arXiv:0806.1036 [math.DG]].

\bibitem[BF13]{Bilal:2013iva}
  A.~Bilal and F.~Ferrari,
  ``Multi-Loop Zeta Function Regularization and Spectral Cutoff in Curved Spacetime'',
  Nucl.\ Phys.\ B {\bf 877} (2013) 956
  [arXiv:1307.1689 [hep-th]].  
  
\bibitem[BD82]{Birrell:1982ix}
  N.~D.~Birrell and P.~C.~W.~Davies,
  ``Quantum Fields in Curved Space'',
  Cambridge University Press (1982)  
  
\bibitem[BG72]{BolliniGiambiagi}
  C.~G.~Bollini and J.~J.~Giambiagi,
  ``Dimensional Renormalization: The Number of Dimensions as a Regularizing Parameter'',
  Nuovo Cim.\ B {\bf 12}, 20 (1972).  
  
\bibitem[BDF09]{BDF}
  R.~Brunetti, M.~Duetsch and K.~Fredenhagen,
  ``Perturbative Algebraic Quantum Field Theory and the Renormalization Groups'',
  Adv.\ Theor.\ Math.\ Phys.\  {\bf 13} (2009) 1541
  [arXiv:0901.2038 [math-ph]].
  
\bibitem[BF00]{Brunetti-Fredenhagen:2000}
  R.~Brunetti and K.~Fredenhagen,
  ``Microlocal analysis and interacting quantum field theories: Renormalization on physical backgrounds'',
  Commun.\ Math.\ Phys.\  {\bf 208} (2000) 623
  [math-ph/9903028]. 
  
\bibitem[BFK96]{bfk:1996}
  R.~Brunetti, K.~Fredenhagen and M.~Kohler,
  ``The Microlocal spectrum condition and Wick polynomials of free fields on curved space-times'',
  Commun.\ Math.\ Phys.\  {\bf 180} (1996) 633
  [gr-qc/9510056]. 
  
\bibitem[BFV01]{Brunetti:2001dx}
  R.~Brunetti, K.~Fredenhagen and R.~Verch,
  ``The Generally covariant locality principle: A New paradigm for local quantum field theory'',
  Commun.\ Math.\ Phys.\  {\bf 237} (2003) 31
  [math-ph/0112041]. 
  
\bibitem[BOS92]{Buchbinder:1992rb}
  I.~L.~Buchbinder, S.~D.~Odintsov and I.~L.~Shapiro,
  ``Effective action in quantum gravity,''
  Bristol, UK: IOP (1992) 413 p  

\bibitem[Bu81]{Bunch:1981er}
  T.~S.~Bunch,
  ``Bphz Renormalization of $\Lambda \phi^4$ Field Theory in Curved Space-time'',
  Annals Phys.\  {\bf 131} (1981) 118. 

\bibitem[Da13]{DangPHD}
  N.~V.~Dang,
  ``Renormalization of quantum field theory on curved space-times, a causal approach,''
  Ph.D. thesis, Université Paris Diderot, 2013,
  arXiv:1312.5674 [math-ph].

\bibitem[Da15]{Dang}
  N.~V.~Dang,
  ``Complex powers of analytic functions and meromorphic renormalization in QFT'',
  arXiv:1503.00995 [math-ph].
  
\bibitem[De13]{Degner}
  A.~Degner,
  ``Properties of states of low energy on cosmological spacetimes'',
  DESY-THESIS-2013-002.
  Ph.D. thesis, University of Hamburg, 2013. 

\bibitem[DHP15]{DHP}
  N.~Drago, T.-P.~Hack and N.~Pinamonti, %
  ``The generalised principle of perturbative agreement and the thermal mass'', 
  arXiv:1502.02705 [math-ph].

\bibitem[Du96]{Duetsch:1996eh}
  M.~Duetsch,
  ``Slavnov-Taylor identities from the causal point of view'',
  Int.\ J.\ Mod.\ Phys.\ A {\bf 12} (1997) 3205
  [hep-th/9606105].
  
\bibitem[DF00]{Duetsch:2000nh}
  M.~Duetsch and K.~Fredenhagen,
  ``Algebraic quantum field theory, perturbation theory, and the loop expansion'',
  Commun.\ Math.\ Phys.\  {\bf 219} (2001) 5
  [hep-th/0001129].
  
\bibitem[DFKR14]{dfkr}
  M.~Duetsch, K.~Fredenhagen, K.~J.~Keller and K.~Rejzner,
  ``Dimensional Regularization in Position Space, and a Forest Formula for Epstein-Glaser Renormalization'',
  J.\ Math.\ Phys.\  {\bf 55} (2014) 122303
  [arXiv:1311.5424 [hep-th]].  

\bibitem[EG73]{EG}
  H.~Epstein and V.~Glaser, 
  ``The role of locality in perturbation theory'',
  Annales de l'institut Henri Poincaré (A) Physique théorique 19.3, 211--295 (1973).

\bibitem[FL13]{Fredenhagen:2013cna}
  K.~Fredenhagen and F.~Lindner,
  ``Construction of KMS States in Perturbative QFT and Renormalized Hamiltonian Dynamics'',
  Commun.\ Math.\ Phys.\  {\bf 332} (2014) 3,  895
  [arXiv:1306.6519 [math-ph]].
  
\bibitem[FR12]{FredenhagenRejzner}
  K.~Fredenhagen and K.~Rejzner,
  ``Perturbative algebraic quantum field theory'',
  arXiv:1208.1428 [math-ph].
  
\bibitem[FR13]{FredenhagenRejzner3}
  K.~Fredenhagen and K.~Rejzner,
  ``Batalin-Vilkovisky formalism in perturbative algebraic quantum field theory'',
  Commun.\ Math.\ Phys.\  {\bf 317} (2013) 697
  [arXiv:1110.5232 [math-ph]].
  
\bibitem[FR14]{FredenhagenRejzner2}
  K.~Fredenhagen and K.~Rejzner,
  ``QFT on curved spacetimes: axiomatic framework and examples'',
  arXiv:1412.5125 [math-ph].

\bibitem[Ha10]{Hack:2010iw}
  T.~P.~Hack,
  ``On the Backreaction of Scalar and Spinor Quantum Fields in Curved Spacetimes'',
  Ph.D. thesis, University of Hamburg, 2010.
  arXiv:1008.1776 [gr-qc].

\bibitem[Ho10]{Hollands:2010pr}
  S.~Hollands,
  ``Correlators, Feynman diagrams, and quantum no-hair in deSitter spacetime'',
  Commun.\ Math.\ Phys.\  {\bf 319} (2013) 1
  [arXiv:1010.5367 [gr-qc]].  

\bibitem[HW01]{Hollands:2001nf}
  S.~Hollands and R.~M.~Wald,
  ``Local Wick polynomials and time ordered products of quantum fields in curved space-time'',
  Commun.\ Math.\ Phys.\  {\bf 223} (2001) 289
  [gr-qc/0103074].
  
\bibitem[HW02]{Hollands:2001b}
  S.~Hollands and R.~M.~Wald,
  ``Existence of local covariant time ordered products of quantum fields in curved space-time'',
  Commun.\ Math.\ Phys.\  {\bf 231} (2002) 309
  [gr-qc/0111108].  
  
\bibitem[HW04]{Hollands:2004yh}
  S.~Hollands and R.~M.~Wald,
  ``Conservation of the stress tensor in interacting quantum field theory in curved spacetimes'',
  Rev.\ Math.\ Phys.\  {\bf 17} (2005) 227
  [gr-qc/0404074]. 

\bibitem[H\"o90]{Hormander}
  L. H\"ormander,
  ``The Analysis of Linear Partial Differential Operators I'',
  2nd edn. Springer, Berlin (1990).

\bibitem[CHL95]{Comellas:1994da}
  J.~Comellas, P.~E.~Haagensen and J.~I.~Latorre,
  ``A Generic renormalization method in curved spaces and at finite temperature'',
  Int.\ J.\ Mod.\ Phys.\ A {\bf 10} (1995) 2819
  [hep-th/9404080]. 
  
\bibitem[Ke10]{Keller}
  K.~J.~Keller,
  ``Dimensional Regularization in Position Space and a Forest Formula for Regularized Epstein-Glaser Renormalization'',
  Ph.D. thesis, University of Hamburg, 2010.
  arXiv:1006.2148 [math-ph].

\bibitem[Li13]{Lindner:2013ila}
  F.~Lindner,
  ``Perturbative Algebraic Quantum Field Theory at Finite Temperature'',
  Ph.D. thesis, University of Hamburg, 2013.
  DESY-THESIS-2013-029.

\bibitem[LR90]{Lueders:1990np}
  C.~Lueders and J.~E.~Roberts,
  ``Local quasiequivalence and adiabatic vacuum states'',
  Commun.\ Math.\ Phys.\  {\bf 134} (1990) 29.  

\bibitem[L\"u82]{Luscher:1982wf}
  M.~Luscher,
  ``Dimensional Regularization in the Presence of Large Background Fields'',
  Annals Phys.\  {\bf 142} (1982) 359. 

\bibitem[Mo99]{Moretti:1999fb}
  V.~Moretti,
  ``Proof of the symmetry of the off diagonal Hadamard / Seeley-deWitt's coefficients in C infinity Lorentzian manifolds by a 'local Wick rotation''',
  Commun.\ Math.\ Phys.\  {\bf 212} (2000) 165
  [gr-qc/9908068].  
  
\bibitem[Mo03]{Moretti}
  V.~Moretti,
  ``Comments on the stress energy tensor operator in curved space-time'',
  Commun.\ Math.\ Phys.\  {\bf 232} (2003) 189
  [gr-qc/0109048].

\bibitem[O'N83]{ONeill} B. O'Neill ``Semi-Riemannian Geometry''. Academic Press, San Diego, 1983

\bibitem[PT09]{Parker:2009uva}
  L.~E.~Parker and D.~Toms,
  ``Quantum Field Theory in Curved Spacetime : Quantized Field and Gravity,''
  Cambridge University Press, 2009.

\bibitem[Pi10]{Pinamonti:2010is}
  N.~Pinamonti,
  ``On the initial conditions and solutions of the semiclassical Einstein equations in a cosmological scenario'',
  Commun.\ Math.\ Phys.\  {\bf 305} (2011) 563
  [arXiv:1001.0864 [gr-qc]].
  
\bibitem[PPV11]{Poisson:2011nh}
  E.~Poisson, A.~Pound and I.~Vega,
  ``The Motion of point particles in curved spacetime'',
  Living Rev.\ Rel.\  {\bf 14} (2011) 7
  [arXiv:1102.0529 [gr-qc]].

\bibitem[Pr97]{Prange:1997iy}
  D.~Prange,
  ``Epstein-Glaser renormalization and differential renormalization'',
  J.\ Phys.\ A {\bf 32} (1999) 2225
  [hep-th/9710225]. 

\bibitem[Ra96]{Radzikowski}  
  M.~J.~Radzikowski,
  ``Micro-local approach to the Hadamard condition in quantum field theory on curved space-time'',
  Commun.\ Math.\ Phys.\  {\bf 179} (1996) 529.
  
\bibitem[Sc10]{Schlemmer}
  J.~Schlemmer, 
  ``Local Thermal Equilibrium on Cosmological Spacetimes'',
  Ph.D. thesis, University of Leipzig, 2010.

\bibitem[St71]{Steinmann}
  O.~Steinmann,
  Perturbation Expansions in Axiomatic Field Theory,
  Lect. Notes in Phys. \textbf{11}, Berlin: Springer-Verlag, 1971.

\bibitem[To82]{Toms:1982af}
  D.~J.~Toms,
  ``Renormalization of Interacting Scalar Field Theories in Curved Space-time'',
  Phys.\ Rev.\ D {\bf 26} (1982) 2713.
  
\bibitem[Wa84]{Wald:1984}
R.~M.~Wald,
``General Relativity,'' 
Chicago University Press (1984).

\bibitem[Zs13]{Zschoche:2013ola}
  J.~Zschoche,
  ``The Chaplygin Gas Equation of State for the Quantized Free Scalar Field on Cosmological Spacetimes'',
  Annales Henri Poincare {\bf 15} (2014) 1285
  [arXiv:1303.4992 [gr-qc]].

\end{thebibliography}
\end{document}